\newtheorem{theorem}{Theorem}
\newtheorem{corollary}{Corollary}
\newtheorem{assumption}{Assumption}
\newtheorem{proposition}{Proposition}
\newtheorem{lemma}{Lemma}
\newtheorem{definition}{Definition}
\newtheorem{remark}{Remark}
\begin{document}

\title{Structure-Aware Matrix Pencil Method}

\author{Yehonatan-Itay Segman, Alon Amar, and Ronen Talmon,~\IEEEmembership{Senior Member,~IEEE}
\thanks{The authors are with the Viterbi Faculty of Electrical and Computer Engineering, Technion -- Israel Institute of Technology, Haifa 32000, Israel. Corresponding author: Yehonatan-Itay Segman, email: yehonatans@campus.technion.ac.il.}
}
\markboth{Journal of \LaTeX\ Class Files,~Vol.~14, No.~8, October~2024}%
{Segman, Amar, and Talmon: Structure-Aware Matrix Pencil Method}


\maketitle

\begin{abstract}
We address the problem of detecting the number of complex exponentials and estimating their parameters from a noisy signal using the Matrix Pencil (MP) method. 
We introduce the MP \textit{modes} and present their informative spectral structure. We show theoretically that these modes can be divided into signal and noise modes, where the signal modes exhibit a perturbed Vandermonde structure.
Leveraging this structure, we propose a new MP algorithm, termed the SAMP algorithm, which has two novel components: (i) a robust, theoretically grounded model-order detection method, and (ii) an efficient estimation of the signal amplitudes.
We show empirically that SAMP significantly outperforms the standard MP method in challenging conditions, with closely-spaced frequencies and low Signal-to-Noise Ratio (SNR) values, approaching the Cramer-Rao lower bound (CRB) for a broad SNR range. 
Additionally, compared with prevalent information-based criteria, we show that SAMPachieves comparable or better results with lower computational time, and lower sensitivity to noise model mismatch.
\end{abstract}

\begin{IEEEkeywords}
Matrix Pencil Method, model order detection, Parameter estimation, closely-spaced frequencies, super-resolution.
\end{IEEEkeywords}
\begin{refsection}
\section{Introduction}

\IEEEPARstart{C}{onsider} the classical model of a discrete-time signal consisting of a sum of complex exponentials and additive noise:
\begin{equation}\label{eq: noisy discrete signal}
    y(n) = x(n) + w(n),
\end{equation}
where $0 \leq n \leq N-1$, $w(n)$ is the additive noise, and $x(n)$ is the signal term expressed as a finite sum of complex exponentials given by:
\begin{equation}\label{eq:noiseless discrete signal}
    x(n) = \sum_{i=1}^M b_i e^{(-\alpha_i+ j\theta_i)n},
\end{equation}
where $M$ is the number of complex exponentials (or model order), $b_i = |b_i|e^{j\phi_i}$ are the complex amplitudes, $\phi_i\in \mathbb{R}$ are the initial phases, $\alpha_i \in \mathbb{R}$ are the (non-negative) damping factors, $\theta_i \in \mathbb{R}$ are the normalized frequencies, and the lower case letter `j' is the complex number $j = \sqrt{-1}.$ 
The complex exponentials $e^{-\alpha_i+ j\theta_i}$ are conventionally represented as $z_i = e^{-\alpha_i+ j\theta_i}$, and referred to as the signal poles. The signal poles are assumed to be distinct. 
Given the finite interval of $N$ samples of $y(n)$, the goal is twofold: (i) detect the (discrete-valued) model order $M$ (a detection step), and (ii) determine the (continuous-valued) parameters of interest (an estimation step), i.e., $\{b_i\}_{i=1}^{M}$, $\{\alpha_i\}_{i=1}^{M}$, and $\{\theta_i\}_{i=1}^{M}$. 

This is a longstanding fundamental detection-estimation problem in signal processing, with a broad variety of applications in telecommunications, audio processing, radar systems, and biomedical signal analysis \cite{percival1993spectral, kay1981spectrum, stoica2005spectral}. Over the years, numerous methods have been proposed, including maximum likelihood-based techniques \cite{lang1980frequency, abatzoglou1985fast, bresler1986exact}, super-resolution pseudo-spectrum approaches such as Minimum Variance Distortionless Response (MVDR) (known also as Capon’s method) \cite{capon1969high}, MUltiple SIgnal Classification (MUSIC) \cite{schmidt1986multiple}, Estimation of Signal Parameters via Rotational Invariant Techniques (ESPRIT) \cite{roy1989esprit}, Method Of Direction Estimation (MODE) \cite{stoica1990maximum}, SParse Iterative Covariance-based Estimation (SPICE) \cite{stoica2010new}, sparse model-based methods \cite{hu2012compressed, duarte2013spectral, tang2013compressed, hu2013fast, yang2018sparse, chu2023new, wu2023spectral}, optimization-based techniques \cite{andersson2013frequency, hayes2023sinusoidal}, and more recently, neural network-based approaches \cite{verma2016frequency, izacard2019data, izacard2019learning, xie2021data, pan2021deep, pan2021complex, smith2024frequency}.

One common approach for such problems is based on information measures such as the Akaike Information Criterion (AIC), Bayesian Information Criterion (BIC), or Minimum Description Length (MDL) \cite{percival1993spectral, stoica2005spectral}. In this approach, the detection and estimation steps are \textit{coupled}; for each hypothesized model order, the most likely parameters are estimated. Despite its popularity and widespread use, this coupled approach has two main drawbacks. (i) It necessitates the computation of the likelihood function, which is usually unknown in practice. (ii) It is computationally demanding, as it involves repeated model parameter estimations for each hypothesized model order.

More computationally efficient is a \textit{decoupled detection and estimation} approach, where the model order is determined first, followed by parameter estimation. 
A common \textit{non-parametric} method involves computing a spectrum (or pseudo-spectrum) of the observations. For example, the model order can be determined by counting the number of significant peaks in the periodogram (or even Bartlett's or Welch's spectrum).
This efficiency comes at the expense of two prominent limitations: limited frequency resolution, which is proportional to the reciprocal of the data length, and tapering / windowing effects \cite{porat1996course}. 
These limitations pose significant concerns for short time series, and many alternatives have been proposed to mitigate them.

Herein, we consider the \textit{decoupled detection and estimation} approach and focus on the MP method, introduced by Hua and Sarkar in \cite{hua1990matrix}, as a super-resolution method. 
The MP method estimates the frequencies of superimposed complex exponentials by solving a generalized eigenvalue problem derived from the signal's Hankel matrix. Consequently, compared to the classical Fourier analysis, it is not restricted to a specific regular grid and does not have a separation limit due to tapering. The MP method has gained considerable attention for its superiority over other linear methods, such as Prony-based methods, demonstrating robustness to noise and computational efficiency \cite{hua1988matrix, sarkar1995using}. 

Several modifications to the MP algorithm that mitigate its sensitivity to additive white noise have been proposed in the literature including the Forward-Backward MP (FBMP) approach \cite{hua1988matrix, sarkar1995using} which is limited to undamped exponentials, the Band-Pass MP (BPMP), which is effective in the case of damped exponentials though its implementation requires prior knowledge of the signal and increased computation time \cite{hu1993utilization}, the Total-Least-Squares MP (TLS-MP) approach, which uses a pre-filtering step for noise mitigation \cite{sarkar1995using}, and the Total Forward-Backward MP (TFBMP), which surpasses the Fast Fourier Transform (FFT) in variance estimates beyond a specific SNR threshold, at the expense of a larger bias \cite{del1996comparison}. 

The above variations of the MP method first determine the model order of the problem through singular values truncation, then estimate the poles from the generalized eigenvalues, and finally estimate the amplitudes by solving a linear least square problem \cite{sarkar1995using}.  
Additionally, the main theoretical result, introduced in \cite[Theorem 2.1]{hua1990matrix}, only addresses noiseless scenarios, whereas noise is almost always present in real-world applications. The discrepancy between the ideal noiseless scenario in theory and the presence of noise in practice highlights the need to extend the MP theory to handle noisy signals. 

In this paper, we present a new MP method, termed the Structure-Aware Matrix Pencil (SAMP), which implements a new strategy of ``first estimate and then detect'', rather than the standard ``detect and then estimate'' approach. Our new method has two novel components: (i) a new approach for model order detection, and (ii) an efficient amplitude estimation. At the core of our method lie the MP \emph{modes}, a new notion that is borrowed from the realm of dynamic mode decomposition (DMD) \cite{schmid2010dynamic} and introduced in this context for the first time to the best of our knowledge. 
Concretely, the left and right MP \textit{modes} are defined as the columns and rows of the pseudo-inverse matrix of the generalized eigenvectors matrix obtained by the MP method, respectively. 

We show theoretically that these MP modes can be divided into signal modes and noise modes, where the signal modes exhibit a perturbed Vandermonde structure. We theoretically study this perturbation and explicitly present the dependency on the model parameters. We put special focus on the impact of noise and extend the classical MP theorem in \cite{hua1990matrix} to address noisy scenarios and demonstrate that all the signal components in \eqref{eq:noiseless discrete signal} can be detected through the signal modes.
Leveraging the structure of the signal modes, we depart from singular values thresholding heuristics. Instead, we propose a new model order detection with theoretical guarantees that depend on the SNR and the separation of the poles.
We show empirically that the new SAMP algorithm obtains superior performance compared with the standard MP method that is based on singular values thresholding for the model order detection, particularly in challenging conditions with closely-spaced frequencies and low SNR values in the threshold region, where model-order estimation becomes challenging. In addition, we show that compared with the prevalent information-based criteria, SAMP is less sensitive to noise model mismatch and is more computationally efficient.

The paper is organized as follows. In Sec. \ref{sec: MP method} we describe the existing MP method. In Sec. \ref{sec: the MP modes} we introduce the MP modes and analyze their structure. In Sec. \ref{sec: model order} we propose a novel order detection rule based on the MP modes structure. 
In Sec. \ref{sec: eff amp est} we proposed an efficient method for estimating the signal amplitudes using the MP modes. 
In Sec. \ref{sec: Proposed Algorithm} we present the SAMP algorithm for estimating the model parameters. In Sec. \ref{sec:Results} we present simulation results. Finally, in Sec. \ref{sec: Conclusion} we summarize this paper with conclusions and discussion regarding possible practical implications and future research.


\section{The Classical Matrix Pencil Method} 
\label{sec: MP method}

The MP method estimates signal parameters by solving a generalized eigenvalue problem based on the available data samples. It begins by constructing an $(N-L) \times (L+1)$ Hankel matrix $\mathbf{Y}$ from the noisy measurements $y(n)$:

\begin{equation}\label{eq:full noisy Hankel matrix}
\mathbf{Y} =
    \begin{bmatrix}
    y(0)   & y(1)       & \cdots & y(L)   \\
    y(1)   & y(2)       & \cdots & y(L+1) \\
    \vdots & \vdots  & \ddots & \vdots \\
    y(N-L-1) & y(N-L)   & \cdots & y(N-1)
    \end{bmatrix}.
\end{equation} 

The parameter $L$, termed the \emph{pencil parameter}, is chosen to satisfy $M \leq L \leq N-M $ (see sec. \ref{subsec: MP noiseless case} for more details). Using perturbation analysis, it was demonstrated that the optimal range for minimizing variance in the estimator is $\frac{N}{3} \leq L \leq \frac{N}{2}$ \cite{rao1985perturbation, hua1988perturbation, hua1988matrix, hua1990matrix, djermoune2009perturbation}. For the rest of this paper, we assume that $L$ lies within this optimal range.

Next, two sub-matrices \(\mathbf{Y}_0\) and \(\mathbf{Y}_1\) are formed from \(\mathbf{Y}\) by removing its last and first columns, respectively. A denoising step (see section \ref{subsec:MP method - model order detection}) is then performed using the singular value decomposition (SVD) of the Hankel matrix \(\mathbf{Y}_0\):
\begin{equation} \label{eq:SVD of Y_0}
    \mathbf{Y}_0 = \mathbf{U} \mathbf{\Sigma} \mathbf{V}^{\mathrm{H}},    
\end{equation}
where \(\mathbf{U} \in \mathbb{C}^{(N-L) \times L}\) and \(\mathbf{V} \in \mathbb{C}^{L \times L}\) are unitary matrices, \(\mathbf{\Sigma} = \text{diag}(\sigma_1, \ldots, \sigma_L)\) with $\sigma_1 \geq \sigma_2 \geq \ldots \geq \sigma_L$, and $(\cdot)^{\mathrm{H}}$ denotes the complex conjugate transpose.

Using \eqref{eq:SVD of Y_0} we can express the MP by:
\begin{equation} \label{eq:full MP decomposition}
    \mathbf{Y}_1 -\widetilde{\lambda} \mathbf{Y}_0 = \mathbf{U} \mathbf{\Sigma} ( \mathbf{A} -\widetilde{\lambda} \mathbf{I} ) \mathbf{V} ^{\mathrm{H}},
\end{equation}
where the square, non-symmetric, $L \times L$ matrix $\mathbf{A}$ is given by:
\begin{equation} \label{eq:rank reduced representation}
\mathbf{A} = \mathbf{\Sigma}^{-1} \mathbf{U}^{\mathrm{H}} \mathbf{Y}_1 \mathbf{V},
\end{equation}
and we use the notation of $\widetilde{\lambda}$ to denote a perturbation of $\lambda$, an eigenvalue of the noiseless MP (see Section \ref{subsec: MP noiseless case} for more details).
The MP method is based on finding the eigenvalues of the MP $\mathbf{Y}_1  - \widetilde{\lambda} \mathbf{Y}_0$ as an estimate for the signal poles $z_i$. Instead of directly computing the eigenvalues of $\mathbf{Y}_1 -\widetilde{\lambda}\mathbf{Y}_0$, the estimates of the signal poles are obtained by the eigenvalues of $\mathbf{A}$ \cite{hua1990matrix}. Accordingly, an eigenvalue decomposition (EVD) of the matrix $\mathbf{A}$ is performed:
\begin{equation}\label{eq:A decompos}
    \mathbf{A} = \mathbf{Q} \widetilde{\mathbf{\Lambda}} \mathbf{Q}^{-1},
\end{equation}
where $\widetilde{\mathbf{\Lambda}} = \text{diag}(\widetilde\lambda_1,\ldots,\widetilde\lambda_L)$ and $\mathbf{Q}\in\mathbb{C}^{L\times L}$ contains the eigenvectors as columns.

A critical problem arises as the matrices $\mathbf{Y}_0$ and $\mathbf{Y}_1$ have a full rank (which equals $L$) stemming from the additive noise contamination. Consequently, the matrix $\mathbf{A}$ has $L$ non-zero eigenvalues, consisting of $M$ signal-related eigenvalues, and $L-M$ extraneous eigenvalues, containing only noise-related information \cite{sarkar1995using}.
Therefore, the model order (the number of complex exponentials) needs to be determined in addition to, and usually before, the estimation of the parameters.

\subsection{Model order detection} 
\label{subsec:MP method - model order detection}

Determining the model order in the MP method involves truncating the singular values from the SVD of $\mathbf{Y}_0$ in \eqref{eq:SVD of Y_0}, and the standard approach for this step is the decoupled detection-estimation approach. A commonly used technique is the Significant Decimal Digit (SDD) \cite{hua1990matrix, hua1988matrix, sarkar1995using, laroche1993use, bhuiyan2012advantages}, where the model order is estimated by:

\begin{equation}
    \widehat{M}_{\text{SD}} = \underset{i=1,\ldots,L}{\text{arg}\min} \; \{\sigma_i : \sigma_i / \sigma_{\text{max}} \geq 10^{-p} \},
\end{equation} 
where $p$ is defined as the number of significant decimal digits in the data, $\sigma_{\text{max}}$ is the maximal singular value, and the singular values are assumed to be in descending order.
Another closely related technique is the ``gap'' technique \cite{del1996comparison, yin2011model}, where the model order is estimated by:
\begin{equation}
    \widehat{M}_{\text{GAP}}= \underset{i=1,\ldots,L-1}{\text{arg}\max} \; \left\{\sigma_i /\sigma_{i+1} \right\}.
\end{equation}

Although less mentioned in the MP literature, the information-theoretic based estimators (ITE) \cite{stoica2004model, stoica2005spectral, nadler2011model} can also be used:
\begin{equation}
    \widehat{M}_{\text{ITE}} =   \underset{k=1,\ldots,M_{\text{max}}}{\text{arg}\min} \; \{ -\text{log}\left( p_k (\mathbf{y}|\hat{\mathbf{v}}_k )\right) + k C(k,N)\},
\end{equation}
where $M_{\text{max}}$ is the maximum number of hypothesized exponentials (naturally, $M_{\text{max}}\leq L)$, $\mathbf{y} = \{y(n)\}_{n=0}^{N-1}$ is the vector of observations \eqref{eq: noisy discrete signal}, and $p_k(\mathbf{y}|\hat {\mathbf{v}}_k)$ is the probability density function (PDF) of $\mathbf{y}$ under the estimated parameter vector $\hat {\mathbf{v}}_k$, assuming the model order is $k$. Here, the estimated parameter vector is $\hat{\mathbf{v}}_k = [\{\hat b_i\}_{i=1}^{k}, \{\hat \alpha_i\}_{i=1}^{k}, \{\hat \theta_i\}_{i=1}^{k}]$, obtained by the MP method.
The term $C(k, N)$ is a model-complexity penalty term, used as a regularization for maximum-likelihood based methods.
\begin{remark}
This is a coupled detection-estimation approach, which requires prior knowledge of the likelihood function.
\end{remark}

\subsection{Parameter estimation}

Once we determine the model order $\widehat M$, we truncate the SVD of $\mathbf{Y}_0$ in \eqref{eq:SVD of Y_0} to rank $\widehat M$ and use the eigenvalues of the resulting $\widehat{M} \times \widehat{M}$ matrix $\mathbf{A}$ in \eqref{eq:A decompos} to estimate the signal poles. The estimated damping factor and normalized frequency are given by:
\begin{equation}\label{eq:params calc}
    \hat{\alpha}_i = \text{log}|{\widetilde\lambda}_i| \,\,, \,\, \hat{\theta}_i = \text{arg}(\widetilde {\lambda}_i)
    \,\,, i=1,2,\ldots,\widehat{M}
\end{equation}
Next, the amplitudes $\{b_i\}_{i=1}^{\widehat M}$ are estimated by solving the linear least squares problem \cite{sarkar1995using}:
\begin{equation}
    \hat{b}_i = \underset{b_i} {\text{arg}\min}\sum\limits_{n=0}^{N-1} |y(n) - \widetilde {\lambda}_i^n b_i|^2,
\end{equation} which has a closed-form solution of:
\begin{equation}\label{eq:MP coeff estimation}
\hat{\mathbf{b}} = 
   \begin{bmatrix}
    1 & 1 & \cdots & 1\\
    \widetilde\lambda_1 & \widetilde\lambda_2 & \cdots & \widetilde\lambda_{\widehat M}\\
    \widetilde\lambda_1^2 & \widetilde\lambda_2^2 & \cdots & \widetilde\lambda_{\widehat M}^2\\
    \vdots & \vdots & \ddots & \vdots\\
    \widetilde\lambda_1^{N-1} & \widetilde\lambda_2^{N-1} & \cdots & \widetilde\lambda_{\widehat M}^{N-1}
    \end{bmatrix}^{\dagger}
    \begin{bmatrix} 
     y(0)\\
     y(1)\\
     y(2)\\
     \vdots\\
     y(N-1)
     \end{bmatrix},
\end{equation} where $(\cdot)^\dagger$ denotes the pseudo-inverse and $\hat{\mathbf{b}} = [\hat{b}_1, \hat{b}_2, \ldots,  \hat{b}_{\widehat M} ]^{\mathrm{T}} \in \mathbb{C}^{\widehat{M}}$.

\subsection{The degenerate noiseless case} \label{subsec: MP noiseless case}

In the noiseless case, Hua et al. presented a complete theorem \cite[Theorem 2.1]{hua1990matrix}, showing that the signal can be fully characterized using the noiseless MP solutions. Specifically, if $M\leq L\leq N-M$, the solutions $(\lambda, \mathbf{q}, \mathbf{p}^{\mathrm{H}})$ to the generalized noiseless eigenvalue problem:

\begin{equation}
\label{eq:generalize eigenvalue problem}
( \mathbf{X}_1-\lambda \mathbf{X}_0 ) \mathbf{q} = 0, \;\;\; \mathbf{p} ^{\mathrm{H}} ( \mathbf{X}_1-\lambda\mathbf{X}_0 ) = 0,
 \end{equation}
such that $ \mathbf{q} \in{\text{Range}( \mathbf{X}_0 ^{\mathrm{H}})}$ and $ \mathbf{p} \in{\text{Range}( \mathbf{X}_0 )}$ are $(z_i, \mathbf{q}_i, \mathbf{p}^{\mathrm{H}}_i)$, where $ \mathbf{q}_i $ is the $i$-th column of 
\[
\mathbf{Z}_R ^{\dagger} = \mathbf{Z}_R ^{\mathrm{H}}( \mathbf{Z}_R \mathbf{Z}_R ^{\mathrm{H}})^{-1},
\]
$\mathbf{p}_i ^{\mathrm{H}}$ is the $i$-th row of 
\[
\mathbf{Z}_L ^{\dagger} = ( \mathbf{Z}_L ^{\mathrm{H}} \mathbf{Z}_L)^{-1} \mathbf{Z}_L ^{\mathrm{H}},
\]
and:
\begin{enumerate}
    \item{$ \mathbf{Z}_L\in \mathbb{C}^{(N-L)\times M} $ is a Vandermonde matrix given by:
    \begin{equation} \label{eq:Z_L}
        \begin{bmatrix}
            1             & 1               & \cdots  & 1\\
            z_1           & z_2             & \cdots  & z_M\\
            z_1^2         & z_2^2           & \cdots  & z_M^2\\
            \vdots        & \vdots          & \ddots  & \vdots\\
            z_1^{N-L-1}   &  z_2^{N-L-1}    & \cdots  &  z_M^{N-L-1}
        \end{bmatrix}.
    \end{equation}}
    \item{$\mathbf{Z}_R\in \mathbb{C}^{M\times L} $ is a Vandermonde matrix given by:
    \begin{equation}
        \begin{bmatrix}
            1       & z_1     & z_1^2  &\cdots   & z_1^{L-1}\\
            1       & z_2     & z_2^2  &\cdots   & z_2^{L-1}\\
            \vdots  & \vdots  & \vdots & \ddots  & \vdots   \\
            1       & z_M     & z_M^2  & \cdots  & z_M^{L-1}
        \end{bmatrix}.
    \end{equation}}        
\end{enumerate} 
The theorem is based on the following Vandermonde factorization of the noiseless Hankel matrices:
\begin{equation}\label{eq:clean Vandermonde factorization}
\begin{aligned} 
    \mathbf{X}_0 &= \mathbf{Z}_L \mathbf{B} \mathbf{Z}_R,\\
    \mathbf{X}_1 &= \mathbf{Z}_L \mathbf{B} \mathbf{Z} \mathbf{Z}_R,
\end{aligned}    
\end{equation}
where $\mathbf{B}$ and $\mathbf{Z}$ are $M\times M$ diagonal matrices comprising the signal's amplitudes and poles, respectively.
For the proof and more details, see \cite[Theorem 2.1]{hua1990matrix}.

As noted in \cite{hua1990matrix}, left-multiplying \eqref{eq:generalize eigenvalue problem} by $\mathbf{X}_0^{\dagger}$, implies that the non-zero eigenvalues of the matrix $\mathbf{X}_0^{\dagger}\mathbf{X}_1 \in\mathbb{C}^{L\times L}$ are precisely the $M$ signal poles, $\{z_i\}_{i=1}^M$. Furthermore, since the matrix $\mathbf{X}_0^{\dagger}\mathbf{X}_1 = \mathbf{Z}_R^{\dagger}\mathbf{Z}\mathbf{Z}_R$ has a rank of $M\leq L$, it also has $L-M$ zero eigenvalues with corresponding right eigenvectors in the null space of $\mathbf{Z}_R$. 

While the MP theorem only addresses noiseless scenarios, practical situations involve noisy measurements. The impact of noise is substantial and current literature is built upon thresholding techniques of singular values, as discussed in \ref{subsec:MP method - model order detection}. However, in cases of closely-spaced frequencies and low SNR values in the threshold region, these techniques tend to under-estimate or over-estimate the model order $M$ (See Section \ref{sec:Results}).
In Section \ref{sec: the MP modes}, we introduce the MP \textit{modes}, which are derived by accounting for noise effects, and reveal the spectral structure of the signal-related modes. This structure allows us to identify the signal-related modes and estimate $M$ more accurately (see Section \ref{sec: Proposed Algorithm}), eliminating the need for singular values thresholding employed in the decoupled approach.

\section{Introducing The MP Modes} \label{sec: the MP modes}
The traditional model order detection in the MP method relies on the SVD of $\mathbf{Y}_0 =  \mathbf{U} \mathbf{\Sigma} \mathbf{V}^{\mathrm{H}}$, given in \eqref{eq:SVD of Y_0}. However, we now demonstrate that $\mathbf{Y}_0$ can be further decomposed into a product of matrices, and the structure of these matrices can be leveraged to determine the model order. 
Recall that, in the noiseless case, the factorization in \eqref{eq:clean Vandermonde factorization} consists of the matrices $\mathbf{Z}_R$ and $\mathbf{Z}_L$, which are (i) Vandermonde matrices, and (ii) the pseudo-inverses of the right and left eigenvector matrices of the noiseless MP $
\mathbf{X}_1-\lambda \mathbf{X}_0$. To introduce a similar factorization in the noisy case, we first define the noisy counterparts of $\mathbf{Z}_R$ and $\mathbf{Z}_L$ using the SVD of $\mathbf{Y_0}$ from \eqref{eq:SVD of Y_0}, as the MP \emph{modes}:
\begin{definition}[MP modes]
Define the \textit{right MP modes} as the rows of:
\begin{equation} 
    \widetilde{\mathbf{Z}}_{R,un}  := \mathbf{Q}^{-1}\mathbf{V}^{\mathrm{H}} \in \mathbb{C}^{L \times L}. \label{def:right mp mode}
\end{equation}
Similarly, define the \textit{left MP modes} as the columns of:
\begin{equation}
     \widetilde{\mathbf{Z}}_{L,un}  := \mathbf{U \Sigma Q} \in \mathbb{C}^{(N-L) \times L}, \label{def:left mp mode} 
\end{equation}
where 'un' stands for 'un-normalized'.
\end{definition}
 
We note that the notion of ``modes'' is new in the context of the MP method, but is borrowed from the related DMD approach \cite[Chap. 1]{kutz2016dynamic}.
In the sequel, we demonstrate that these modes contain spectral information that can be utilized in the detection step.
These matrices, are the pseudo-inverses of the right and left eigenvector matrices of the noisy MP $\mathbf{Y}_1 -\lambda\mathbf{Y}_0$, given according to \eqref{eq:full MP decomposition} and \eqref{eq:A decompos} by:
\begin{align}
    \widetilde{\mathbf{Z}}_{R,un}^{\dagger} &= \mathbf{V} \mathbf{Q}, \label{eq:right noisy eigenvectors}\\
    \widetilde{\mathbf{Z}}_{L,un}^{\dagger} &= \mathbf{Q}^{-1}\mathbf{\Sigma}^{-1}\mathbf{U}^{\mathrm{H}}, \label{eq:left noisy eigenvectors}
\end{align}
respectively. By normalizing $\widetilde{\mathbf{Z}}_{R,un}$ and $\widetilde{\mathbf{Z}}_{L,un}$ (See Appendix \ref{app:Extending the MP Theorem: Accounting for Noise} in the Supplementary Materials (SM)) and combining them with the SVD of $\mathbf{Y}_0$, we can express the data matrices $\mathbf{Y}_0$ and $\mathbf{Y}_1$ as follows:
\begin{equation}\label{eq:noisy Vandermonde factorization}
    \begin{aligned}
         \mathbf{Y}_0 &= \widetilde{\mathbf{Z}}_L \widetilde{ \mathbf{B} } \widetilde{\mathbf{Z}}_R,\\ 
         \mathbf{Y}_1 &= \widetilde{\mathbf{Z}}_L\widetilde{ \mathbf{B} } \widetilde{\mathbf{\Lambda}} \widetilde{\mathbf{Z}}_R,
    \end{aligned}
\end{equation}
where $\widetilde{\mathbf{Z}}_R$ and $\widetilde{\mathbf{Z}}_L$ are the normalized modes, $\widetilde{ \mathbf{B} }, \widetilde{\mathbf{\Lambda}} \in \mathbb{C}^{L \times L}$ are diagonal matrices containing the noisy amplitudes and poles, respectively. This decomposition forms a noisy analogous to the decomposition of $\mathbf{X}_0$ and $\mathbf{X}_1$ in \eqref{eq:clean Vandermonde factorization}. For more details see Appendix \ref{app:Extending the MP Theorem: Accounting for Noise} in the SM, where we provide a comprehensive theorem that establishes the extension of the classical MP theorem to accommodate signals contaminated by noise. 
The next result shows that $\widetilde{\mathbf{Z}}_L$ has a valuable structure (a similar argument applies to $\widetilde{\mathbf{Z}}_R$, but is omitted for brevity).

\begin{proposition}\label{prop:Z_L recast}
    The matrix $\widetilde{\mathbf{Z}}_L$, whose columns are the left MP modes, can be recast as 
    \begin{equation} 
    \label{eq:left MP mode decomposition}   
    \widetilde{\mathbf{Z}}_L = 
    \begin{bmatrix} 
      \mathbf{Z}_L + \mathbf{E}_L, \; \mathbf{C}_L \\ 
    \end{bmatrix}_{(N-L) \times L},
    \end{equation}
    where the $M$ leftmost columns of $\widetilde{\mathbf{Z}}_L$ are the sum of the Vandermonde matrix $\mathbf{Z}_L\in \mathbb{C}^{(N-L)\times M}$ and a noise-related component, denoted by $\mathbf{E}_L \in \mathbb{C}^{(N-L)\times M}$. The remaining $L-M$ columns are noise-related spurious columns, denoted by $\mathbf{C}_L\in \mathbb{C}^{(N-L)\times (L-M)}$.
\end{proposition}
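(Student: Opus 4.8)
The plan is to exploit the exact Hankel shift structure of the data together with the noisy factorization in \eqref{eq:noisy Vandermonde factorization} to derive a \emph{perturbed} shift-invariance relation for the columns of $\widetilde{\mathbf{Z}}_L$, and then to read off the claimed structure by solving the resulting scalar recurrence column by column. The key observation is that, irrespective of noise, $\mathbf{Y}$ is Hankel, so its sub-matrices obey an exact shift identity: letting $\mathbf{J}_1$ and $\mathbf{J}_2$ denote the $(N-L-1)\times(N-L)$ matrices that delete the first and the last row, respectively, one has $\mathbf{J}_1\mathbf{Y}_0 = \mathbf{J}_2\mathbf{Y}_1$, since both sides equal the block with entries $y(i+j+1)$.

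First I would substitute the factorizations of $\mathbf{Y}_0$ and $\mathbf{Y}_1$ into this identity. Because $\widetilde{\mathbf{Z}}_{L,un}\widetilde{\mathbf{Z}}_{R,un} = \mathbf{U}\mathbf{\Sigma}\mathbf{V}^{\mathrm{H}} = \mathbf{Y}_0$ exactly, the factorization $\mathbf{Y}_0 = \widetilde{\mathbf{Z}}_L\widetilde{\mathbf{B}}\widetilde{\mathbf{Z}}_R$ is exact after the diagonal normalization. By contrast, using \eqref{eq:rank reduced representation} and \eqref{eq:A decompos}, $\widetilde{\mathbf{Z}}_{L,un}\widetilde{\mathbf{\Lambda}}\widetilde{\mathbf{Z}}_{R,un} = \mathbf{U}\mathbf{\Sigma}\mathbf{A}\mathbf{V}^{\mathrm{H}} = \mathbf{U}\mathbf{U}^{\mathrm{H}}\mathbf{Y}_1$, so $\mathbf{Y}_1 = \widetilde{\mathbf{Z}}_L\widetilde{\mathbf{B}}\widetilde{\mathbf{\Lambda}}\widetilde{\mathbf{Z}}_R + \mathbf{R}$ with $\mathbf{R} = (\mathbf{I}-\mathbf{U}\mathbf{U}^{\mathrm{H}})\mathbf{Y}_1$ the part of $\mathbf{Y}_1$ orthogonal to the column space of $\mathbf{Y}_0$. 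Plugging these into $\mathbf{J}_1\mathbf{Y}_0 = \mathbf{J}_2\mathbf{Y}_1$, cancelling the common invertible factor $\widetilde{\mathbf{B}}\widetilde{\mathbf{Z}}_R$ on the right, and using that $\widetilde{\mathbf{B}}$ and $\widetilde{\mathbf{\Lambda}}$ are diagonal (hence commute), I obtain the perturbed shift relation
\begin{equation}\label{eq:perturbed_shift}
\mathbf{J}_1\widetilde{\mathbf{Z}}_L = \mathbf{J}_2\widetilde{\mathbf{Z}}_L\widetilde{\mathbf{\Lambda}} + \mathbf{G}, \qquad \mathbf{G} := \mathbf{J}_2\mathbf{R}\,\widetilde{\mathbf{Z}}_R^{-1}\widetilde{\mathbf{B}}^{-1}.
\end{equation}

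Reading \eqref{eq:perturbed_shift} column by column, the $k$-th column $\mathbf{v}_k$ of $\widetilde{\mathbf{Z}}_L$ satisfies the scalar recurrence $v^{(k)}_{m+1} = \widetilde{\lambda}_k\, v^{(k)}_m + g^{(k)}_m$ for $m=0,\ldots,N-L-2$, where $g^{(k)}_m$ is the corresponding entry of $\mathbf{G}$. Unrolling it gives $v^{(k)}_m = \widetilde{\lambda}_k^{\,m} v^{(k)}_0 + \sum_{l=0}^{m-1}\widetilde{\lambda}_k^{\,m-1-l} g^{(k)}_l$; with the normalization that fixes the top entry of every column to one (so $v^{(k)}_0 = 1$, matching the first row of $\mathbf{Z}_L$), each column is a Vandermonde vector in its own eigenvalue $\widetilde{\lambda}_k$ plus an accumulated noise term. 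I would then invoke the eigenvalue splitting recalled after \eqref{eq:generalize eigenvalue problem}: ordering the eigenvalues so the $M$ signal-related ones come first, writing $\widetilde{\lambda}_k = z_k + \delta_k$ and expanding $\widetilde{\lambda}_k^{\,m} = z_k^{\,m} + (\text{terms in }\delta_k)$, the leading $M$ columns become $\mathbf{Z}_L$ plus a perturbation $\mathbf{E}_L$ that collects both the pole mismatch $\delta_k$ and the accumulated $g^{(k)}_l$, while the remaining $L-M$ columns, whose $\widetilde{\lambda}_k$ do not track any true pole, form the spurious block $\mathbf{C}_L$. This is exactly \eqref{eq:left MP mode decomposition}.

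The main obstacle, and the step that needs genuine care rather than routine algebra, is controlling the residual $\mathbf{R}$: I must show it is supported only where the projection $\mathbf{I}-\mathbf{U}\mathbf{U}^{\mathrm{H}}$ acts nontrivially and bound its propagation through $\mathbf{G}$, using that all but one column of $\mathbf{Y}_1$ already lie in the column space of $\mathbf{Y}_0$ (they are shared columns of the Hankel matrix $\mathbf{Y}$). Tracking $\mathbf{G}$, and thereby exhibiting the explicit dependence of $\mathbf{E}_L$ on the damping factors, frequencies, and amplitudes (through $\widetilde{\mathbf{B}}$ and $\widetilde{\mathbf{Z}}_R$) as well as on the pole perturbations $\delta_k$, is where the detailed perturbation bookkeeping lives; I would defer the quantitative estimates to the supplementary appendix that extends the MP theorem to the noisy case.
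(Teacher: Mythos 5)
Your algebra up to the perturbed shift relation is sound, and two of your observations are genuinely sharper than the paper's own bookkeeping: the exact Hankel identity $\mathbf{J}_1\mathbf{Y}_0=\mathbf{J}_2\mathbf{Y}_1$, and the fact that the factorization of $\mathbf{Y}_1$ in \eqref{eq:noisy Vandermonde factorization} holds only up to the residual $\mathbf{R}=(\mathbf{I}-\mathbf{U}\mathbf{U}^{\mathrm{H}})\mathbf{Y}_1$, which is supported on a single column. However, the step you treat as routine --- ``invoke the eigenvalue splitting \ldots{} writing $\widetilde{\lambda}_k=z_k+\delta_k$'' --- is precisely the content of the proposition, and your proposal never establishes it. The splitting recalled after \eqref{eq:generalize eigenvalue problem} concerns the \emph{noiseless} matrix $\mathbf{X}_0^{\dagger}\mathbf{X}_1$; it says nothing about whether $M$ of the eigenvalues of the \emph{noisy} matrix $\mathbf{Y}_0^{\dagger}\mathbf{Y}_1$ track the true poles, nor in what sense. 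Without that matching, your recurrence only shows that each column of $\widetilde{\mathbf{Z}}_L$ is Vandermonde-like in its \emph{own} eigenvalue $\widetilde{\lambda}_k$; rewriting the first $M$ columns as $\mathbf{Z}_L+\mathbf{E}_L$ is then vacuous, since any vector can be split as $\mathbf{a}(z_i)$ plus a remainder --- the claim has content only if $\mathbf{E}_L$ is genuinely noise-induced, i.e., vanishes with the noise. Establishing the matching (each simple eigenpair of $\mathbf{X}_0^{\dagger}\mathbf{X}_1$, whose right eigenvector is a column of $\mathbf{Z}_R^{\dagger}$, has a unique perturbed counterpart among the eigenpairs of $\mathbf{Y}_0^{\dagger}\mathbf{Y}_1$) is exactly what the paper's proof of Theorem \ref{theo:MSMP} does, via Stewart's perturbation theorem for simple eigenvalues; it then decomposes $\widetilde{\mathbf{Z}}_R^{\dagger}=\bigl[\mathbf{Z}_R^{\dagger}\mathbf{B}^{-\frac{1}{2}}+\delta\mathbf{Z}_R^{\dagger},\ \mathbf{C}_1\bigr]$, substitutes into $\widetilde{\mathbf{Z}}_L=\mathbf{Y}_0\widetilde{\mathbf{Z}}_R^{\dagger}$, and reads off $\mathbf{E}_L$ and $\mathbf{C}_L$ in closed form. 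Your route would still need this lemma, so it does not circumvent it; it only relocates where it must be invoked.

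A second, related gap is your anchoring of the recurrence at $v_0^{(k)}=1$. Under the paper's normalization ($\widetilde{\mathbf{Z}}_L=\widetilde{\mathbf{Z}}_{L,un}\widetilde{\mathbf{B}}^{-\frac{1}{2}}$, with $\widetilde{\mathbf{B}}$ built from the \emph{true} amplitudes), the top entries of the signal columns are $1+\mathbf{E}_L(1,i)$, not $1$ --- this is exactly what makes the amplitude estimator \eqref{eq:new_amps} work. If you instead renormalize each column by its own top entry, you change the middle factor in \eqref{eq:noisy Vandermonde factorization}, and you still owe a proof that this top entry converges to the correct value as the noise vanishes, which is again eigenvector-perturbation continuity --- the same missing machinery. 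Meanwhile, the obstacle you single out as the hard one, controlling $\mathbf{R}$, is in fact the easy part, for the reason you yourself note: all but the last column of $\mathbf{Y}_1$ lie in the column space of $\mathbf{Y}_0$, so $\mathbf{R}$ has at most one nonzero column.
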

\begin{proof}
    See Appendix \ref{app:Extending the MP Theorem: Accounting for Noise} in the SM.
\end{proof}

\textit{Signal and noise modes}: in the noiseless case, the matrix $\mathbf{Z}_L$ assumes a Vandermonde structure \eqref{eq:Z_L}.
In the noisy case, by \eqref{eq:left MP mode decomposition}, the matrix $\widetilde{\mathbf{Z}}_L$ can be divided into two sub-matrices: the signal-related sub-matrix comprising the \textit{signal modes}, represented by the $M$ leftmost columns $\mathbf{Z}_L+\mathbf{E}_L$, and the noise-related sub-matrix comprising the \textit{noise modes}, represented by the $L-M$ rightmost columns $\mathbf{C}_L$.
In addition, each signal mode is a column of a Vandermonde matrix associated with a signal pole, obscured by the additive noise term $\mathbf{E}_L$. See Fig. \ref{fig:noisy left mode structure} for illustration. 

Critically, Proposition \ref{prop:Z_L recast} is meaningful and has a practical value only when the noise term $\mathbf{E}_L$ has a structure that allows accurate extraction of the Vandermonde structure. Therefore, in the sequel, we analyze the noise matrix $\mathbf{E}_L$. The analysis of $\mathbf{E}_R$ follows similarly and is therefore omitted.

\begin{figure}
    \centering
    \includegraphics[width=1\linewidth]{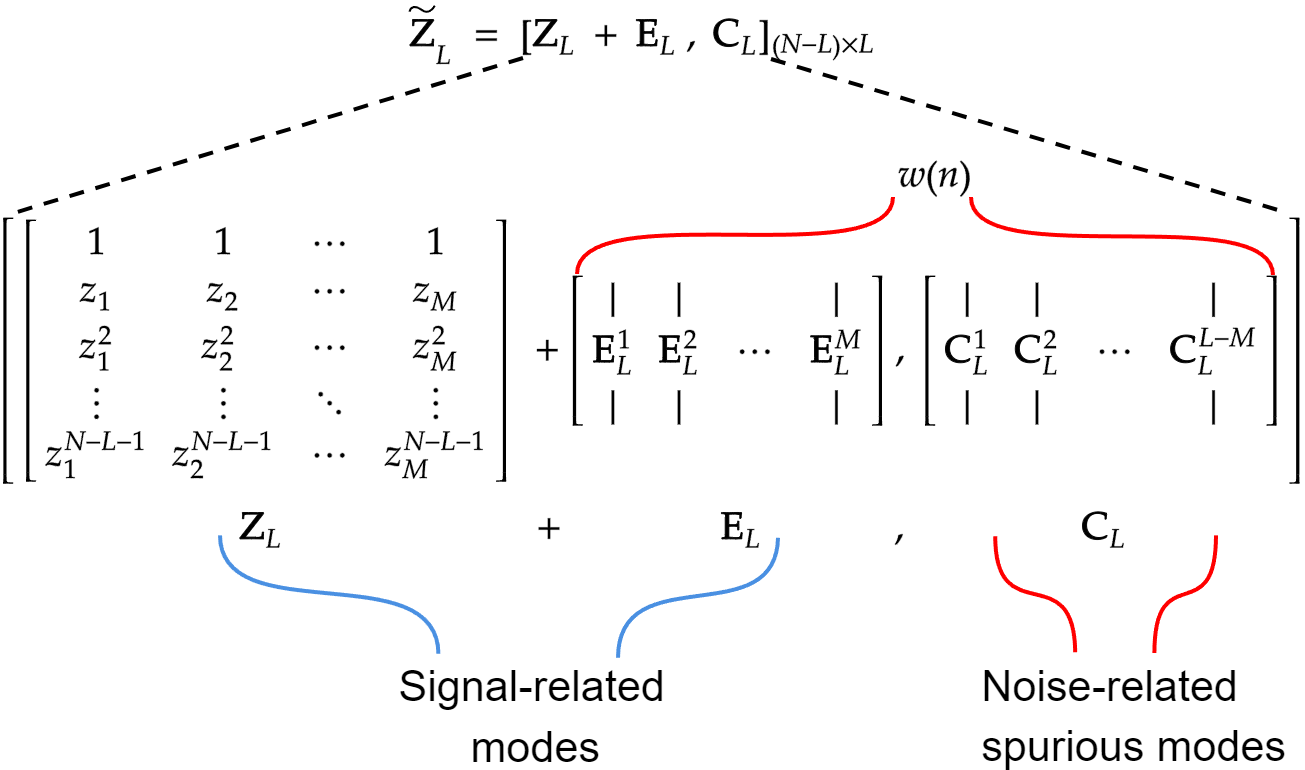}
    \caption{The structure of $\widetilde{\mathbf{Z}}_L$ by Proposition \ref{prop:Z_L recast}. The matrices $\mathbf{E}_L$ and $\mathbf{C}_L$ are a direct result of the additive noise $w(n)$.}
    \label{fig:noisy left mode structure}
\end{figure}

\subsection{Analysis of the noise induced matrix $\mathbf{E}_L$}

The analysis of $\mathbf{E}_L$ is based on the analysis of the matrix $\mathbf{Y}_0^{\dagger}\mathbf{Y}_1$ as a perturbation of the matrix $\mathbf{X}_0^{\dagger}\mathbf{X}_1$, denoted by:
\begin{equation}\label{eq: perturbation of X0X1}
    \mathbf{Y}_0^{\dagger}\mathbf{Y}_1 = \mathbf{X}_0^{\dagger}\mathbf{X}_1 + \delta (\mathbf{X}_0^{\dagger}\mathbf{X}_1).
\end{equation}
Since the matrix $\mathbf{X}_0^{\dagger}\mathbf{X}_1$ has exactly $M$ distinct non-zero eigenvalues and another $L-M$ zero eigenvalues, the Rank-Nullity Theorem guarantees that its eigenvectors form a basis. 
Let $ \mathbf{T}_L^{\mathrm{H}} (\mathbf{X}_0^{\dagger}\mathbf{X}_1) \mathbf{T}_R = \mathbf{\Lambda}$ be the eigenvalue decomposition of $\mathbf{X}_0^{\dagger}\mathbf{X}_1$, where $\mathbf{\Lambda}$ is a diagonal matrix consisting of the eigenvalues $\{\lambda_k\}_{k=1}^L$, and $\mathbf{T}_L$ and $\mathbf{T}_R$ consist of the complete sets of left and right eigenvectors, $\{\mathbf{p}'_k\}_{k=1}^L$ and $\{\mathbf{q}'_k\}_{k=1}^L$, respectively, with $\mathbf{p}'^{\mathrm{H}}_k \mathbf{q}'_i = 0$ for $k \neq i$.
%
%
Without loss of generality, suppose the eigenvalues $\{\lambda_i\}_{i=1}^M$ are the non-zero, simple, and equal the signal poles $\{z_i\}_{i=1}^M$, as guaranteed by the MP theorem for the noiseless case (Section \ref{subsec: MP noiseless case}).

For a simple eigenvector $\mathbf{q}'_i$, let $\widetilde{\mathbf{q}}'_i$ be its analogous perturbed eigenvector, which is an eigenvector of $\mathbf{Y}_0^{\dagger}\mathbf{Y}_1$ (see proof of Lemma \ref{lemma:first order approximation of the vector of coefficients} in Appendix \ref{app:simple eigenvector pert}, for more details).
Appropriate normalization of these eigenvectors is essential for the subsequent theoretical results, which is formulated in the following assumption.
\begin{assumption}\label{assum:eigenvectors normalization}
For $1\le i \le M$, we assume that $\mathbf{p}'^{\mathrm{H}}_i\widetilde{\mathbf{q}}'_i = 1$.
\end{assumption}
We remark that other alternative normalizations of $\widetilde{\mathbf{q}}'_i$, such as $ \widetilde{\mathbf{q}}'^{{\mathrm{H}}}_i \widetilde{\mathbf{q}}'_i = 1$ or $\mathbf{
q}'^{{\mathrm{H}}}_i\widetilde{\mathbf{q}}'_i = 1$, exists in the literature \cite{magnus1985differentiating}. However, using such normalization will lead, in the sequel, to significant numerical errors due to the inversion of an ill-conditioned matrix.
Next, we present a first-order approximation of the noise term $\mathbf{E}_L$ with respect to the additive perturbation in \eqref{eq: perturbation of X0X1}, using $\mathbf{W}_0$ and $\mathbf{W}_1$ to represent the noise-related Hankel matrices formed by the additive noise $w(n)$.

\begin{proposition}\label{prop:FOA of E_L}
    Under Assumption \ref{assum:eigenvectors normalization}, the $i$-th column of $\mathbf{E}_L \in \mathbb{C}^{(N-L)\times M}$ defined in Proposition \ref{prop:Z_L recast} is approximated by:
    \begin{equation} \label{eq:first order approx}
        \mathbf{E}_L^i \cong \sum\limits_{\substack{m=1\\m \neq i}}^M \frac{\mathbf{p}_m^{{\mathrm{H}}} (\mathbf{W}_1-z_i\mathbf{W}_0) \mathbf{q}_i}{b_i(\Tilde{z}_i-z_m)} \begin{bmatrix}
           1 \\
           z_m \\
           \vdots \\
           z_m^{N-L-1}
         \end{bmatrix} + \frac{\mathbf{W}_0 \mathbf{q}_i}{b_i},
    \end{equation} 
    provided that 
    \begin{equation}\label{eq:spectral radius}\rho(\mathbf{D}_i\mathbf{T}_L^{{\mathrm{H}}}\delta(\mathbf{X}_0^{\dagger}\mathbf{X}_1)\mathbf{T}_R) <1,    
    \end{equation}
    where $\rho(\mathbf{A})$ is the spectral radius of the matrix $\mathbf{A}$, and the matrix $\mathbf{D}_i$ is an $L\times L$ diagonal matrix defined by:
    \begin{equation}\label{eq:D_i} 
    \mathbf{D}_i = \text{diag}\Bigg(\frac{1}{\mathbf{p}'^{{\mathrm{H}}}_1\mathbf{q}'_1(\Tilde{z}_i - \lambda_1)}, \ldots,0_i ,\ldots,\frac{1}{\mathbf{p}'^{{\mathrm{H}}}_L\mathbf{q}'_L(\Tilde{z}_i - \lambda_L)}\Bigg).
    \end{equation}
\end{proposition}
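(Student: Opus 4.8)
The plan is to trace the $i$-th signal mode back to a perturbed eigenvector of $\mathbf{Y}_0^{\dagger}\mathbf{Y}_1$ and then expand its dependence on the noise Hankel matrices to first order. First I would record the identity $\mathbf{Y}_0^{\dagger}\mathbf{Y}_1 = \mathbf{V}\mathbf{A}\mathbf{V}^{\mathrm{H}}$, which follows from $\mathbf{Y}_0 = \mathbf{U}\mathbf{\Sigma}\mathbf{V}^{\mathrm{H}}$ and \eqref{eq:rank reduced representation}; combined with \eqref{eq:A decompos} it shows that the right eigenvectors of $\mathbf{Y}_0^{\dagger}\mathbf{Y}_1$ are the columns of $\mathbf{V}\mathbf{Q}$, i.e. the $\widetilde{\mathbf{q}}'_i$, and hence that the $i$-th un-normalized left mode is $\widetilde{\mathbf{Z}}_{L,un}^{i} = \mathbf{U}\mathbf{\Sigma}(\mathbf{V}^{\mathrm{H}}\widetilde{\mathbf{q}}'_i) = \mathbf{Y}_0\widetilde{\mathbf{q}}'_i$. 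After matching the leading Vandermonde entry (the mode normalization), and using the noiseless identity $\mathbf{X}_0\mathbf{q}'_i \propto \mathbf{Z}_L^i$, the residual reduces to $\mathbf{E}_L^i \cong \frac{1}{b_i}\big(\mathbf{X}_0\,\delta\mathbf{q}'_i + \mathbf{W}_0\mathbf{q}'_i\big)$, obtained by writing $\mathbf{Y}_0 = \mathbf{X}_0 + \mathbf{W}_0$ and $\widetilde{\mathbf{q}}'_i = \mathbf{q}'_i + \delta\mathbf{q}'_i$ and discarding the second-order term $\mathbf{W}_0\,\delta\mathbf{q}'_i$. Once $\mathbf{q}'_i$ is identified with the MP-theorem eigenvector $\mathbf{q}_i$, the piece $\mathbf{W}_0\mathbf{q}'_i/b_i$ already produces the last summand $\mathbf{W}_0\mathbf{q}_i/b_i$ in \eqref{eq:first order approx}.

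Second, I would compute $\delta\mathbf{q}'_i$ by expanding $\widetilde{\mathbf{q}}'_i = \sum_k c_k\mathbf{q}'_k$ in the biorthogonal eigenbasis of $\mathbf{X}_0^{\dagger}\mathbf{X}_1$. Substituting into $(\mathbf{X}_0^{\dagger}\mathbf{X}_1 + \delta(\mathbf{X}_0^{\dagger}\mathbf{X}_1))\widetilde{\mathbf{q}}'_i = \tilde z_i\widetilde{\mathbf{q}}'_i$ and projecting onto $\mathbf{p}'^{\mathrm{H}}_l$ gives, for $l\neq i$, the relation $c_l(\tilde z_i-\lambda_l)(\mathbf{p}'^{\mathrm{H}}_l\mathbf{q}'_l) = [\mathbf{T}_L^{\mathrm{H}}\delta(\mathbf{X}_0^{\dagger}\mathbf{X}_1)\mathbf{T}_R\mathbf{c}]_l$, while Assumption \ref{assum:eigenvectors normalization} fixes $c_i$. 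In matrix form this is $\mathbf{c} = c_i(\mathbf{I} - \mathbf{D}_i\mathbf{T}_L^{\mathrm{H}}\delta(\mathbf{X}_0^{\dagger}\mathbf{X}_1)\mathbf{T}_R)^{-1}\mathbf{e}_i$, where $\mathbf{D}_i$ is exactly \eqref{eq:D_i} and the zero in its $i$-th slot makes the $i$-th equation consistent. Inverting via the Neumann series is precisely what requires the spectral-radius hypothesis \eqref{eq:spectral radius}; truncating at first order yields $\delta\mathbf{q}'_i \cong \sum_{l\neq i}\frac{\mathbf{p}'^{\mathrm{H}}_l\delta(\mathbf{X}_0^{\dagger}\mathbf{X}_1)\mathbf{q}'_i}{(\mathbf{p}'^{\mathrm{H}}_l\mathbf{q}'_l)(\tilde z_i-\lambda_l)}\mathbf{q}'_l$, which is where the denominators $\tilde z_i-\lambda_l$ of \eqref{eq:D_i} originate.

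Third, I would push this through $\mathbf{X}_0$. The crucial simplification is that the $L-M$ eigenvectors $\mathbf{q}'_l$ with $\lambda_l=0$ lie in $\mathrm{null}(\mathbf{Z}_R)$, so $\mathbf{X}_0\mathbf{q}'_l = \mathbf{Z}_L\mathbf{B}\mathbf{Z}_R\mathbf{q}'_l = 0$ and only the $m=1,\ldots,M$, $m\neq i$ terms survive, each contributing a Vandermonde column $\mathbf{X}_0\mathbf{q}'_m \propto \mathbf{Z}_L^m = [1,z_m,\ldots,z_m^{N-L-1}]^{\mathrm{T}}$. It then remains to turn $\mathbf{p}'^{\mathrm{H}}_m\delta(\mathbf{X}_0^{\dagger}\mathbf{X}_1)\mathbf{q}'_i$ into $\mathbf{p}_m^{\mathrm{H}}(\mathbf{W}_1-z_i\mathbf{W}_0)\mathbf{q}_i$. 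For this I would use the first-order pseudo-inverse expansion $\delta(\mathbf{X}_0^{\dagger}\mathbf{X}_1) \cong \delta(\mathbf{X}_0^{\dagger})\mathbf{X}_1 + \mathbf{X}_0^{\dagger}\mathbf{W}_1$, the generalized-eigenvector relation $\mathbf{X}_1\mathbf{q}'_i = z_i\mathbf{X}_0\mathbf{q}'_i$, and the fact that the off-range correction terms of $\delta(\mathbf{X}_0^{\dagger})$ are annihilated under these projections, since $(\mathbf{I}-\mathbf{X}_0\mathbf{X}_0^{\dagger})\mathbf{X}_0=0$ and both $\mathbf{p}'_m$ and $\mathbf{q}'_i$ lie in $\mathrm{Range}(\mathbf{X}_0^{\mathrm{H}})$. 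This collapses the projection to $\mathbf{p}'^{\mathrm{H}}_m\mathbf{X}_0^{\dagger}(\mathbf{W}_1-z_i\mathbf{W}_0)\mathbf{q}'_i$. Finally, the Vandermonde factorization $\mathbf{X}_0^{\dagger}=\mathbf{Z}_R^{\dagger}\mathbf{B}^{-1}\mathbf{Z}_L^{\dagger}$ gives $\mathbf{p}'^{\mathrm{H}}_m\mathbf{X}_0^{\dagger}\propto\mathbf{p}_m^{\mathrm{H}}$ (the $m$-th row of $\mathbf{Z}_L^{\dagger}$) and $\mathbf{q}'_i\propto\mathbf{q}_i$ (the $i$-th column of $\mathbf{Z}_R^{\dagger}$), and the proportionality constants, together with the biorthogonality value $\mathbf{p}'^{\mathrm{H}}_m\mathbf{q}'_m$ and the $b_m$ from $\mathbf{X}_0\mathbf{q}'_m$, cancel exactly, leaving the stated coefficient $\mathbf{p}_m^{\mathrm{H}}(\mathbf{W}_1-z_i\mathbf{W}_0)\mathbf{q}_i/[b_i(\tilde z_i-z_m)]$.

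The hard part, I expect, is controlling the two nested layers of perturbation consistently to first order: $\delta\mathbf{q}'_i$ already depends on $\delta(\mathbf{X}_0^{\dagger}\mathbf{X}_1)$, which itself must be expanded to first order in $\mathbf{W}_0,\mathbf{W}_1$, so I must verify that the cross terms are genuinely second order and may be dropped. The second delicate point is the bookkeeping of the normalization constants — Assumption \ref{assum:eigenvectors normalization} fixing $c_i$, the leading-entry normalization of the modes, and the scalars relating $\mathbf{q}'_i,\mathbf{p}'_m$ to $\mathbf{q}_i,\mathbf{p}_m$ — which must be shown to cancel so that only the clean $1/b_i$ factors of \eqref{eq:first order approx} remain; establishing that the $\delta(\mathbf{X}_0^{\dagger})$ correction terms vanish under the range/projection conditions is the key lemma that makes this cancellation go through.
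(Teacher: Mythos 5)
Your proposal is correct and follows essentially the same route as the paper's proof: reduce $\mathbf{E}_L^i$ to the perturbation $\delta\mathbf{q}'_i$ of a simple eigenvector of $\mathbf{X}_0^{\dagger}\mathbf{X}_1$, expand it in the biorthogonal eigenbasis via the Neumann-series argument under the spectral-radius condition (the paper's Lemma \ref{lemma:first order approximation of the vector of coefficients}), annihilate the null-space directions through $\mathbf{X}_0$ to leave the Vandermonde columns, and convert $\mathbf{p}'^{\mathrm{H}}_m\delta(\mathbf{X}_0^{\dagger}\mathbf{X}_1)\mathbf{q}_i$ into $\mathbf{p}_m^{\mathrm{H}}(\mathbf{W}_1-z_i\mathbf{W}_0)\mathbf{q}_i/b_m$ via the first-order pseudo-inverse expansion. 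The only difference is presentational: you derive inline the two steps the paper outsources to citations (Hua's Section IV approximation and Wedin's pseudo-inverse decomposition), and your range/projection argument for why the off-range terms of $\delta(\mathbf{X}_0^{\dagger})$ vanish is exactly the content behind those citations.
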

\begin{proof}
See Appendix \ref{appen:first order approx of E_L} in the SM.
\end{proof}

It is noteworthy that the condition in \eqref{eq:spectral radius} and \eqref{eq:D_i} implies a delicate balance between the $i$-th pole separation and perturbation (for details, see Remark \ref{remark: sep_pert balance} in Appendix \ref{appen:first order approx of E_L}).
In addition, note that the first term in the right hand side (r.h.s.) of \eqref{eq:first order approx} vanishes if $\mathbf{W}_1 = z_i \mathbf{W}_0$, i.e., when the noise signal $w(n)$ consists only of the signal pole $z_i$. In such a degenerate case, the noise coincides with one of the signal's poles.
Finally, we remark that Proposition \ref{prop:FOA of E_L} does not imply that the noise term $\mathbf{E}_L$ is small. However, in the sequel, we show that the structure of $\mathbf{E}_L$ in \eqref{eq:first order approx} allows us to extract and exploit the obscured Vandermonde structure of the signal modes in $\widetilde{\mathbf{Z}}_L$. 

Using Propositions \ref{prop:Z_L recast} and \ref{prop:FOA of E_L}, any signal mode $\widetilde{\mathbf{Z}}_L^i$, can be expressed as: 
\begin{equation}\label{eq:signal_mode_compact}
    \widetilde{\mathbf{Z}}_L^i = \mathbf{Z}_L^i + \mathbf{E}_L^i,\,\, 1 \leq i \leq M,
\end{equation}
where $\widetilde{\mathbf{Z}}_L^{i}$, $\mathbf{Z}_L^{i}$ and $\mathbf{E}_L^{i}$ are the $i$-th columns of $\widetilde{\mathbf{Z}}_L$, $\mathbf{Z}_L$ and $\mathbf{E}_L$, respectively. 

Let $\mathbf{Q}_i \in\mathbb{C}^{(N-L+1)\times N}$ be a convolution matrix, constructed from the vector $\mathbf{q}_i \in \mathbb{C}^{L\times 1}$, such that $\mathbf{Q}_i \mathbf{v} = \mathbf{q}_i * \mathbf{v}$ and $\mathbf{u}^{{\mathrm{T}}} \mathbf{Q}_i  = \mathbf{u}^{{\mathrm{T}}} * \mathbf{q}_i $, for any two vectors $\mathbf{v}\in\mathbb{C}^{N\times 1}$ and $\mathbf{u}\in\mathbb{C}^{(N-L+1)\times 1}$. Additionally, denote by $\mathbf{a}(z)$ the Vandermonde column, of length $N-L$, build from $z$:
\begin{equation}\label{eq: test_vector}
    \mathbf{a}(z) = [1,z,\dots,z^{N-L-1}]^{\mathrm{T}}.
\end{equation}
Combining these notation with \eqref{eq:signal_mode_compact} and Proposition \ref{prop:FOA of E_L}, we have the following result.
\begin{corollary}\label{cor:signal_mode}
Any signal mode $\widetilde{\mathbf{Z}}^i_L$ can be recast as:
\begin{equation}\label{eq:signal_mode}
    \widetilde{\mathbf{Z}}_L^i \cong \mathbf{a}(z_i) + 
         \sum\limits_{\substack{m=1\\m \neq i}}^M \gamma_{i,m}\mathbf{a}(z_m) + \boldsymbol\xi_i,
\end{equation} 
where 
\begin{equation}\label{eq:noise-terms defs}
\gamma_{i,m} = \frac{\mathbf{u}_m^{{\mathrm{T}}} \mathbf{Q}_i \mathbf{w}}{(\Tilde{z}_i-z_m)b_i},\;\; \boldsymbol\xi_i = \frac{\mathbf{I}_0\mathbf{Q}_i \mathbf{w}}{b_i},
\end{equation}    
the matrix $\mathbf{I}_0$ is given by:
\[
\mathbf{I}_0=
\begin{bmatrix}
\mathbf{I}_{N-L} , \;  \mathbf{0}\\
\end{bmatrix}_{(N-L)\times(N-L+1)},
\] 
where $\mathbf{I}_{N-L}$ is the identity matrix of size $N-L$ and $\mathbf{0}\in\mathbb{C}^{(N-L) \times 1}$ is the zero vector. The vector $\mathbf{u}_m \in \mathbb{C}^{ (N-L+1)}$ is defined by:
\begin{equation}\label{eq:u_m def}
    \mathbf{u}_m^{{\mathrm{T}}} = \begin{bmatrix} 0,\; \mathbf{p}_m^{\mathrm{H}} \end{bmatrix} -z_i \begin{bmatrix} \mathbf{p}_m^{\mathrm{H}}, \;0 \end{bmatrix} ,
\end{equation} 
and $\mathbf{w} = [w(0),\ldots, w(N-1)]^{{\mathrm{T}}}$ is the discrete noise vector introduced in \eqref{eq: noisy discrete signal} with $\mathbb{E}[\mathbf{w}\mathbf{w}^{\mathrm{H}}] = \sigma_w^2\mathbf{I}_N$.
\end{corollary}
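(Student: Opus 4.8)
The plan is to treat the statement as a change of viewpoint rather than a new estimate: Proposition~\ref{prop:FOA of E_L} already gives $\mathbf{E}_L^i$ as a linear functional of the noise \emph{embedded inside} the Hankel matrices $\mathbf{W}_0,\mathbf{W}_1$, and the corollary merely re-expresses the same quantity as an explicit linear map applied to the raw noise vector $\mathbf{w}$. I would therefore start from \eqref{eq:signal_mode_compact}, substitute the first-order expression \eqref{eq:first order approx} for $\mathbf{E}_L^i$, and immediately identify the two Vandermonde columns: the column $\mathbf{Z}_L^i$ of \eqref{eq:Z_L} is exactly $\mathbf{a}(z_i)$ from \eqref{eq: test_vector}, while the column $[1,z_m,\dots,z_m^{N-L-1}]^{\mathrm{T}}$ inside the sum is $\mathbf{a}(z_m)$. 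This already matches the skeleton of \eqref{eq:signal_mode}, leaving only the coefficients $\mathbf{p}_m^{\mathrm{H}}(\mathbf{W}_1-z_i\mathbf{W}_0)\mathbf{q}_i$ and the vector $\mathbf{W}_0\mathbf{q}_i$ to be rewritten in terms of $\mathbf{w}$.

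The crux is a single Hankel-to-convolution identity. Since $\mathbf{W}_0$ and $\mathbf{W}_1$ are Hankel matrices built from the same sequence $\mathbf{w}$, the products $\mathbf{W}_0\mathbf{q}_i$ and $\mathbf{W}_1\mathbf{q}_i$ are windowed convolutions of $\mathbf{q}_i$ with $\mathbf{w}$. Concretely, I would establish
\[
\mathbf{W}_0\mathbf{q}_i=\mathbf{I}_0\mathbf{Q}_i\mathbf{w},\qquad \mathbf{W}_1\mathbf{q}_i=\mathbf{I}_1\mathbf{Q}_i\mathbf{w},
\]
where $\mathbf{I}_0=[\mathbf{I}_{N-L},\,\mathbf{0}]$ selects the first $N-L$ entries of the length-$(N-L+1)$ convolution $\mathbf{Q}_i\mathbf{w}$ and $\mathbf{I}_1=[\mathbf{0},\,\mathbf{I}_{N-L}]$ selects the last $N-L$; these encode, respectively, the outputs built from $w(0),\dots,w(N-2)$ and from $w(1),\dots,w(N-1)$, which is precisely the one-sample shift distinguishing $\mathbf{W}_1$ from $\mathbf{W}_0$. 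The first identity, divided by $b_i$, is exactly the $\boldsymbol\xi_i$ term of \eqref{eq:noise-terms defs}.

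For the coefficient $\gamma_{i,m}$ I would left-multiply both identities by $\mathbf{p}_m^{\mathrm{H}}$ and subtract. Using $\mathbf{p}_m^{\mathrm{H}}\mathbf{I}_1=[0,\,\mathbf{p}_m^{\mathrm{H}}]$ and $\mathbf{p}_m^{\mathrm{H}}\mathbf{I}_0=[\mathbf{p}_m^{\mathrm{H}},\,0]$, this yields
\[
\mathbf{p}_m^{\mathrm{H}}(\mathbf{W}_1-z_i\mathbf{W}_0)\mathbf{q}_i=\bigl([0,\,\mathbf{p}_m^{\mathrm{H}}]-z_i[\mathbf{p}_m^{\mathrm{H}},\,0]\bigr)\mathbf{Q}_i\mathbf{w}=\mathbf{u}_m^{\mathrm{T}}\mathbf{Q}_i\mathbf{w},
\]
with $\mathbf{u}_m$ exactly as defined in \eqref{eq:u_m def}; dividing by $b_i(\tilde{z}_i-z_m)$ gives $\gamma_{i,m}$. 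Collecting the two Vandermonde terms with the two noise coefficients then reproduces \eqref{eq:signal_mode}.

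I expect the only genuinely delicate step to be the convolution identity itself — specifically, pinning down the index convention (correlation versus convolution, and whether $\mathbf{q}_i$ enters reversed in $\mathbf{Q}_i$) so that the truncations $\mathbf{I}_0,\mathbf{I}_1$ land on the correct windows and the boundary samples $w(0)$ and $w(N-1)$ are accounted for. Everything after that is substitution and relabeling; no analytic bound is needed, since the corollary is an exact algebraic rewriting of the approximation already supplied by Proposition~\ref{prop:FOA of E_L}.
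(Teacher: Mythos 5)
Your proposal is correct and follows essentially the same route as the paper, which states the corollary as an immediate consequence of \eqref{eq:signal_mode_compact} and Proposition~\ref{prop:FOA of E_L} once the convolution-matrix notation is in place: identifying $\mathbf{Z}_L^i$ and the columns $[1,z_m,\dots,z_m^{N-L-1}]^{\mathrm{T}}$ with $\mathbf{a}(z_i)$ and $\mathbf{a}(z_m)$, and rewriting $\mathbf{W}_0\mathbf{q}_i=\mathbf{I}_0\mathbf{Q}_i\mathbf{w}$ and $\mathbf{p}_m^{\mathrm{H}}(\mathbf{W}_1-z_i\mathbf{W}_0)\mathbf{q}_i=\mathbf{u}_m^{\mathrm{T}}\mathbf{Q}_i\mathbf{w}$ via the shift structure of the noise Hankel matrices. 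Your explicit Hankel-to-convolution identities with the selection matrices $\mathbf{I}_0$ and $\mathbf{I}_1=[\mathbf{0},\,\mathbf{I}_{N-L}]$ are exactly the implicit step the paper's notation is built to deliver, and your flagged index-convention caveat is the only genuinely delicate point.
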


Equation \eqref{eq:signal_mode} shows that the signal modes consist of the true mode $\mathbf{Z}_L^i$, corrupted by two factors: (i) superimposed $M-1$ modes $\mathbf{Z}_L^m$, $m \neq i$ weighted by the coefficients $\gamma_{i,m}$, and (ii) a noise term $\boldsymbol\xi_i$. 
In the next result, we show that these two factors are controlled by the separation of the signal poles and the SNR at the $i$-th component, defined by:
\[
\text{SNR}_i = \frac{|b_i|^2}{\sigma_w^2},
\]
and in dB units as $\text{SNR}_{i,\text{dB}} = 10\text{log}_{10} (\text{SNR}_i)$.

\begin{proposition}\label{prop:bounds of noise-related terms}
    Under the conditions of Proposition \ref{prop:FOA of E_L}, for any $\varepsilon \in(0,1)$, the noise-related terms in \eqref{eq:signal_mode} are bounded by:
    \begin{align}
        &\left|\gamma_{i,m} \right| \;\leq  2\sqrt{\frac{2\text{log}(\frac{1}{\varepsilon})}{\text{SNR}_i}}
        \frac{\left\Vert \mathbf{u}_m *\mathbf{q}_i\right\Vert_2}{|z_i-z_m|}, \label{eq:gamma_ik upper bound}\\
        &\left\Vert \boldsymbol\xi_i \right\Vert_{\infty} \leq \sqrt{\frac{2\text{log}(\frac{1}{\varepsilon})}{\text{SNR}_i}},\label{eq:xi_i upper bound}
    \end{align} with a probability of at least $1- \varepsilon$, provided that:
    \begin{equation}\label{eq:perturb-separation assumption of prop 2}
        \left|z_i-z_m\right| \geq 2| \delta z_i |, \; 1 \leq m \neq i \leq M.
    \end{equation}
\end{proposition}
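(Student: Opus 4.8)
The plan is to recognize that, by Corollary~\ref{cor:signal_mode} and the definitions in \eqref{eq:noise-terms defs}, both $\gamma_{i,m}$ and every entry of $\boldsymbol\xi_i$ are \emph{linear functionals} of the noise vector $\mathbf{w}$ (which, for a probabilistic bound of this kind, is taken to be circularly-symmetric complex Gaussian with $\mathbb{E}[\mathbf{w}\mathbf{w}^{\mathrm{H}}]=\sigma_w^2\mathbf{I}_N$), and hence scalar zero-mean Gaussian random variables with explicitly computable variances. The argument then proceeds in three moves: (i) compute the variances using the convolution structure of $\mathbf{Q}_i$; (ii) apply a Gaussian tail bound to obtain high-probability control; and (iii) use the separation hypothesis \eqref{eq:perturb-separation assumption of prop 2} to replace the perturbed denominator $\tilde z_i-z_m$ by $z_i-z_m$.

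First I would treat $\gamma_{i,m}$. Using the convolution identity $\mathbf{u}_m^{\mathrm{T}}\mathbf{Q}_i=\mathbf{u}_m^{\mathrm{T}}*\mathbf{q}_i$, the numerator equals $(\mathbf{u}_m*\mathbf{q}_i)^{\mathrm{T}}\mathbf{w}$, so $\gamma_{i,m}$ is a zero-mean complex Gaussian with scale $\sigma_w\|\mathbf{u}_m*\mathbf{q}_i\|_2/(|\tilde z_i-z_m|\,|b_i|)$, since for such $\mathbf{w}$ one has $\mathrm{Var}(\mathbf{g}^{\mathrm{T}}\mathbf{w})=\mathbf{g}^{\mathrm{T}}\mathbb{E}[\mathbf{w}\mathbf{w}^{\mathrm{H}}]\overline{\mathbf{g}}=\sigma_w^2\|\mathbf{g}\|_2^2$. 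A Gaussian concentration inequality of the form $\mathbb{P}\!\left(|Z|\ge s\sqrt{2\log(1/\varepsilon)}\right)\le\varepsilon$ for a zero-mean Gaussian $Z$ of scale $s$ then bounds $|\gamma_{i,m}|$ with probability at least $1-\varepsilon$. Recognizing $\sigma_w/|b_i|=1/\sqrt{\mathrm{SNR}_i}$ yields the stated form up to the factor $1/|\tilde z_i-z_m|$.

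To turn that factor into the claimed $2/|z_i-z_m|$, I would invoke the separation assumption \eqref{eq:perturb-separation assumption of prop 2}: writing $\tilde z_i=z_i+\delta z_i$ and applying the reverse triangle inequality gives $|\tilde z_i-z_m|\ge|z_i-z_m|-|\delta z_i|\ge\tfrac12|z_i-z_m|$, whence $1/|\tilde z_i-z_m|\le 2/|z_i-z_m|$; this is precisely where the prefactor $2$ in \eqref{eq:gamma_ik upper bound} originates. For $\boldsymbol\xi_i$, each entry is $1/b_i$ times a row of $\mathbf{I}_0\mathbf{Q}_i$ applied to $\mathbf{w}$; since every such row is a shift of $\mathbf{q}_i$ and the modes are normalized so that $\|\mathbf{q}_i\|_2=1$, each entry is a zero-mean complex Gaussian of scale $\sigma_w/|b_i|=1/\sqrt{\mathrm{SNR}_i}$, and the same tail bound delivers \eqref{eq:xi_i upper bound} coordinatewise.

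The main obstacle is the passage from this entrywise control to the $\ell_\infty$ bound in \eqref{eq:xi_i upper bound}: bounding $\max_k|\xi_{i,k}|$ over the $N-L$ coordinates is a maximum of (dependent) Gaussians and, done rigorously, requires a union bound that inflates $\log(1/\varepsilon)$ to $\log\!\big((N-L)/\varepsilon\big)$. Reconciling this with the clean stated constant means either reading the bound per coordinate or absorbing the dimension factor into the confidence level $\varepsilon$; the identical bookkeeping arises when one asserts the $\gamma_{i,m}$ bound simultaneously over all $m\ne i$. A secondary point of care is the exact constant in the tail bound, which depends on the adopted convention for the complex Gaussian; the value $\sqrt{2\log(1/\varepsilon)}$ is consistent with a one-sided real-Gaussian-type tail. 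Beyond this bookkeeping, the variance computations from the convolution structure are routine.
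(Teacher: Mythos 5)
Your proposal follows essentially the same route as the paper's proof: both exploit the convolution structure of $\mathbf{Q}_i$ to compute the Gaussian variances (giving $\left\Vert \mathbf{u}_m *\mathbf{q}_i\right\Vert_2^2/\text{SNR}_i$ for the $\gamma$ term and, via Young's inequality in the paper versus your shift-of-$\mathbf{q}_i$ observation, the $1/\text{SNR}_i$ bound for the $\boldsymbol\xi_i$ entries), both apply the same Rayleigh/Gaussian tail bound with threshold $\sqrt{2\log(1/\varepsilon)}$ times the scale, and both obtain the prefactor $2$ from the reverse triangle inequality applied to \eqref{eq:perturb-separation assumption of prop 2} in exactly the way you describe. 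The union-bound caveat you flag for the $\ell_\infty$ claim in \eqref{eq:xi_i upper bound} is a fair observation, but it applies equally to the paper's own proof, which likewise establishes only the per-coordinate bound and asserts the norm bound without the $\log(N-L)$ inflation.
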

\begin{proof}
    See Appendix \ref{app:bounds of noise-related terms} in the SM.
\end{proof}
Note that by \eqref{eq:u_m def}, $\mathbf{u}_m$ consists of the vector $\mathbf{p}_m$. Hence, the convolution $\left\Vert \mathbf{u}_m *\mathbf{q}_i\right\Vert_2$ in \eqref{eq:gamma_ik upper bound} implicitly depends on the poles and the distances between the poles, as the vectors $\mathbf{p}_m^{\mathrm{H}}$ and $\mathbf{q}_i$ are given by the $m$-th row of $\mathbf{Z}_L^{\dagger}$ and $i$-th column of  $\mathbf{Z}_R^{\dagger}$, respectively. Additionally, we remark that the condition in \eqref{eq:perturb-separation assumption of prop 2} is common in perturbation analysis. Informally, this condition guarantees that the signal poles $z_i$ are sufficiently distinct relative to their perturbations. 

Proposition \ref{prop:bounds of noise-related terms} implies that the coefficients of the corrupting factors in \eqref{eq:signal_mode} can be bounded with high probability by a function of the $\text{SNR}_i$ and the $i$-th pole separation, provided that the separation is sufficiently large.
This allows us to present the following detection method, which is based on the structure of the signal modes in \eqref{eq:signal_mode}, and on $\Vert \mathbf{E}_L\Vert $.

\section{Structure-Aware Model Order Detection} \label{sec: model order}

In this section, we propose a detection method by identifying the signal modes. By relying on the modes and the estimated poles for the detection, this approach fundamentally differs from existing approaches, where only the singular values are considered for the model order detection.

Based on Corollary \ref{cor:signal_mode}, identifying the signal modes could be viewed as a relaxation of the original problem presented in \eqref{eq: noisy discrete signal} and \eqref{eq:noiseless discrete signal}; we are only interested in a \emph{binary} classification, i.e., to determine whether the $i$-th mode is a signal mode or not, without parameter estimation. 
In turn, once the signal modes are identified, their corresponding eigenvalues enable us to estimate the signal parameters.

Specifically, we propose to detect the signal modes through a model-matching approach, by measuring the similarity of the modes to a pre-defined test vector. 
\subsection{Similarity Measure: Definition and Properties}
According to \eqref{eq:signal_mode} the $i$‑th signal mode is modeled as a superposition of a Vandermonde vector built from the signal pole $z_i$, weighted Vandermonde vectors built from the other signal poles, and an additive noise terms.
To exploit this model, we define a normalized similarity measure:
\begin{equation} \label{eq:Pi_def}
    P_i(z)\;:=\;\frac{\bigl|\mathbf a^{\mathrm{H}}(z)\,\widetilde{\mathbf{Z}}_L^i\bigr|^{2}}{\lVert\mathbf a(z)\rVert^{2}\lVert\widetilde{\mathbf{Z}}_L^i\rVert^{2}}, \quad 1\leq i\leq L,
\end{equation}
where $\mathbf{a}(z)$ is defined in \eqref{eq: test_vector}, 
and analyze $P_i(z)$ for $1 \leq i \leq M$, where $\widetilde{\mathbf{Z}}_L^i$ is a signal mode.

In the absence of noise, $\widetilde{\mathbf{Z}}_L^i$ reduces to the Vandermonde vector $\mathbf{Z}_L^i$, and $P_i(z)$ attains a unique global maximum at the true signal pole, i.e., $P_i(z_i)=1$. 
%
%
In the presence of noise, applying triangle and Cauchy–Schwarz inequalities to \eqref{eq:Pi_def}, we obtain the following bound on $P_i(z_i)$:
\begin{equation}\label{eq: P_i(z_i) lower bound}
        0 \leq \left(\frac{1-\text{ISR}_i}{1+\text{ISR}_i}\right)^2 \leq  P_i(z_i) \leq 1,
\end{equation}
where $\text{ISR}_i$ is the Interference-to-Signal Ratio at the $i$-th mode:
\begin{equation}\label{eq: delta_i}
    \text{ISR}_i = \frac{\Vert \mathbf{E}_L^i\Vert}{\Vert \mathbf a(z_i)\Vert}.
\end{equation}
%
The lower bound in \eqref{eq: P_i(z_i) lower bound} implies that, in the presence of noise, the similarity measure at the true signal pole, $P_i(z_i)$, attains high values when the $\text{ISR}_i$ is small. Specifically, when $\text{ISR}_i \rightarrow 0$, $P_i(z_i) \rightarrow 1$.

By Prop. \ref{prop:bounds of noise-related terms}, the contributions of the other poles and the noise term in \eqref{eq:signal_mode} are controlled by the problem parameters, making $\Vert \mathbf{E}_L^i\Vert$, and consequently, $\text{ISR}_i$, small under some conditions. 
Specifically, for any $\varepsilon \in (0,1)$:
\begin{equation}\label{eq: delta_i upper bound}
    \text{ISR}_i \leq  \sqrt{\frac{2\log(\frac{2}{\varepsilon})}{\text{SNR}_i}}(2\Delta_i+1) \frac{\sqrt{N-L}}{\Vert \mathbf{a}(z_i) \Vert}, \quad \text{w.p. } \geq 1-\varepsilon,
\end{equation}
where $\Delta_i = \sum_{m\neq i} \frac{\left\Vert \mathbf{u}_m *\mathbf{q}_i\right\Vert_2}{|z_i-z_m|}$. 
This inequality presents an explicit condition on the problem parameters for which $\text{ISR}_i$ remains small, ensuring high values of $P_i(z_i)$. 

\begin{remark}\label{remark: connection to DFT}
If $z\in \mathbb{C}$ lies on the unit circle at the $k$-th Discrete Fourier Transform (DFT) frequency, $z = e^{\frac{j2\pi k}{N-L}}$, the test vector \eqref{eq: test_vector} becomes:
    \[
        \mathbf{a}(z) = [1, e^{\frac{j2\pi k}{N-L}},\dots,e^{\frac{j2\pi k(N-L-1)}{N-L}}]^\mathrm{T},
    \]
    which coincides with the $k$-th column of a DFT matrix of length $N-L$. Therefore, in this case, the numerator of the similarity measure $P_i(z)$ in \eqref{eq:Pi_def} is given by:
    \[
    \bigl|\mathbf a^{\mathrm{H}}(z)\,\widetilde{\mathbf{Z}}_L^i\bigr|^{2} = \left|\sum\limits_{kn=0}^{N-L-1} \widetilde{\mathbf{Z}}_L^i[n]e^{\frac{-j2\pi kn}{N-L}}\right|^2 = \bigl|\mathcal{F}(\widetilde{\mathbf{Z}}_L^i)[k]\bigr|^2,
    \]
    where $\mathcal{F}(\widetilde{\mathbf{Z}}_L^i)[k]$ is the DFT of $\widetilde{\mathbf{Z}}_L^i$ at the $k$-th frequency bin.
    This means that evaluating the numerator in \eqref{eq:Pi_def} on the DFT grid is equivalent to analyzing the power spectrum of the mode, up to normalization.
\end{remark}

\subsection{The Detection Method}\label{sebsec: detection method}

We propose to use the similarity measure $P_i(z)$ in \eqref{eq:Pi_def} for identifying the signal modes.
We showed above that the similarity measure at the true pole, $P_i(z_i)$, is high (under some conditions on the problem parameters), if the $i$-th mode $\widetilde{\mathbf{Z}}_L^i$ is a signal mode. Since the true pole $z_i$ is unknown, we compute the maximal value of $P_i(z)$:
\begin{equation}\label{eq: epsilon measure}
    \epsilon_i := P_i(z_i^*),
\end{equation}
for $1\leq i \leq L$, where
\begin{equation*}
    z_i^* = \underset{z\in \mathbb{C}\setminus\{0\}}{\text{argmax}}\;P_i(z).
\end{equation*}
Intuitively, $\epsilon_i$ measures the maximal alignment between the $i$-th mode and the test vector $\mathbf{a}(z)$ defined in \eqref{eq: test_vector}. 
Large values of $\epsilon_i$ indicate that $\widetilde{\mathbf{Z}}_L^i$ is likely a structured signal mode, while small values suggest a spurious noise mode.
More precisely, by \eqref{eq: delta_i upper bound}, since $P_i(z_i^*) \ge P_i(z_i)$, signal modes yield large $\epsilon_i$ values, when a certain balance between the $\text{SNR}_i$, the $i$-th pole separation, the modes dimension $N-L$, and the $i$-th damping factor is maintained.
This is illustrated in Fig.~\ref{fig: P_i_z illustration}.

\begin{remark}
    We emphasize that $P_i(z)$ is not used for estimating the poles, i.e., $z_i^*$ is not considered an estimate of the pole, due to potential bias introduced by the noise terms in $\widetilde{\mathbf{Z}}_L^i$. The poles are extracted from the eigenvalues associated with the identified signal modes.
\end{remark}


The features $\{\epsilon_i\}_{i=1}^L$, computed in \eqref{eq: epsilon measure}, are used to determine the order of the model by dividing them into two distinct subsets: signal-related and noise-related. The number of the signal-related modes then determines the model order. 
Since $\epsilon_i \ge P_i(z_i)$, the theoretical lower bound of $P_i(z_i)$ in \eqref{eq: P_i(z_i) lower bound} naturally serves as a mode-dependent threshold: the $i$-th mode is classified as signal mode if: 
\begin{equation}\label{eq: theoretic feature threshold}
    \left( \frac{1-\text{ISR}_i}{1+\text{ISR}_i} \right)^2 \leq \epsilon_i.
\end{equation}

In summary, large values of $\epsilon_i$, occur when $\text{ISR}_i$ is small, indicate that $\widetilde{\mathbf{Z}}_L^i$ is likely a structured signal mode. The lower bound in \eqref{eq: P_i(z_i) lower bound} yields a mode-dependent threshold, controlled by the problem parameters, for discerning between signal and noise modes, allowing model order detection by counting the signal modes. 

\subsection{Adjustment for Closely-spaced Poles}
An additional advantage of determining the model order from the modes, rather than the singular values, is the ability to use additional information about the poles associated with each mode, which can be used to refine the features in \eqref{eq: epsilon measure}.

When clusters of poles are present, neighboring modes can corrupt $P_i(z_i)$ due to high sidelobes of the Poisson kernel \cite{katznelson2004introduction}. Specifically, the numerator of \eqref{eq:Pi_def} is given by:
\begin{equation}\label{eq: P_i explicit numerator}
    \mathbf a^{\mathrm{H}}(z_i)\,\widetilde{\mathbf{Z}}_L^i = \Vert \mathbf a(z_i)\Vert^2 + \sum\limits_{\substack{m=1\\m \neq i}}^M \gamma_{i,m} \mathcal{P}(z_i,z_m) + \mathbf a^{\mathrm{H}}(z_i) \boldsymbol{\xi}_i,
\end{equation}
where $\mathcal{P}(z_i,z_m) = \sum_{n=0}^{N-L-1}(z_i^*z_m)^n$ denotes the finite-order Poisson kernel. This effect amplifies when $z_i$ is close to other poles $z_m$ and reduced when $z_i$ is an isolated pole. To compensate for such leakage effects, we propose a normalized variant of \eqref{eq: epsilon measure}:
\begin{equation}\label{eq: epsilon measure normalized}
    \epsilon_i := \frac{1}{d_i}\cdot\underset{z\in \mathbb{C}\setminus\{0\}}{\text{max}}\;P_i(z), \quad d_i = \sum\limits_{m=1}^L |\widetilde \lambda_m / \widetilde \lambda_i|^2,
\end{equation}
where $\widetilde \lambda_m$ and $\widetilde \lambda_i$ are the MP-estimated poles, $d_i$ is interpreted as a measure of the poles concentration near $\widetilde \lambda_i$, and $\epsilon_i$ is then normalized to the interval $[0,1]$.

\begin{figure}[t]
    \centering
    \begin{subfigure}[b]{0.49\columnwidth}
        \centering
        \includegraphics[width=\linewidth]{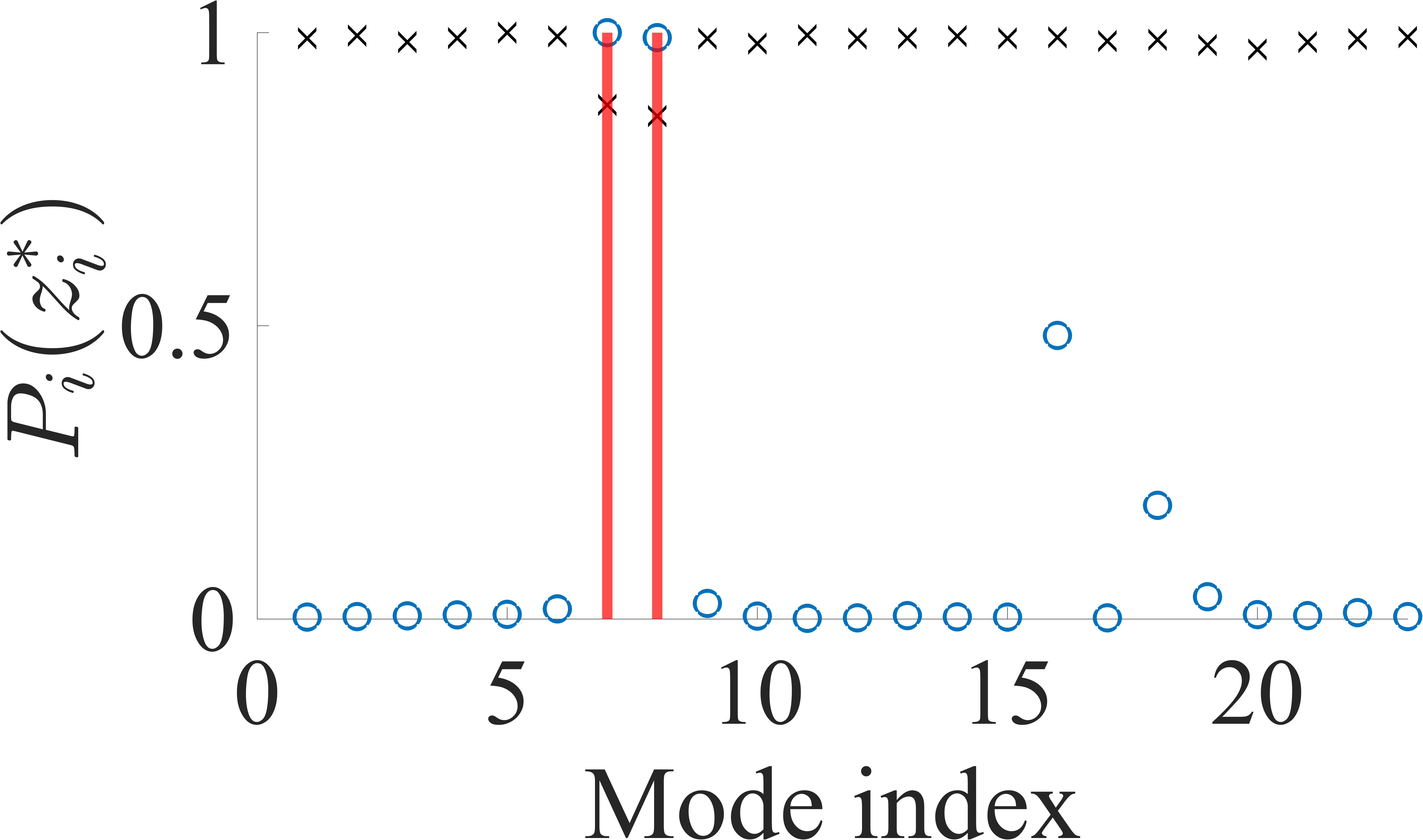}
        \caption{Undamped}
        \label{fig:sub1}
    \end{subfigure}
    \begin{subfigure}[b]{0.49\columnwidth}
        \centering
        \includegraphics[width=\linewidth]{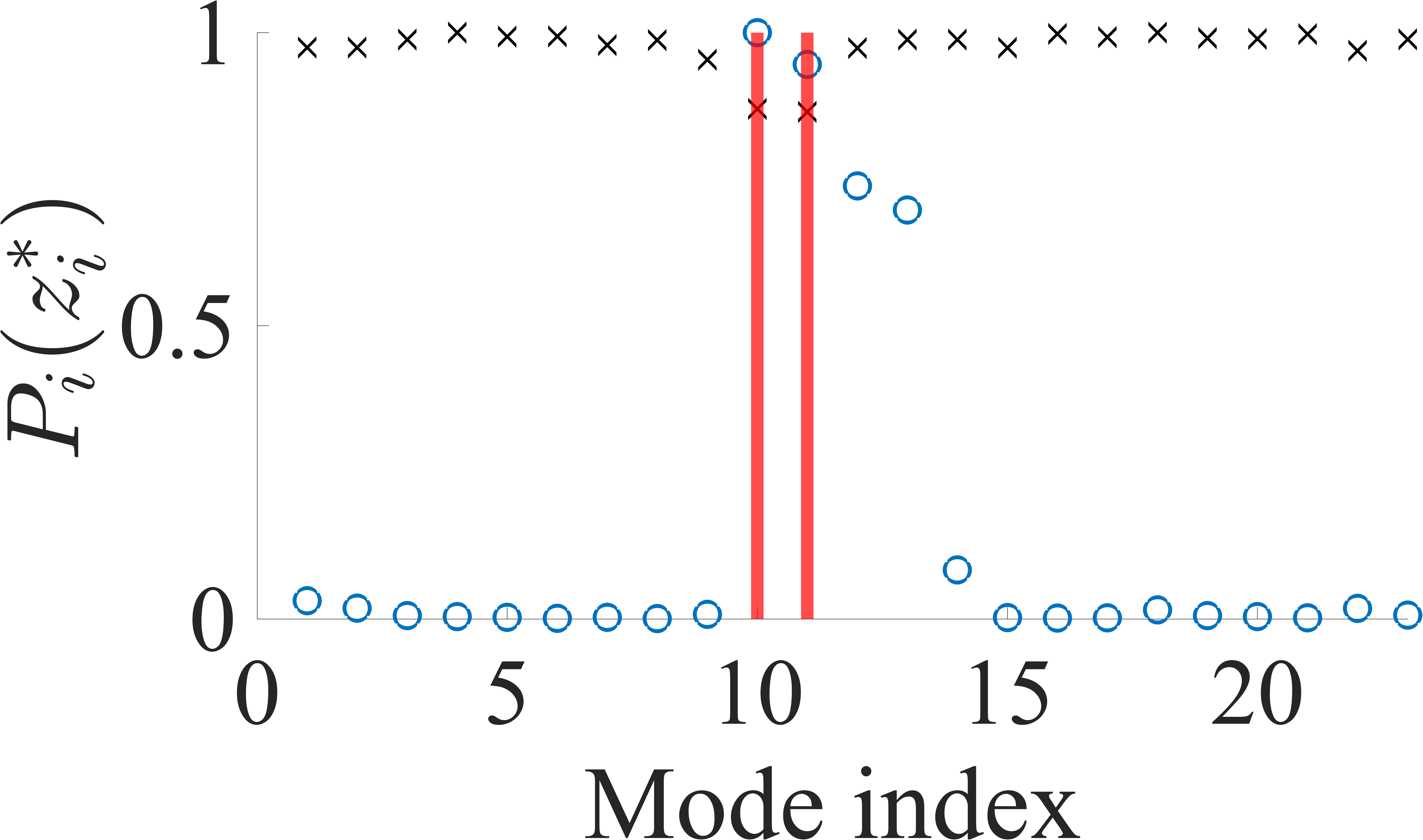}
        \caption{Damped}
        \label{fig:sub2}
    \end{subfigure}
    \caption{Illustration of $P_i(z_i^*)$ for each mode $1 \leq i \leq L$, with a signal composed of two complex exponentials, $N = 71$, $b_1 = b_2 = 1$, and $|\theta_2 - \theta_1| = 2\pi/N$. In the damped case, $\alpha_1 = 0.03$ and $\alpha_2 = 0.05$. Red lines mark the signal modes; black crosses denote the lower bound \eqref{eq: P_i(z_i) lower bound}. The $\text{SNR}_{\text{dB}}=10$.}
    \label{fig: P_i_z illustration}
\end{figure}

\begin{remark}
We leveraged the Vandermonde structure globally, by examining the the similarity between the test vector in \eqref{eq: test_vector} and the modes. Alternatively, we could follow a local perspective, viewing each signal mode in \eqref{eq:signal_mode} as a column of a perturbed  Vandermonde matrix, demonstrating a near multiplicative structure. For example, in the context of DMD, Bronstein et al. \cite{bronstein2022spatiotemporal} introduced a detection feature based on the mean ratio of consecutive rows of the DMD modes (see Appendix \ref{app: Local perspectives for signal modes detection} in the SM for more details).
Our simulations show that using the global features in \eqref{eq: epsilon measure normalized} is superior to `local' approaches, and therefore, their empirical results are not presented in the sequel.
\end{remark}

\section{Efficient Amplitudes Extraction} \label{sec: eff amp est}

We propose a computationally efficient approach for estimating the amplitudes, leveraging the structure of the MP modes as described in Proposition \ref{prop:Z_L recast} and more generally in Theorem \ref{theo:MSMP}, Appendix \ref{app:Extending the MP Theorem: Accounting for Noise}. In these results, we show that the first row of the un-normalized left modes $\widetilde{\mathbf{Z}}_{L,un}$ is given by:
\[
\mathbf{e}_1^{\mathrm{T}} \widetilde{\mathbf{Z}}_{L, un} = \left[ [\sqrt{b_1}, \cdots, \sqrt{b_M}] + \mathbf{e}_1^{\mathrm{T}} \mathbf{E}_L\sqrt{\mathbf{B}}, \; \mathbf{e}_1^{\mathrm{T}}\mathbf{C}_L \right]\in \mathbb{C}^{1\times L},
\]
and similarly, the first column of the un-normalized right modes $\widetilde{\mathbf{Z}}_{R,un}$ is given by:
\[
\widetilde{\mathbf{Z}}_{R,un} \mathbf{e}_1 = 
\begin{bmatrix}
[\sqrt{b_1}, \cdots, \sqrt{b_M}]^{\mathrm{T}} +  \sqrt{\mathbf{B}}\mathbf{E}_R \mathbf{e}_1 \\ \mathbf{C}_R \mathbf{e}_1 
\end{bmatrix} \in \mathbb{C}^{L\times 1}
\]
where the vector $\mathbf{e}_1$ is the standard unit vector, and the matrices $\mathbf{C}_L\in\mathbb{C}^{(N-L)\times(L-M)}$, and $\mathbf{C}_R\in\mathbb{C}^{(L-M)\times L}$ are $L-M$ noise-related spurious columns and rows, respectively.
Consequently, we propose to estimate the amplitudes by the following element-wise multiplication:
\begin{equation} \label{eq:new_amps}
\hat{\mathbf{b}} =(\mathbf{e}_1^{\mathrm{T}} \widetilde{\mathbf{Z}}_{L, un})^{\mathrm{T}} \odot (\widetilde{\mathbf{Z}}_{R, un} \mathbf{e}_1) \in \mathbb{C}^L,
\end{equation}
where $\hat{\mathbf{b}}$ consists of the estimated amplitudes, and the $i$-th entry of $\hat{\mathbf{b}}$ is given by:
\begin{equation*}
\hat{{b}}_i = 
\begin{cases}
b_i(1+ \mathbf{E}_L(1,i))(1+ \mathbf{E}_R(i,1)) &, \text{if}\ 1 \leq i \leq M, \\
\mathbf{C}_L(1, i-M) \mathbf{C}_R(i-M,1) &, \text{if}\ M < i \leq L
\end{cases}
\end{equation*}
This method involves simple element-wise multiplication of two vectors in $\mathbb{C}^L$, requiring  $\mathcal{O}(L)$ operations. The existing method in \eqref{eq:MP coeff estimation} involves pseudo-inverting a matrix of size $N \times \widehat M$, where $\widehat M$ is based on the truncated-SVD of $\mathbf{Y}_0$, and hence $1\leq \widehat M \leq L < N$. The computational complexity of the existing method is $\mathcal{O}(N\widehat{M}^2 + \widehat{M}^3)$. As noted in Section \ref{sec: MP method}, estimating the model order using the truncated-SVD step is prone to over- or underestimation. In the extreme case of underestimation, where $\widehat M=1$, the existing method requires $\mathcal{O}(N)$ operations, while if $\widehat M=L$, the existing method requires $\mathcal{O}(NL^2 + L^3)$ operations. Either way, the proposed approach is more efficient.
In Section \ref{sec:Results}, we compare its performances to the existing method for estimating the signal amplitudes given in \eqref{eq:MP coeff estimation}, which is based on matrix inversion.

We highlight that similarly to Sec. \ref{sec: model order}, where spectral information is extracted from the MP modes to identify signal-related components, in \eqref{eq:new_amps}, temporal information is extracted from the MP modes to estimate the noisy amplitudes efficiently. This new approach of extracting meaningful information from the MP modes fundamentally differs from existing approaches, where the MP modes (or eigenvectors) are not typically recognized as carrying valuable information for estimating the signal parameters.


\section{Proposed Algorithm} \label{sec: Proposed Algorithm}

Relying on the analysis and results regarding model order detection and amplitude estimation in Sections \ref{sec: model order} and \ref{sec: eff amp est}, we present the Structure-Aware Matrix Pencil (SAMP) algorithm, which is summarized in Algorithm \ref{alg:SAMP algo}. Compared to the standard MP method, our main novelties are the detection of signal modes, a new model order detection approach, and efficient amplitude estimation.

\subsection{Practical Considerations} \label{subsec: practice}
\emph{(i) Weak truncation.}
Although the proposed model order detector does not require truncation in \eqref{eq:SVD of Y_0}, it is known, and supported by our numerical studies, that removing strong noise-related components reduces the noise effects in the estimates \cite{hua1991svd, sarkar1995using}. Therefore, we recommend applying a weak truncation in \eqref{eq:SVD of Y_0} to eliminate strong noise components only. Alternatively, one can assume, for example, that $2M\leq L$ and truncate the smallest $\left\lfloor \frac{L}{2} \right\rfloor$ singular values in \eqref{eq:SVD of Y_0}.
In our setup, we do not assume prior knowledge of $M$, and by our simulations, an example of a weak-truncation method is the effective rank method, introduced in \cite{roy2007effective}.

The effective rank method estimates the dimensionality or rank of a matrix in the presence of noise by utilizing the entropy of its normalized singular values. Given a set of singular values \(\{\sigma_i\}_{i=1}^L\), the normalized singular values are defined as \(\bar\sigma_i = \frac{\sigma_i}{\sum_{j=1}^L \sigma_j}\). The entropy \(H(\boldsymbol{\bar\sigma})\) of these normalized singular values is calculated using \(H(\boldsymbol{\bar\sigma}) = -\sum_{i=1}^n \bar\sigma_i \text{log}(\bar\sigma_i)\), and the effective rank is then defined as $\text{exp}(H(\mathbf{p}))$. The model order is estimated by:
\begin{equation}\label{eq:effective-rank}
    \widehat{M}_{\text{ER}} = \text{round}(\text{exp}\left(H(\boldsymbol{\bar\sigma}))\right).
\end{equation}
We highlight that the proposed SAMP algorithm remains valid without truncation, at the cost of increased bias at high SNR levels.

\emph{(ii) Practical threshold for $\epsilon_i$.}
While the threshold in \eqref{eq: theoretic feature threshold} provides a theoretically grounded criterion for identifying the signal modes, estimating $\text{ISR}_i$ in practice is challenging.
%
To address this, we propose a simplified variant of the detection rule in \eqref{eq: theoretic feature threshold}. 
Using the structure of $\mathbf{E}_L^i$ in Proposition \ref{prop:FOA of E_L}, we observe that $\Vert \mathbf{E}_L^i \Vert$ is inversely proportional to $|b_i|$, and therefore, the $\text{ISR}_i$ can be written as:
\[
\text{ISR}_i = \frac{c_i}{|b_i| \Vert \mathbf a(z_i)\Vert},
\]
where $c_i\in \mathbb{R}$ is a tunable parameter, capturing the remaining terms in $\Vert \mathbf{E}_L^i \Vert$.
Since the true amplitudes and poles $b_i$ and $z_i$ are unknown, we use their estimates $\hat b_i$ and $\widetilde \lambda_i$, and define the following simplified amplitude-based threshold: 
the $i$-th mode is classified as signal mode if: 
\begin{equation}\label{eq: feature threshold}
        \left( \frac{1-c_i/|\hat b_i|\Vert \mathbf a(\widetilde \lambda_i)\Vert}{1+c_i/|\hat b_i|\Vert \mathbf a(\widetilde \lambda_i)\Vert} \right)^2 \leq \epsilon_i,
\end{equation}
where we set $c_i=10\sqrt{N-L}$ in our experiments. Notably, $c_i$ may also be tuned individually per mode to improve flexibility. This rule respects the structure of the theoretical rule in \eqref{eq: theoretic feature threshold}, while relying only on quantities available in practice.

\begin{remark}
The proposed SAMP algorithm (Algorithm \ref{alg:SAMP algo}) deviates from the conventional ``detect and then estimate'' strategy. Instead, it adopts an ``estimate and then select'' approach, wherein the model parameters are first estimated from the eigenvalues and modes, and then, the signal-related components selected based on these estimated parameters. 
\end{remark}

\begin{algorithm}[t]
\caption{SAMP proposed algorithm}
\label{alg:SAMP algo}
\textbf{Input}: Noisy measurements $\{y(n)\}_{n=0}^{N-1}$, pencil parameter $L$, and the tunable parameters $\{c_i\}_{i=0}^L$.\\
\textbf{Output}: Estimates of the signal frequencies, damping factors, and amplitudes. 
\begin{algorithmic}[1]
    \STATE{Modes and Eigenvalues Computation:}
    \begin{itemize}
    \item {Construct the Hankel matrices $\mathbf{Y}_0, \mathbf{Y}_1$ from the measured signal $\{y(n)\}_{n=0}^{N-1}$ by \eqref{eq:full noisy Hankel matrix}. }
    \item {Compute the truncated-SVD of $\mathbf{Y}_0 \approx \mathbf{U} \mathbf{\Sigma} \mathbf{V}^{\mathrm{H}}$, using the effective-rank method by \eqref{eq:effective-rank}.}
    \item {Decompose the matrix $\mathbf{A} = \mathbf{\Sigma}^{-1}\mathbf{U}^{\mathrm{H}}\mathbf{Y}_1\mathbf{V}$ to get $\mathbf{A} \mathbf{Q} = \mathbf{Q} \widetilde{\mathbf{\Lambda}}$, by \eqref{eq:A decompos}.}
    \item {Calculate the left MP modes $\widetilde{\mathbf{Z}}_L =  \mathbf{U} \mathbf{\Sigma} \mathbf{Q}$ by \eqref{def:left mp mode}.}
     \end{itemize}
        \STATE{Parameter Estimation:}
        \begin{itemize}
            \item Estimate the frequencies and damping factors by \eqref{eq:params calc}:
                \begin{equation*}
                        \hat{\theta}_i = \text{arg}(\widetilde {\lambda}_i), \;\;
                        \hat{\alpha}_i = \text{log}|{\widetilde\lambda}_i|, \;\; 1\leq i \leq L.
                \end{equation*}
            \item Estimate the complex amplitudes by \eqref{eq:new_amps}: $$\hat{\mathbf{b}} =(\mathbf{e}_1^{\mathrm{T}} \widetilde{\mathbf{Z}}_L)^{\mathrm{T}} \odot (\widetilde{\mathbf{Z}}_R \mathbf{e}_1)$$
        \end{itemize} 
    
    \STATE{Model Order Detection:}
        \begin{itemize}
            \item Compute the $L$ features by \eqref{eq: epsilon measure normalized}: $$\epsilon_i = \frac{1}{d_i}\cdot\underset{z\in \mathbb{C}\setminus\{0\}}{\text{max}}\;P_i(z),\;\; 1\leq i \leq L,$$ and normalize them to the interval $[0,1]$. 
           
            \item {Partition the features into two distinct subsets by defining the signal-related subset as the subset of features satisfying \eqref{eq: feature threshold}:
            \[
                \left( \frac{1-c_i/|\hat b_i|\Vert \mathbf a(\widetilde \lambda_i)\Vert}{1+c_i/|\hat b_i|\Vert \mathbf a(\widetilde \lambda_i)\Vert} \right)^2 \leq \epsilon_i.
            \]
            }\label{algo: division step}
            \item Set the model order to:
            $
                \widehat{M} = \left| \mathcal{S} \right|,
            $ where $\mathcal{S}$ is the set of indices corresponding to the signal-related subset.
        \end{itemize}

    \STATE{Parameter Selection:}
        \begin{itemize}
        \item{Select the signal components by: $\hat{\theta}_s, \, \hat{\alpha}_s, \, \hat{b}_s, \;\; s\in \mathcal{S}$.}
        \end{itemize}
        
\end{algorithmic}
\end{algorithm} 

\subsection{Computational complexity}\label{sebsec: comp complex}
We evaluate the computational complexity of the proposed algorithm by counting the number of complex operations required for each step. 
In step 1, we compute the modes and eigenvalues. This step requires $\mathcal{O}((N-L)L^3)$ operations.
In step 2, we estimate the signal parameters. This step requires $\mathcal{O}(L)$ operations. 
In step 3, we determine the model order. First we compute the $L$ measures $\{\epsilon_i\}_{i=1}^L$ using \eqref{eq: epsilon measure normalized}, which requires $\mathcal{O}((N-L)^2)$ operations. Next, we divide the set $\{\epsilon_i\}_{i=1}^L$ into two distinct subsets, which requires $\mathcal{O}(L)$ operations.
In step 4, we select the signal-related parameters. This step requires $\mathcal{O}(\widehat{M})$ operations.
The overall complexity of the proposed algorithm is $\mathcal{O}((N-L)^2 + (N-L)L^3)$. For comparison, the overall complexity of the standard MP algorithm is also $\mathcal{O}((N-L)L^3)$, and the most demanding step in both methods is the calculation of the matrix $\mathbf{A}$ which involves the multiplication of four matrices.

\section{Numerical results}
\label{sec:Results}

In this section, we present simulation results focusing on closely-spaced frequencies under various SNR levels. We begin by showcasing the detection and estimation capabilities of the proposed SAMP algorithm (Algorithm \ref{alg:SAMP algo}). Then, we demonstrate the proposed amplitude estimation method and compare it in terms of computation time and accuracy to the traditional amplitude estimation method in \eqref{eq:MP coeff estimation}.

We compare the proposed SAMP algorithm with five other methods based on singular value truncation. Two methods are based on ITE: AIC \cite{stoica2005spectral} and EVT \cite{nadler2011model} criteria. These methods use the \textit{coupled detection and estimation} approach, which requires prior knowledge of the noise distribution (that is unknown in our setting), and involves computing the likelihood function for every possible hypothesized model order, making them computationally inefficient, as will be evident in the sequel. The EVT criterion is used because it was shown to outperform the common MDL and MAP criteria, and to be less sensitive to noise model mismatch \cite{nadler2011model}.

The remaining baseline methods use the \textit{decoupled detection and estimation} approach, which is the standard practice in the MP literature. In particular, we examine the Significant Decimal Digits (SDD) method (see Section \ref{subsec:MP method - model order detection}), a widely used method in the MP literature \cite{hua1990matrix, hua1988matrix, sarkar1995using, laroche1993use, bhuiyan2012advantages}. Additionally, we evaluate the GAP method, introduced in Sec. \ref{subsec:MP method - model order detection}, which is another common approach for SVD truncation \cite{del1996comparison, yin2011model}. For a fair comparison, we also include the effective rank method (EFF) \cite{roy2007effective}, as we use it in our algorithm as a pre-processing step to eliminate strong noise-related components. 

Throughout this section, we simulate two scenarios involving two and four closely-spaced complex exponentials, generated according to \eqref{eq:noiseless discrete signal}. In both cases, $\{b_i=1\}_{i=1}^M$, and $L = \text{round}(N/3)$.
In each scenario, we conduct two experiments. The first involves \emph{undamped} exponentials, with $\{\alpha_i = 0\}_{i=1}^M$. The second involves \emph{damped} exponentials: for the two-components case, we set $\alpha_1 = 0.03$, and $\alpha_2 = 0.05$; for the four-components case, we set $\alpha_1 = \alpha_3 = 0.03$, and $\alpha_2 = \alpha_4 = 0.05$.
Our goal is twofold. Given a finite sample of the noisy signal $y(n)$, as in \eqref{eq: noisy discrete signal}, we aim to detect the model order (the number of complex exponentials), which is $M=2$ or $M=4$ in our simulations, and estimate the corresponding $M$ frequencies, damping factors, and amplitudes.
The code to reproduce all the results in this section is available in the following \href{https://github.com/YehonatanSeg/--SAMP--Stracture-Aware_Matrix_Pencil_Method.git}{GitHub link}.

\subsection{Model order detection}
We determine the probability of correctly identifying the model order, $p_d$, defined by:
\[
    p_d = \mathbb{P}(\widehat{M}=M) \cong \frac{1}{N_{\text{exp}}}\sum_{i=1}^{N_{\text{exp}}} \mathbbm{1}_{\widehat{M}=M},
\] 
where $N_{\text{exp}}=500$ is the number of independent Monte-Carlo trials, and $\mathbbm{1}_{\widehat{M}=M}$ is an indicator function that equals $1$ if the estimated model order $\widehat{M}$ matches the true model order $M$, and $0$ otherwise. To simplify comparisons, we report the Area Under the Curve (AUC) for each method.

Fig. \ref{fig: detection_probability_SNR}  displays the probability of correct detection $p_d$ as a function of the SNR in dB for $N=71$ samples. In Figs. \ref{fig: pd_SNR_2_undmp}-\ref{fig: pd_SNR_2_dmp}, we simulate $M=2$ exponentials, with $\theta_1=2 \; \text{rad/sample}$, and $\theta_2= \theta_1 + \frac{2\pi}{N}$. In Figs \ref{fig: pd_SNR_4_undmp}-\ref{fig: pd_SNR_4_dmp}, we simulate $M=4$ exponentials, divided into two clusters, with $\theta_1=2 \; \text{rad/sample}$, $\theta_2= \theta_1 + \frac{2\pi}{N}$, and $\theta_3 = -\theta_1$, $\theta_4=-\theta_2$.
The value of $\theta_2$ is set because the difference $|\theta_2 - \theta_1| = \frac{2\pi}{N}$ is the Rayleigh limit \cite{dharanipragada1996resolution}. We note that theoretical frequency resolution limits are also discussed in \cite{shahram2005resolvability, amar2008fundamental}. 
The SNR is defined as: 
\[
\text{SNR} = \sum\limits_{i=1}^M \frac{|b_i|^2}{\sigma_w^2},
\]
and we denote $\text{SNR}_{\text{dB}} = 10\text{log}_{10}(\text{SNR})$.
Fig. \ref{fig: detection_probability_SAMPLE} displays $p_d$ versus the number of samples $N$ for $\text{SNR}_{\text{dB}}$ of $8$ dB. 
We simulate $M=2$ exponentials in Figs. \ref{fig: pd_SAMPLE_2_undmp}-\ref{fig: pd_SAMPLE_2_dmp} and $M=4$ exponentials in \ref{fig: pd_SAMPLE_4_undmp}-\ref{fig: pd_SAMPLE_4_dmp}, where the frequencies spacing is the same as in Fig. \ref{fig: detection_probability_SNR}.
Fig. \ref{fig: detection_probability_DIFF} displays $p_d$ versus the frequencies spacing $|\Delta \theta|$ for $\text{SNR}_{\text{dB}}$ value of $10$ dB and $N=71$ samples. We simulate $M=2$ exponentials in Figs. \ref{fig: pd_DIFF_2_undmp}-\ref{fig: pd_DIFF_2_dmp} and $M=4$ exponentials in \ref{fig: pd_DIFF_4_undmp}-\ref{fig: pd_DIFF_2_dmp}, divided into two clusters. In the case of $M=4$, the varying frequencies spacing is within each cluster: $|\Delta \theta| = |\theta_2-\theta_1|=|\theta_4-\theta_3|$.

The proposed SAMP method outperforms the GAP, SDD, and EFF methods by a wide margin in all the presented scenarios. 
We note that the EFF method fails as a stand-alone detection method, but we use it as a pre-processing step in the proposed SAMP method to eliminate strong noise components. Empirically, we see that the EFF method consistently overestimates the model order, even at high $\text{SNR}_{\text{dB}}$ values.
Additionally, the SAMP method also outperforms the EVT and AIC criteria in all the presented cases. When comparing the undamped and damped cases, we observe a general degradation in performance; however, the relative trends remain the same, with SAMP outperforming all the baseline methods.

\textit{Noise distribution effects:} To assess the impact of deviations from the Gaussian noise assumption, we also simulate non-Gaussian additive noise. Table \ref{table:auc_comparison} shows the AUC of $p_d$ versus the $\text{SNR}_{\text{dB}}$ (in the range of $[-10,20]$ dB), for each method under both normal and bi-normal noise distributions, in the case of $M=2$ (the complementary results for $M=4$ show similar trends and are omitted for brevity). 
The first experiment (second column from left) depicts the results of a normal distribution, as seen in Figs. \ref{fig: pd_SNR_2_undmp}-\ref{fig: pd_SNR_2_dmp}. In the second experiment (third column from left), we simulated a Bi-normal distribution:
\[
    \mathcal{B}_d = \mathbbm{1}_{p<r}\cdot\mathcal{N}(\mu_1,\sigma_1) + \mathbbm{1}_{p\geq r}\cdot\mathcal{N}(\mu_2,\sigma_2),
\] where $p$ is a random number drawn from the standard uniform distribution on the open interval $(0,1)$, and $r$ is a pre-defined threshold, set to $r=0.85$ in our simulations. 
The results demonstrate that SAMP is moderately sensitive to deviations from the assumed noise distribution, with an average degradation of approximately $5\%$. In contrast, ITE methods such as EVT and AIC are more affected, with an averaged degradation of approximately $14\%$ to EVT and $17\%$ to AIC.
Interestingly, GAP remains largely insensitive, likely due to its reliance solely on spectral thresholding, but its detection capabilities are low.
Due to poor performance, the SDD and EFF methods yield AUC values near zero and cannot be analyzed.


\begin{figure}[t]
    \centering
    \begin{subfigure}[b]{0.49\columnwidth}
        \centering
        \includegraphics[width=\linewidth]{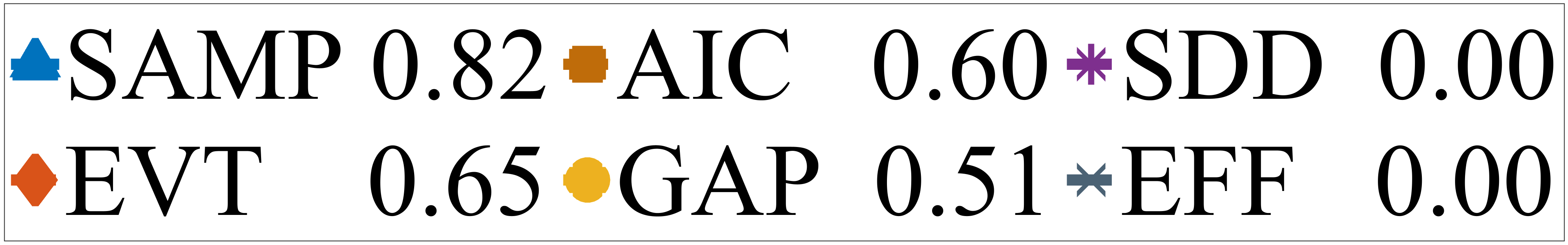}
    \end{subfigure}
    \begin{subfigure}[b]{0.49\columnwidth}
        \centering
        \includegraphics[width=\linewidth]{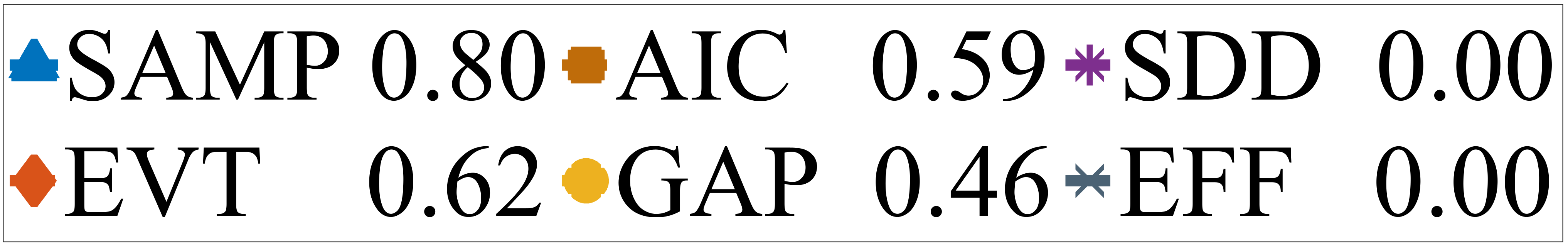}
    \end{subfigure}
    
    \begin{subfigure}[b]{0.49\columnwidth}
        \centering
        \includegraphics[width=\linewidth]{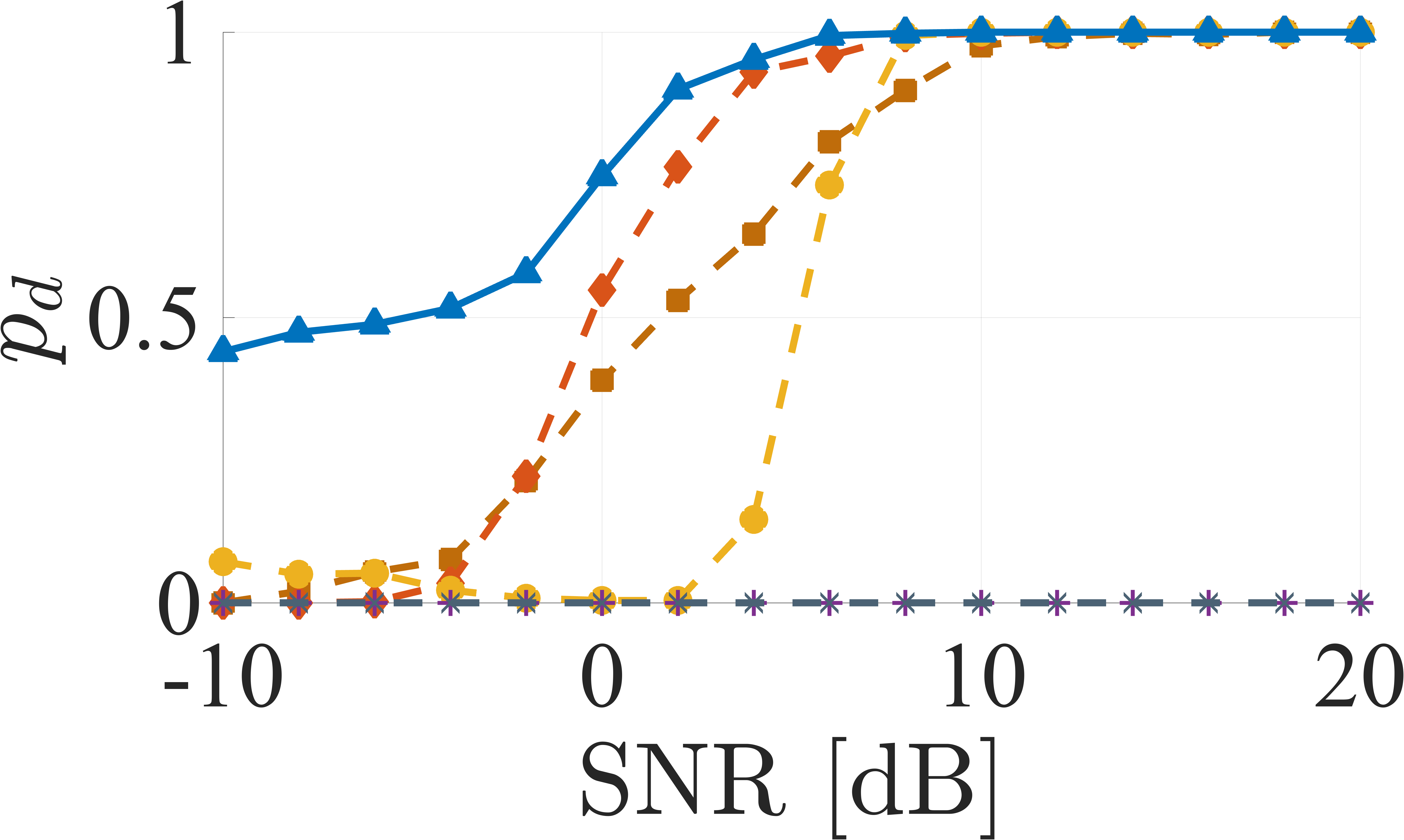}
        \caption{Undamped}
        \label{fig: pd_SNR_2_undmp}
    \end{subfigure}
    \begin{subfigure}[b]{0.49\columnwidth}
        \centering
        \includegraphics[width=\linewidth] {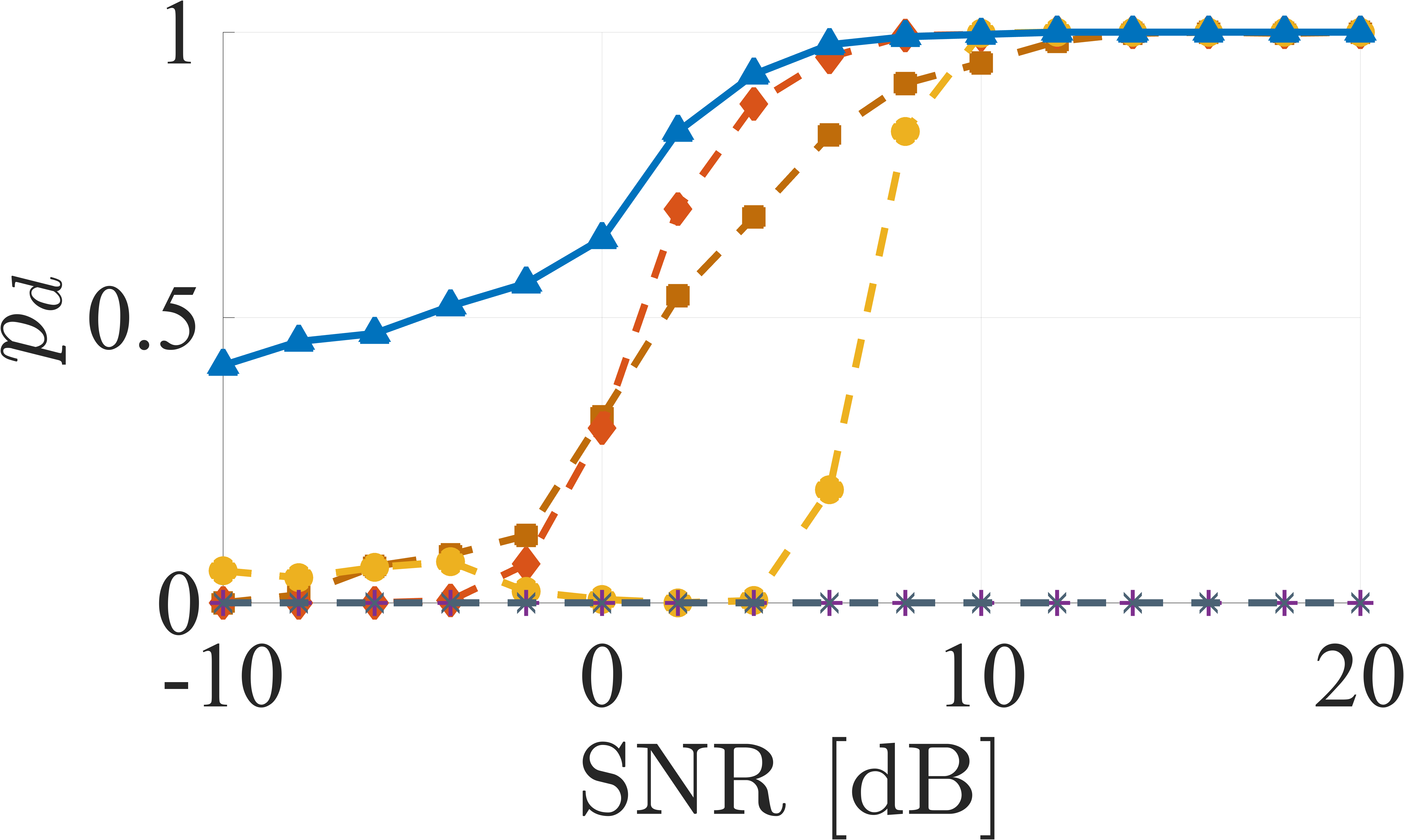}
        \caption{Damped}
        \label{fig: pd_SNR_2_dmp}
    \end{subfigure}

    \hfill
\begin{subfigure}[b]{0.49\columnwidth}
        \centering
        \includegraphics[width=\linewidth]{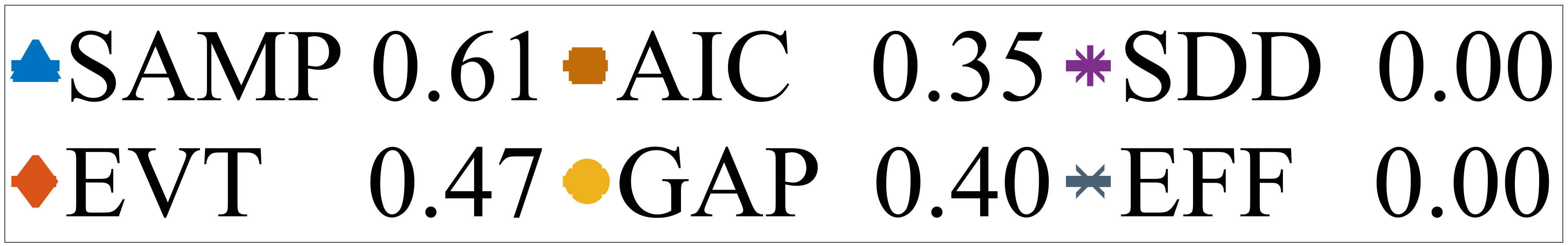}
    \end{subfigure}
    \begin{subfigure}[b]{0.49\columnwidth}
        \centering
        \includegraphics[width=\linewidth]{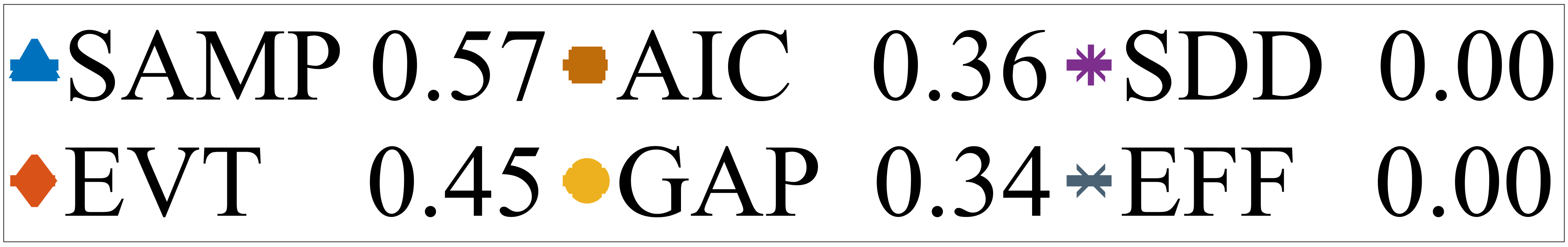}
    \end{subfigure}
    
    \begin{subfigure}[b]{0.49\columnwidth}
        \centering
        \includegraphics[width=\linewidth]{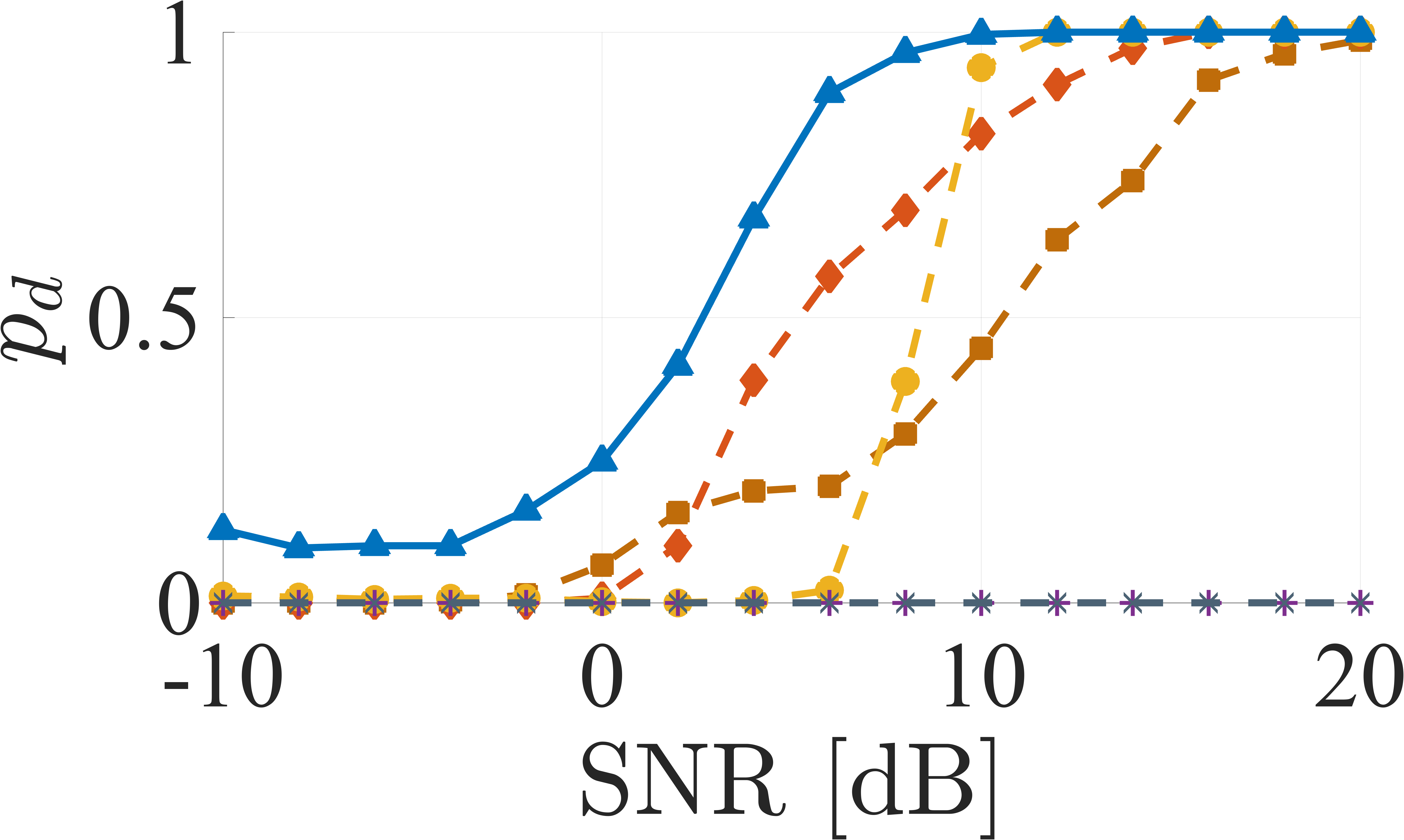}
        \caption{Undamped}
        \label{fig: pd_SNR_4_undmp}
    \end{subfigure}
    \begin{subfigure}[b]{0.49\columnwidth}
        \centering
        \includegraphics[width=\linewidth] {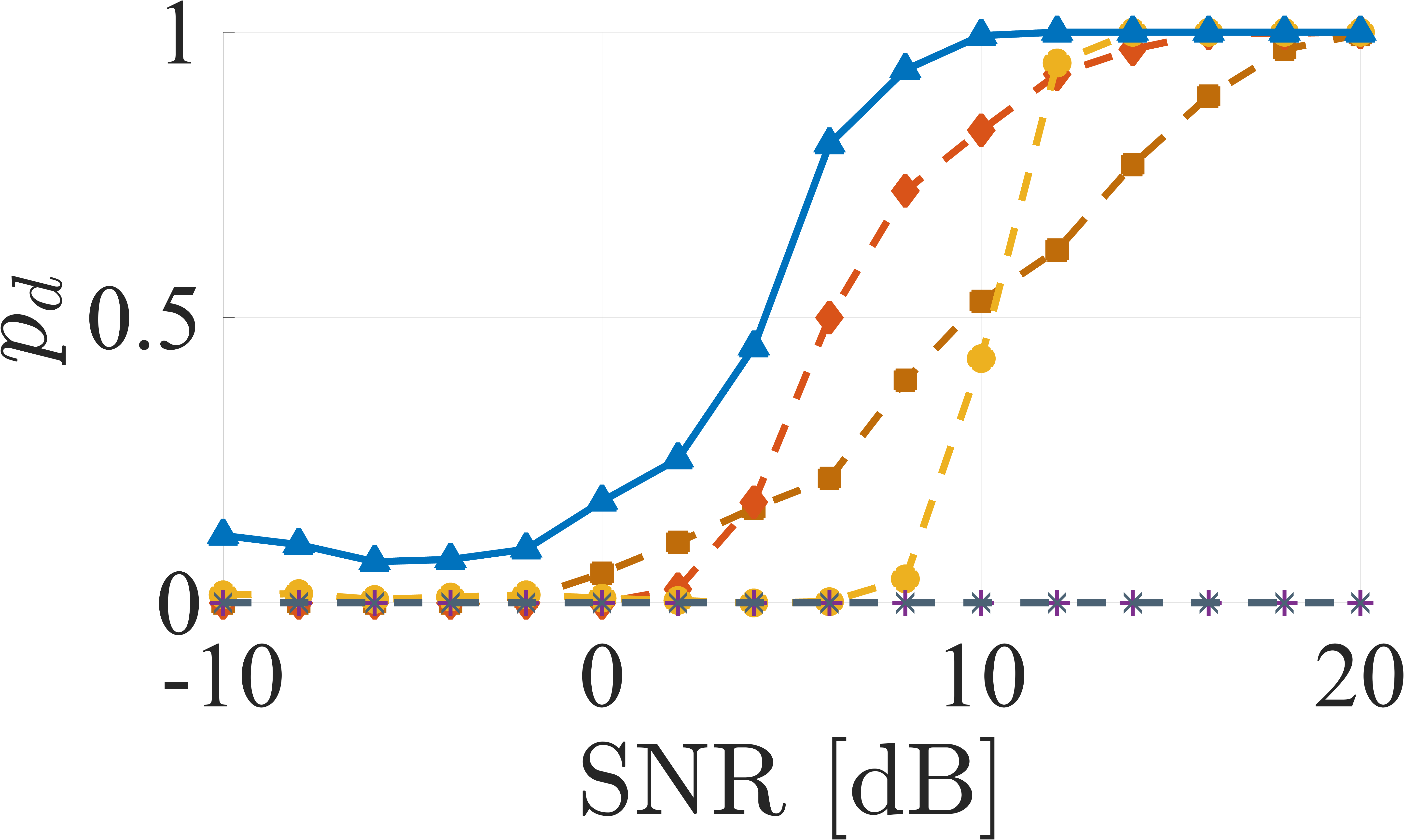}
        \caption{Damped}
        \label{fig: pd_SNR_4_dmp}
    \end{subfigure}
        \caption{Probability of correct model order estimation versus $\text{SNR}_{\text{dB}}$, for $M=2$ in (a)-(b), and $M=4$ in (c)-(d). AUC scores are provided in the legend.}
    \label{fig: detection_probability_SNR}
\end{figure}

\begin{figure}[t]
    \centering
           \begin{subfigure}[b]{0.49\columnwidth}
        \centering
        \includegraphics[width=\linewidth]{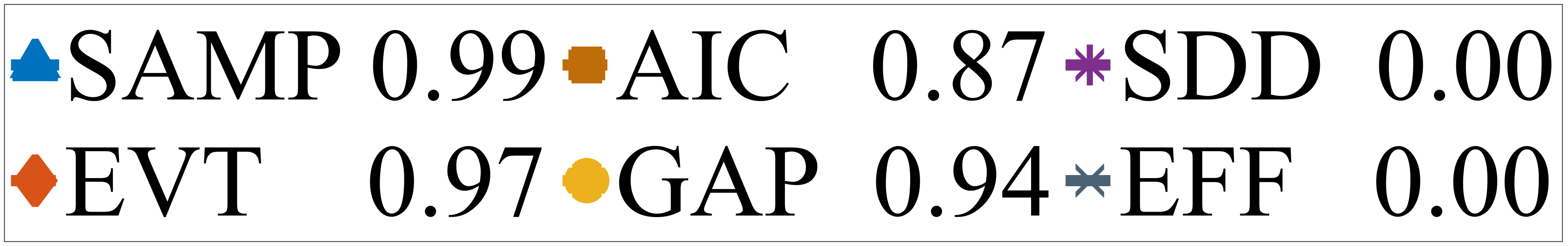}
    \end{subfigure}
    \begin{subfigure}[b]{0.49\columnwidth}
        \centering
        \includegraphics[width=\linewidth]{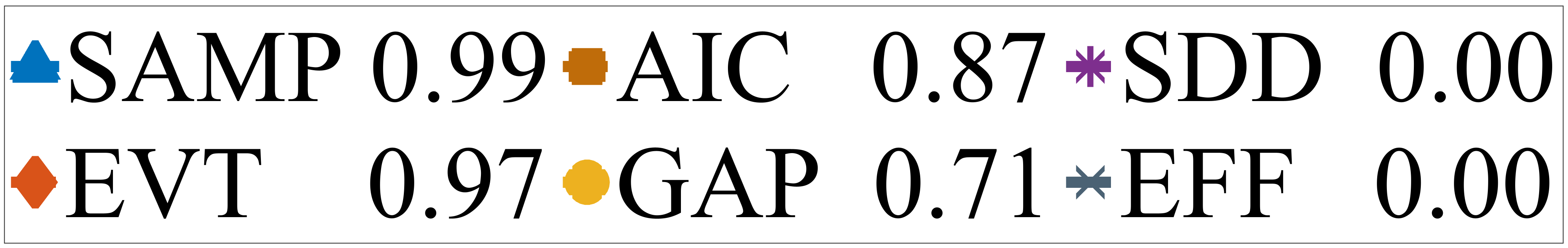}
    \end{subfigure}
    
    \begin{subfigure}[b]{0.49\columnwidth}
        \centering
        \includegraphics[width=\linewidth]{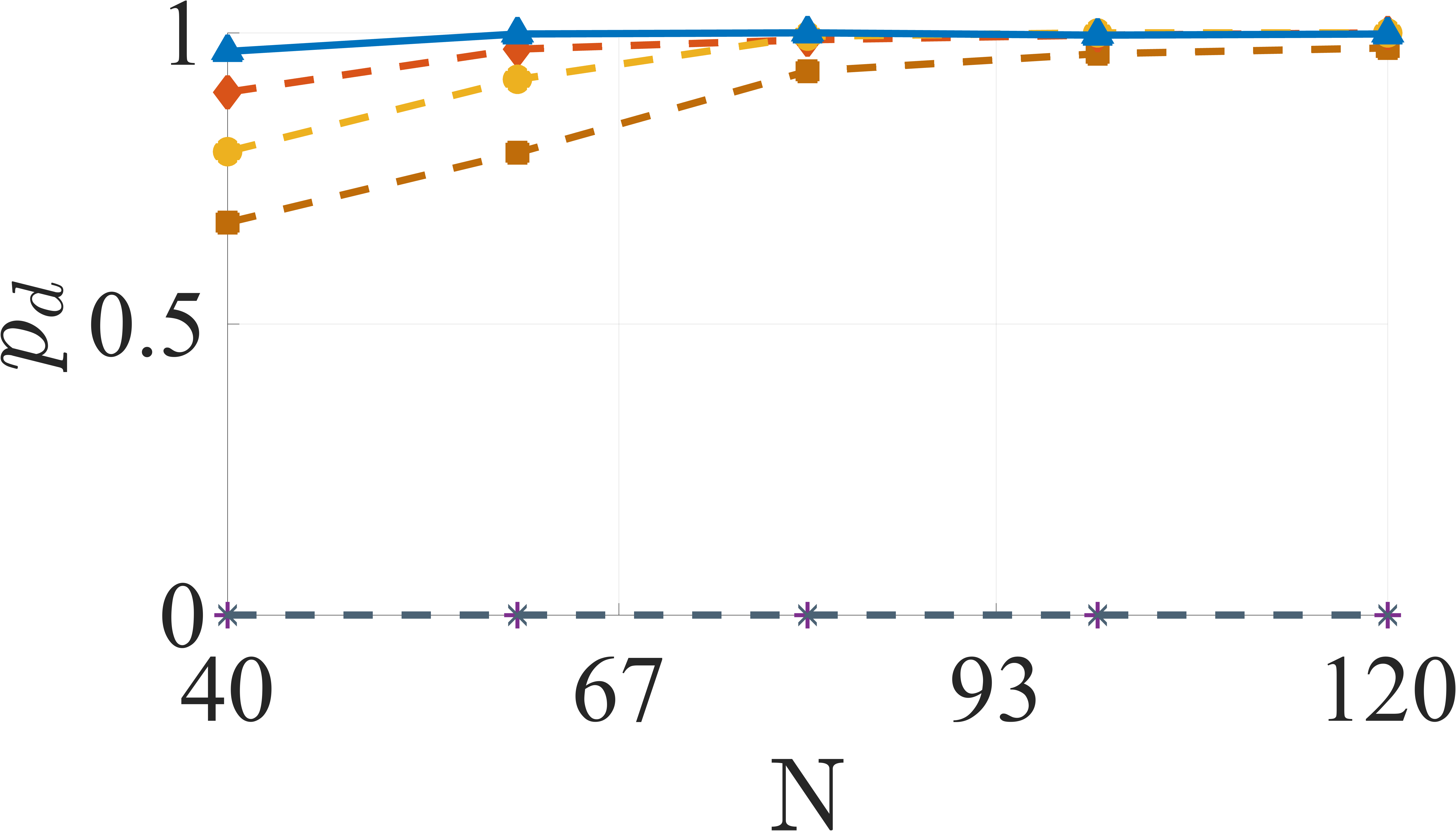}
        \caption{Undamped}
        \label{fig: pd_SAMPLE_2_undmp}
    \end{subfigure}
    \begin{subfigure}[b]{0.49\columnwidth}
        \centering
        \includegraphics[width=\linewidth] {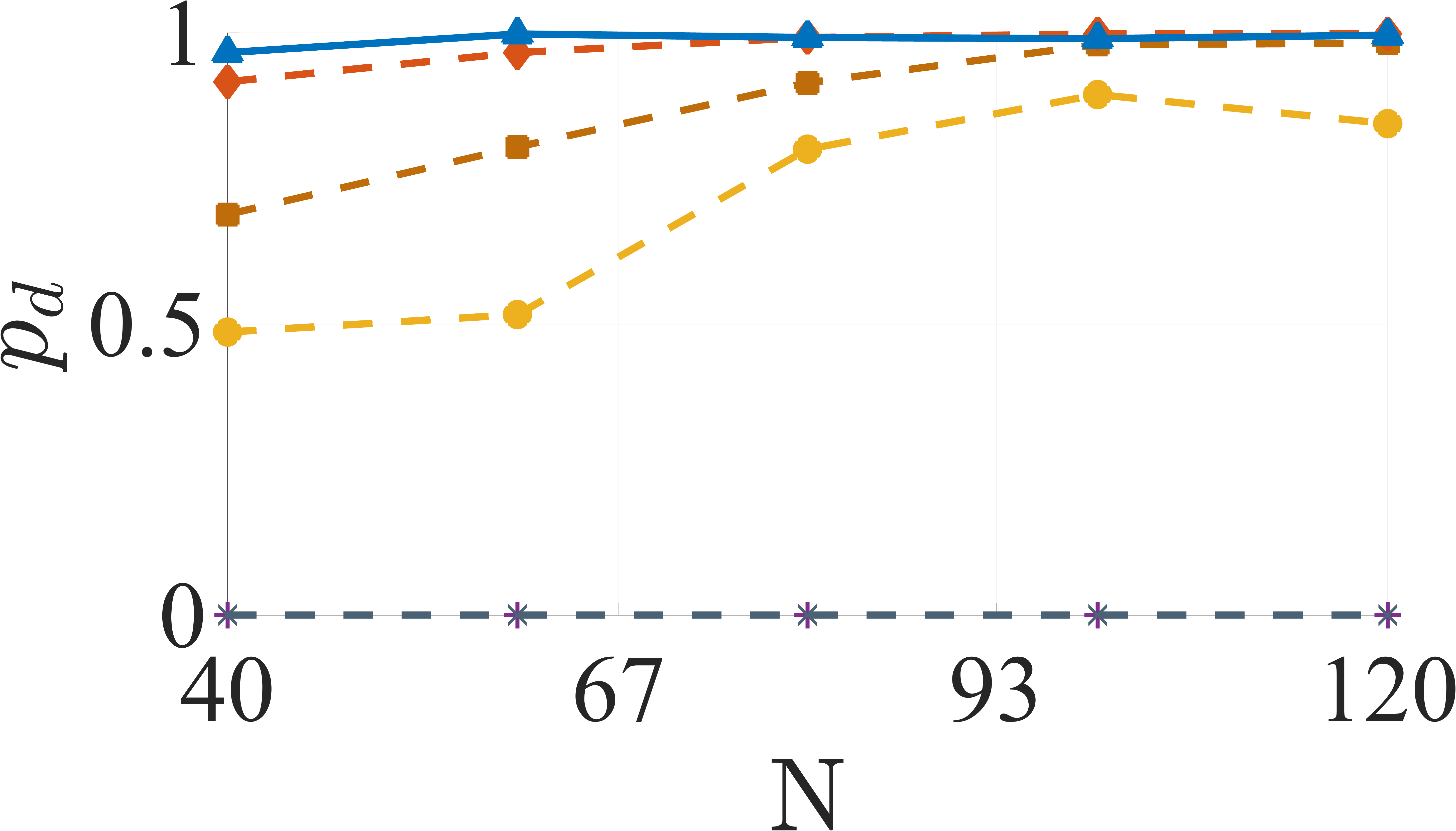}
        \caption{Damped}
        \label{fig: pd_SAMPLE_2_dmp}
    \end{subfigure}

    \hfill
\begin{subfigure}[b]{0.49\columnwidth}
        \centering
        \includegraphics[width=\linewidth]{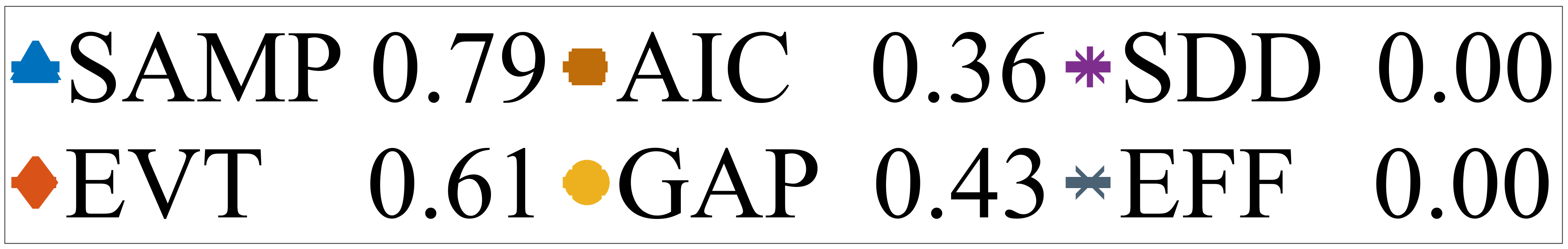}
    \end{subfigure}
    \begin{subfigure}[b]{0.49\columnwidth}
        \centering
        \includegraphics[width=\linewidth]{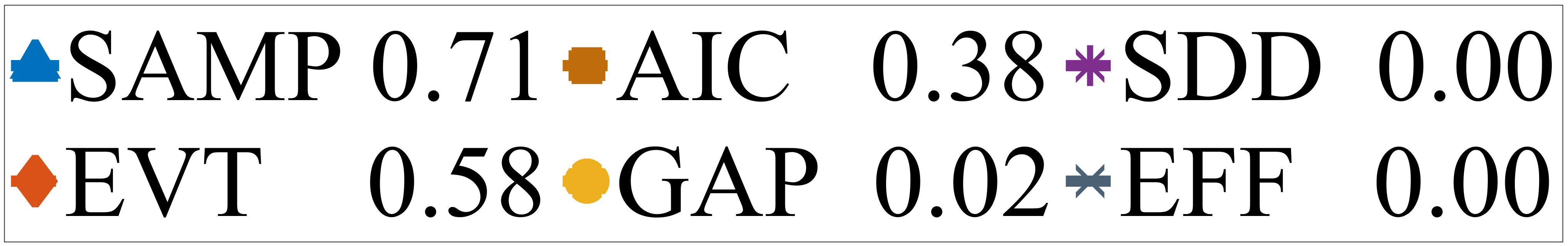}
    \end{subfigure}
    
    \begin{subfigure}[b]{0.49\columnwidth}
        \centering
        \includegraphics[width=\linewidth]{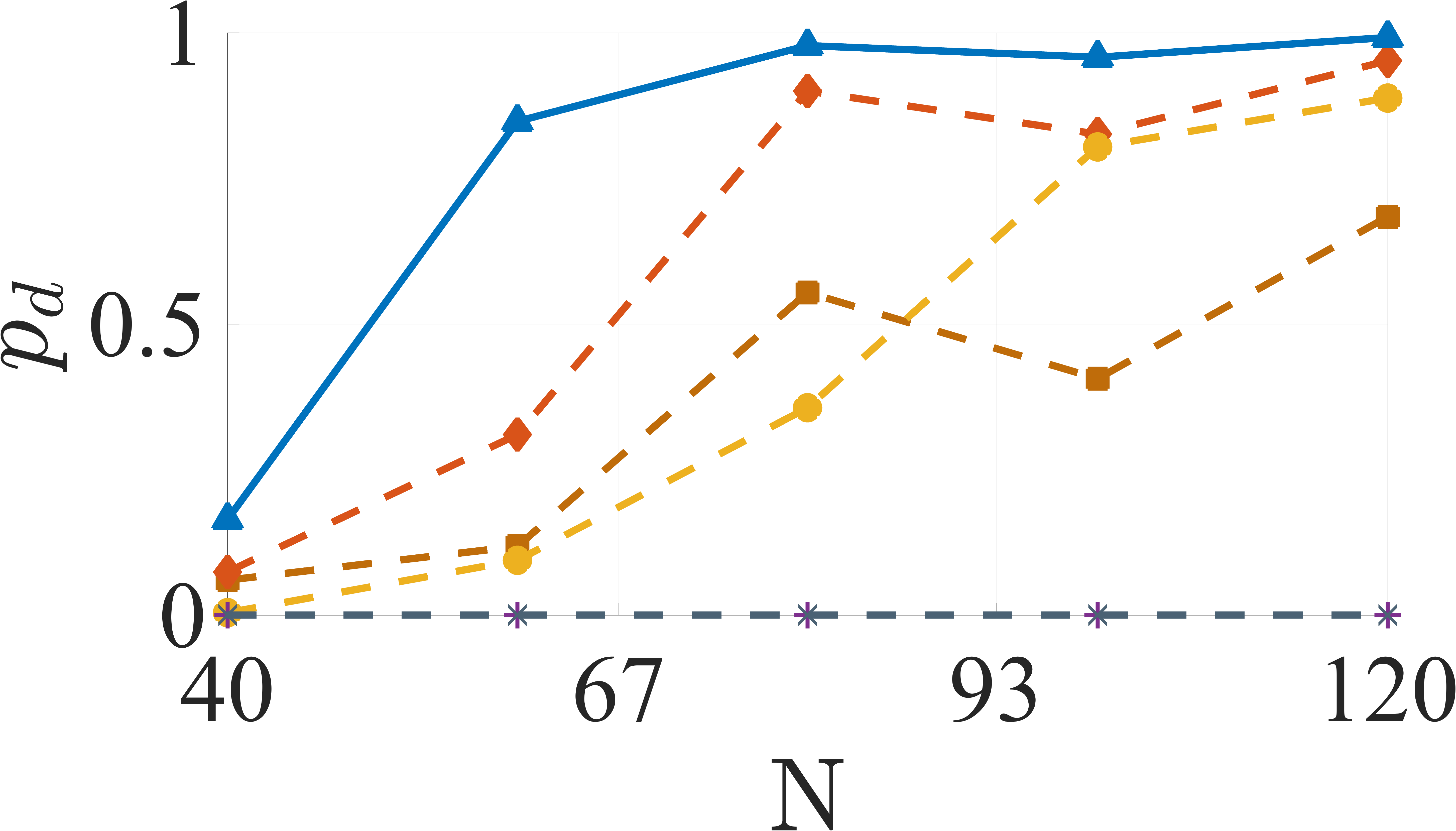}
        \caption{Undamped}
        \label{fig: pd_SAMPLE_4_undmp}
    \end{subfigure}
    \begin{subfigure}[b]{0.49\columnwidth}
        \centering
        \includegraphics[width=\linewidth] {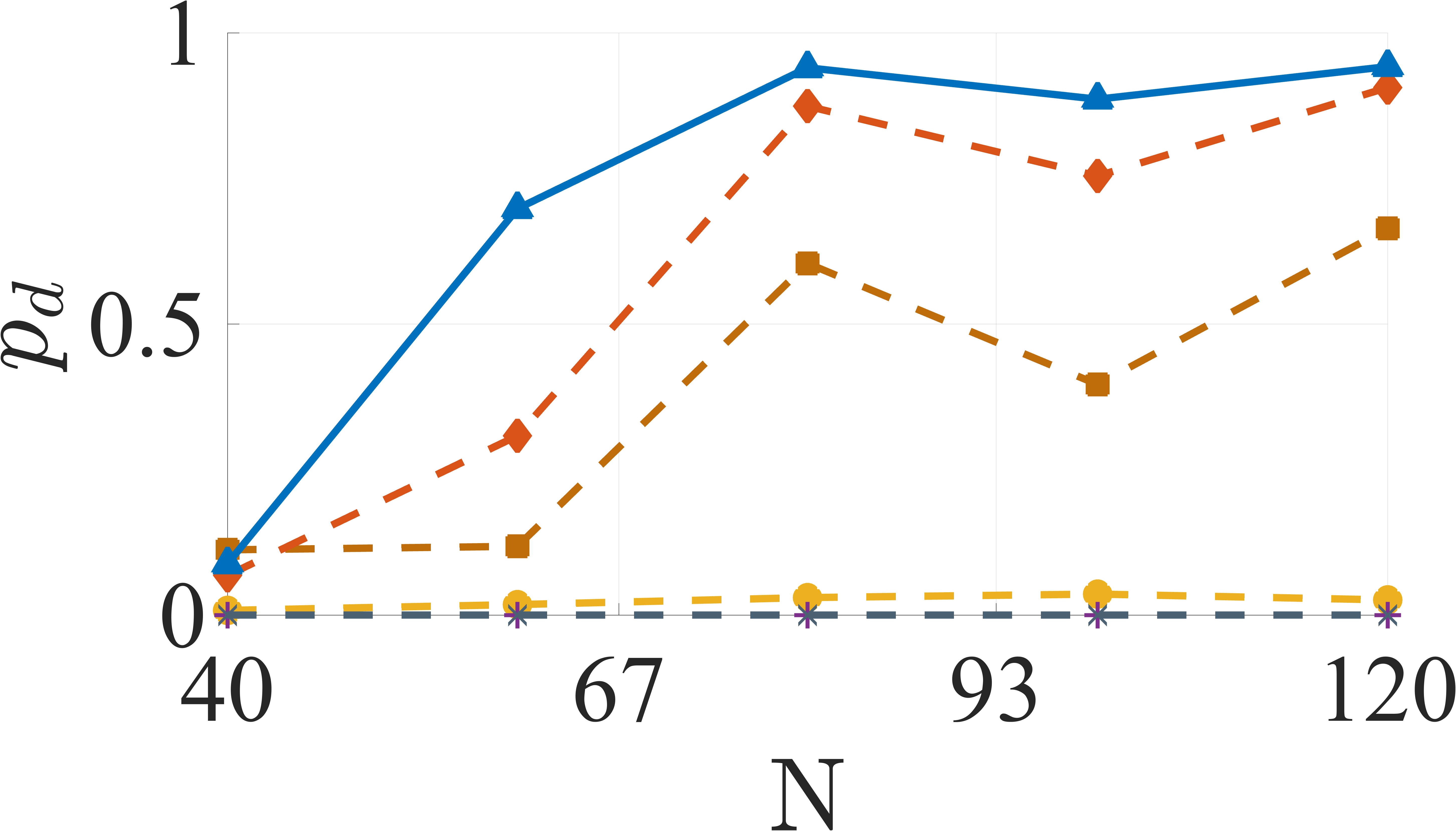}
        \caption{Damped}
        \label{fig: pd_SAMPLE_4_dmp}
    \end{subfigure}
        \caption{Probability of correct model order estimation versus the number of samples $N$, for $M=2$ in (a)-(b), and $M=4$ in (c)-(d). The $\text{SNR}_{\text{dB}}$ is $8$ dB, and the AUC scores are provided in the legend.}
    \label{fig: detection_probability_SAMPLE}
\end{figure}

\begin{figure}[t]
    \centering
           \begin{subfigure}[b]{0.49\columnwidth}
        \centering
        \includegraphics[width=\linewidth]{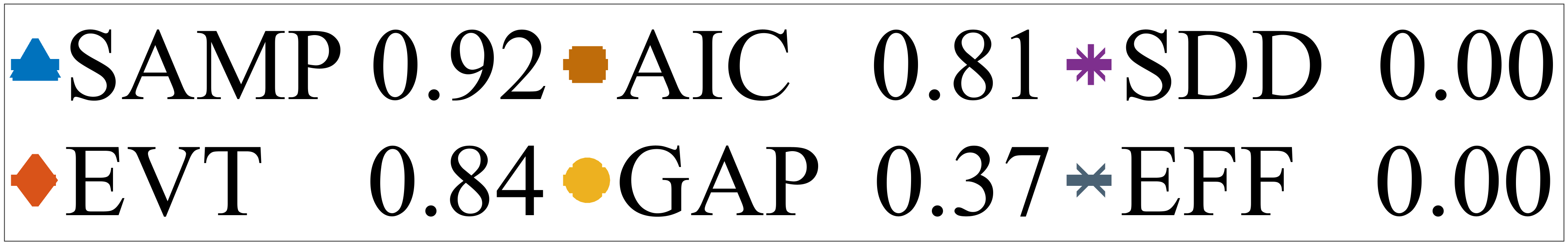}
    \end{subfigure}
    \begin{subfigure}[b]{0.49\columnwidth}
        \centering
        \includegraphics[width=\linewidth]{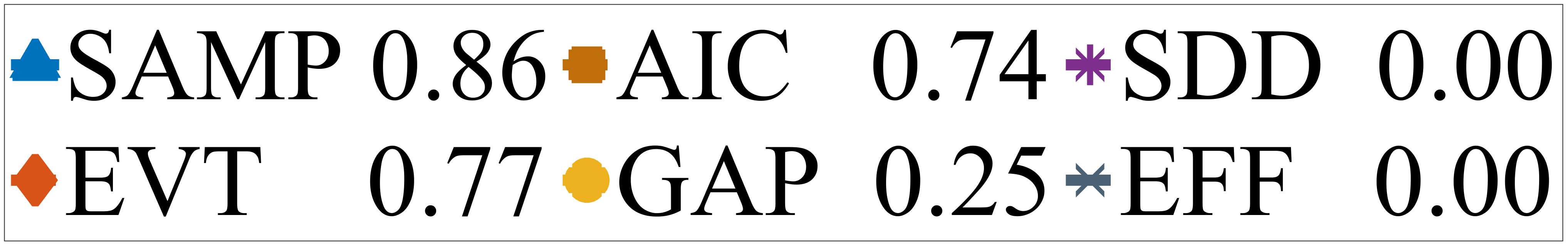}
    \end{subfigure}
    
    \begin{subfigure}[b]{0.49\columnwidth}
        \centering
        \includegraphics[width=\linewidth]{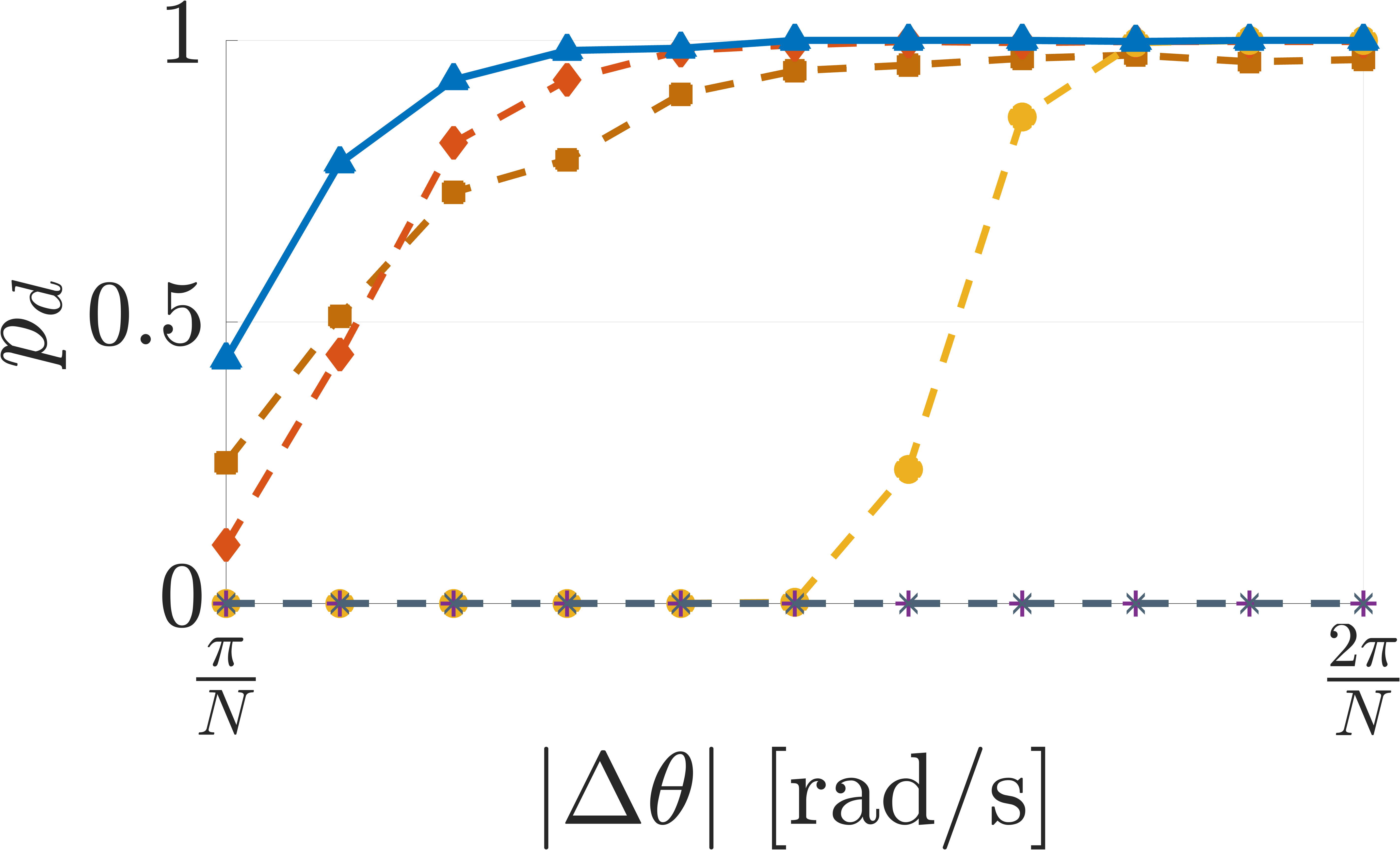}
        \caption{Undamped}
        \label{fig: pd_DIFF_2_undmp}
    \end{subfigure}
    \begin{subfigure}[b]{0.49\columnwidth}
        \centering
        \includegraphics[width=\linewidth] {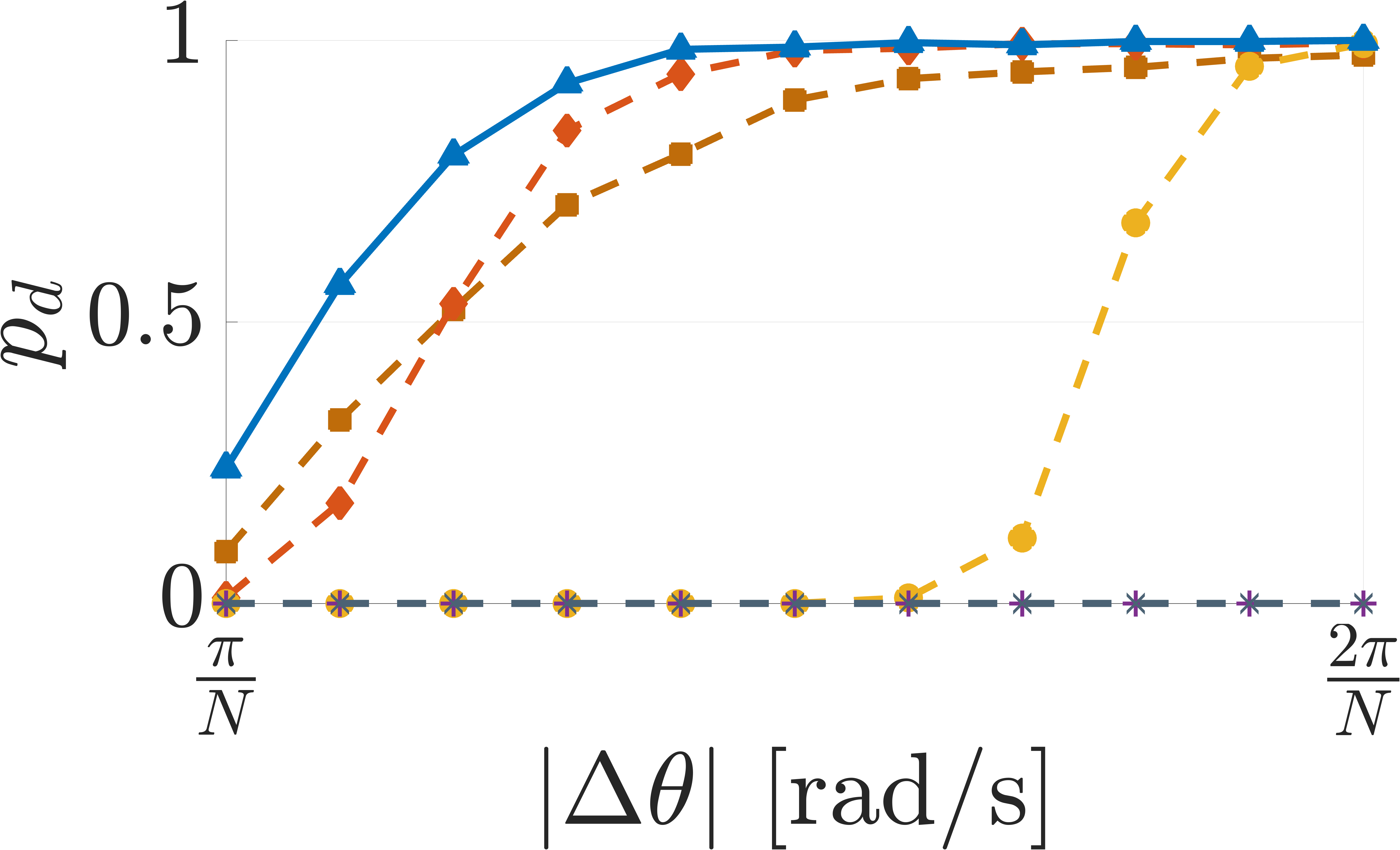}
        \caption{Damped}
        \label{fig: pd_DIFF_2_dmp}
    \end{subfigure}
\hfill
    \begin{subfigure}[b]{0.49\columnwidth}
        \centering
        \includegraphics[width=\linewidth]{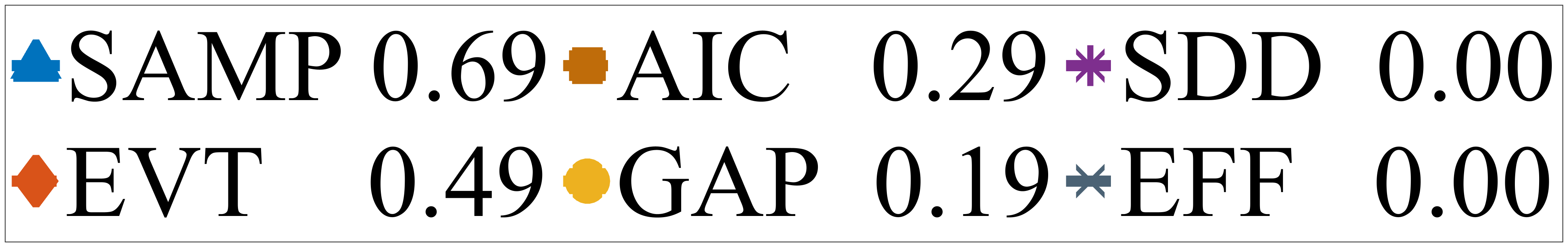}
    \end{subfigure}
    \begin{subfigure}[b]{0.49\columnwidth}
        \centering
        \includegraphics[width=\linewidth]{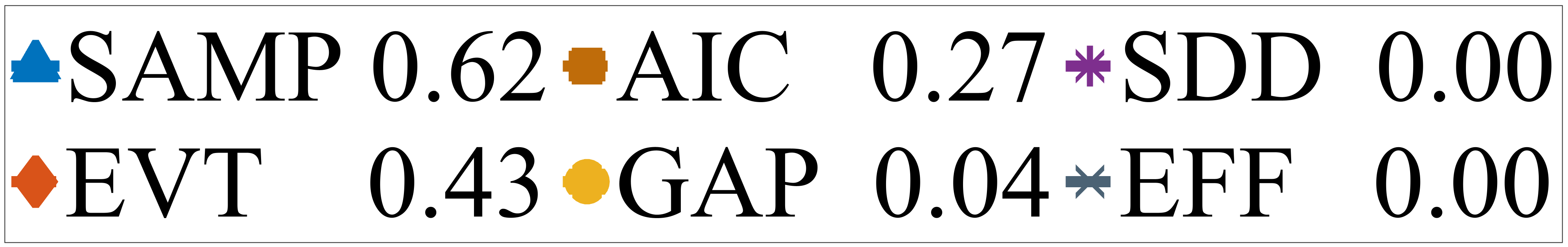}
    \end{subfigure}
    
    \begin{subfigure}[b]{0.49\columnwidth}
        \centering
        \includegraphics[width=\linewidth]{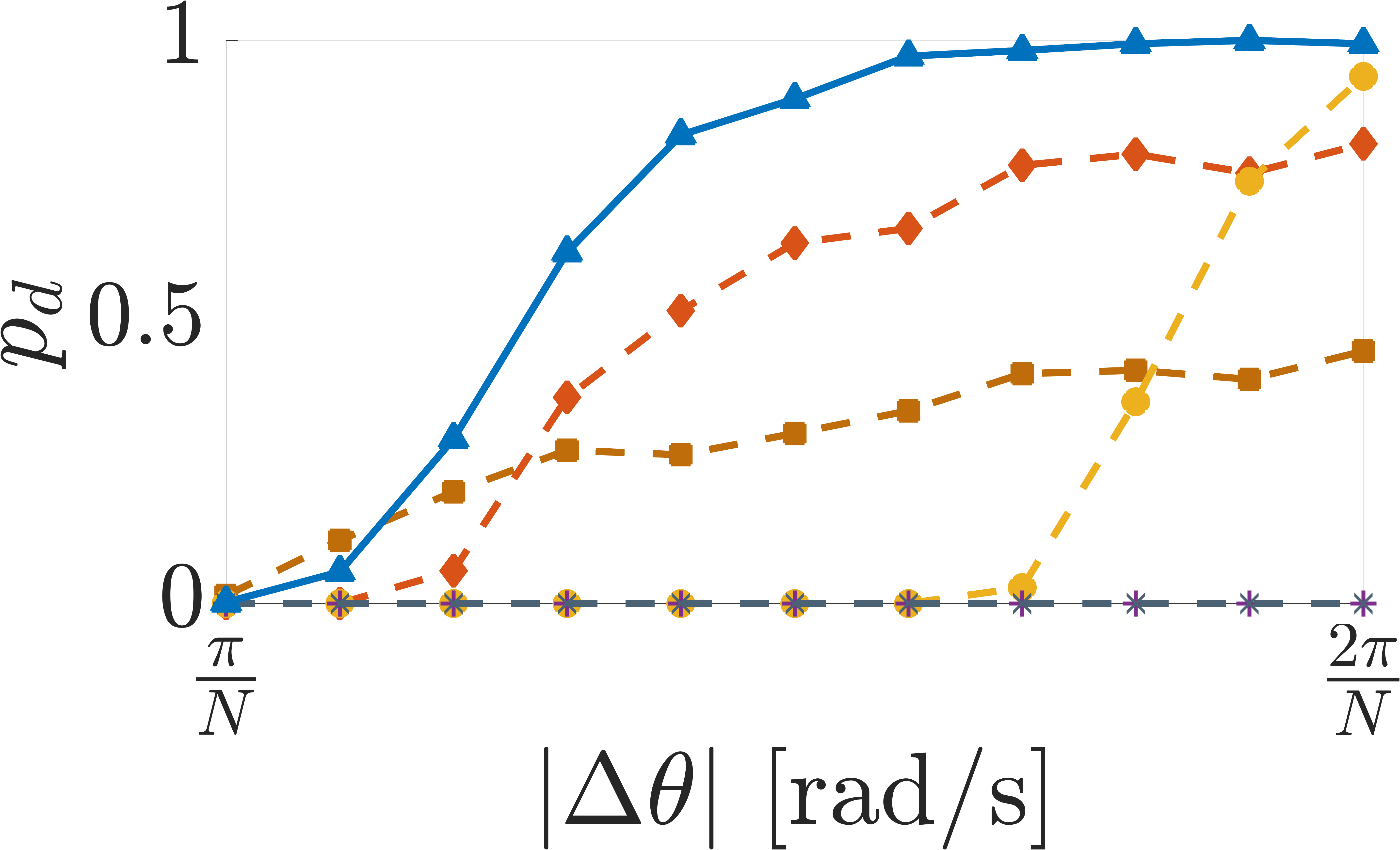}
        \caption{Undamped}
        \label{fig: pd_DIFF_4_undmp}
    \end{subfigure}
    \begin{subfigure}[b]{0.49\columnwidth}
        \centering
        \includegraphics[width=\linewidth] {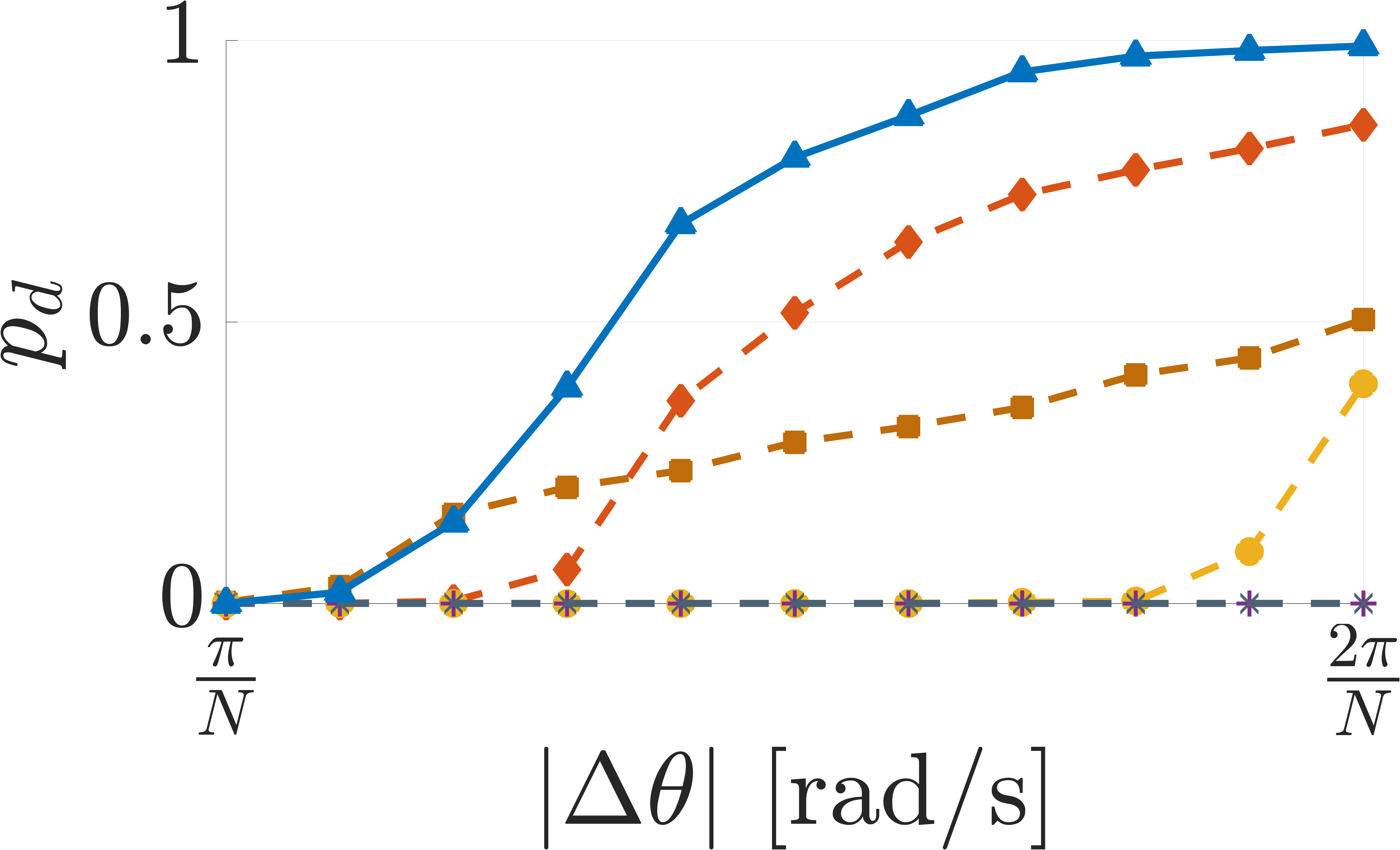}
        \caption{Damped}
        \label{fig: pd_DIFF_4_dmp}
    \end{subfigure}
        \caption{Probability of correct model order estimation versus the frequency separation, for $M=2$ in (a)-(b), and $M=4$ in (c)-(d). AUC scores are provided in the legend.}
    \label{fig: detection_probability_DIFF}
\end{figure}

\begin{table}[t]
    \centering
    \caption{AUC comparison of the different methods under normal and bi-normal noise distributions, for the undamped (undmp) and damped (dmp) cases.}
    \resizebox{0.8\columnwidth}{!}{%
    \begin{tabular}{l|cc|cc}
        \toprule
        \multirow{2}{*}{\textbf{Method}} &
        \multicolumn{2}{c|}{\textbf{Normal}} &
        \multicolumn{2}{c}{\textbf{Bi-normal}} \\
        \cmidrule(lr){2-3} \cmidrule(lr){4-5}
        & \textbf{undmp} & \textbf{dmp} & \textbf{undmp} & \textbf{dmp} \\
        \midrule
        SAMP & $\mathbf{0.82}$ & $\mathbf{0.80}$ & $\mathbf{0.78}$ & $\mathbf{0.76}$ \\
        EVT  & $0.65$ & $0.62$ & $0.56$ & $0.53$ \\
        AIC  & $0.60$ & $0.59$ & $0.50$ & $0.49$ \\
        GAP  & $0.51$ & $0.46$ & $0.52$ & $0.46$ \\
        SDD  & $0.00$ & $0.00$ & $0.00$ & $0.00$ \\
        EFF  & $0.00$ & $0.00$ & $0.00$ & $0.00$ \\
        \bottomrule
    \end{tabular}
    }
    \label{table:auc_comparison}
\end{table}

\subsection{Parameter estimation}

We evaluate the empirical bias and the Root Mean Squared Error (RMSE) of an estimate $\hat{\eta}$ of a parameter of interest $\eta$ which are defined by:
\begin{equation*}
    \text{bias}(\hat{\eta}) =  \frac{1}{N_{\text{exp}}} \sum_{i=1}^{N_{\text{exp}}} \hat{\eta}_i - \eta, \quad
    \text{RMSE}(\hat{\eta}) = \sqrt{\frac{1}{N_{\text{exp}}}\sum_{i=1}^{N_{\text{exp}}} (\hat{\eta}_i - \eta)^2}
\end{equation*}\\
where $\{\hat{\eta}_i\}_{i=1}^{N_{\text{exp}}}$ is the estimate of $\eta$ in the $i$-th Monte-Carlo trial, and $N_{\text{exp}}=500$ is the number of independent Monte-Carlo trials. In cases of model order mismatch in the detection step, we assign a fixed penalty to the corresponding trial. This penalty is proportional to the range of $\eta$, and scaled according to the X-axis of the current evaluation. For example, if the RMSE is evaluated versus the number of samples $N$, the penalty is normalized by $N$. This ensures fair penalization across different ranges of the evaluation variable.

Fig. \ref{fig: freqs_SNR_RMSE} displays the average RMSE in dB of the estimates $\{\hat \theta_i\}_{i=1}^M$ versus the $\text{SNR}_{\text{dB}}$.
The complementary figures of the average empirical bias are presented in Fig. \ref{fig: freqs_SNR_BIAS} in Appendix \ref{app: Additional Numerical Results}.
We include the Cramer-Rao lower bound (CRB) as a reference for evaluating the average variance of $\{\hat \theta_i\}_{i=1}^M$ (see \cite{stoica2005spectral, lang1980frequency} for details). All the competing methods show nearly unbiased estimates for $\text{SNR}_{\text{dB}}$ values above a certain threshold, as shown in Fig. \ref{fig: freqs_SNR_BIAS} in Appendix \ref{app: Additional Numerical Results}. Therefore, following the approach presented in \cite{del1996comparison}, we use the CRB for an unbiased estimator as a reference for the competing methods. For unbiased estimators, the RMSE simplifies to the square root of the variance, since \( \text{RMSE}^2(\hat{\theta}_1) = \text{var}(\hat{\theta}_1) + \text{bias}^2(\hat{\theta}_1) \). 

Figures \ref{fig:2_freqs_RMSE_undamped_SNR}-\ref{fig:4_freqs_RMSE_damped_SNR}, shows that the SAMP method achieves the lowest or comparable RMSE for SNR below a threshold. 
For $M=2$, SAMP and EVT are the two best performers, with SAMP showing slightly better results for SNR below $6$ dB.
For $M=4$, the separation between methods is larger: SAMP consistently yields the lowest RMSE up to $12$ dB, whereas EVT, AIC and GAP exhibit noticeably higher errors, especially at moderate SNR values. 
The SDD, and EFF exhibit significantly higher RMSE across all regimes.
Overall, damping in Fig. \ref{fig:2_freqs_RMSE_damped_SNR} and \ref{fig:4_freqs_RMSE_damped_SNR}, degrades performance but does not change the ranking; SAMP is the the most accurate for $M=4$ and is comparable to EVT for $M=2$.

The complementary results of the average RMSE in dB and the absolute value of the average empirical bias of the estimates $\{\hat \theta_i\}_{i=1}^M$ versus the number of samples $N$, and the frequencies separation show similar trends and are presented in Fig. \ref{fig: freqs_SAMPLES_RMSE_BIAS} and Fig. \ref{fig: freqs_DIFF_RMSE_BIAS}, respectively, in Appendix \ref{app: Additional Numerical Results}.



\begin{figure}[t]
    \centering
    \begin{subfigure}[b]{0.49\columnwidth}
        \centering
        \includegraphics[width=\linewidth]{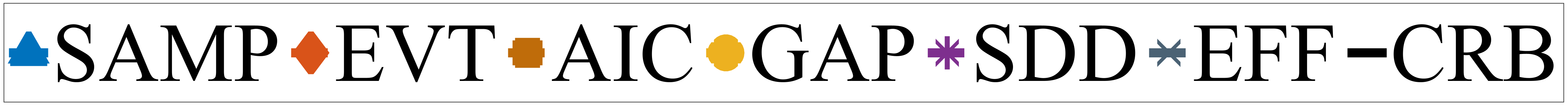}
    \end{subfigure}
    \begin{subfigure}[b]{0.49\columnwidth}
        \centering
        \includegraphics[width=\linewidth]{New_Figures/legend_fig.png}
    \end{subfigure}
    
    \begin{subfigure}[b]{0.49\columnwidth}
        \centering
        \includegraphics[width=\linewidth]{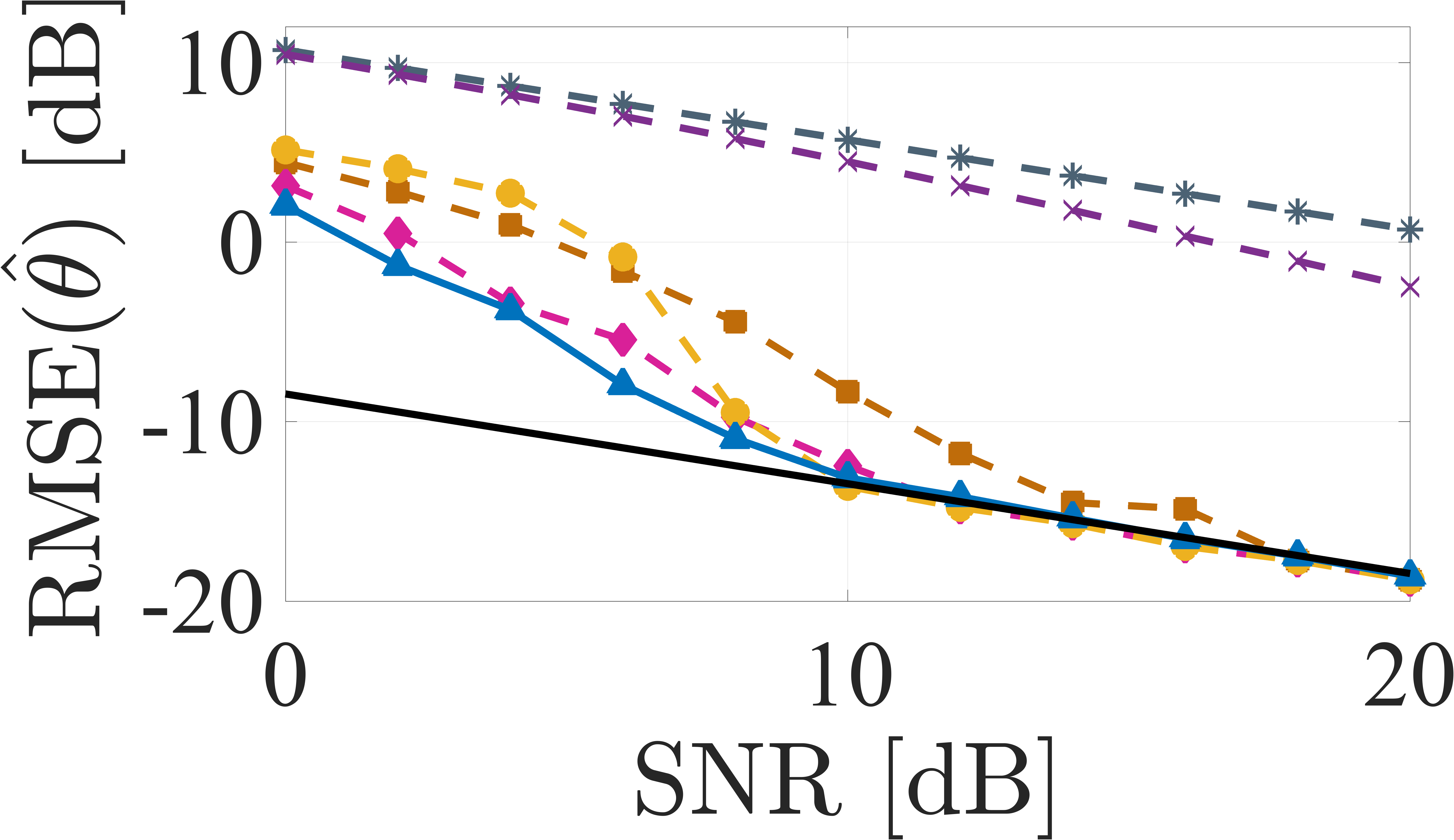}
        \caption{Undamped}
        \label{fig:2_freqs_RMSE_undamped_SNR}
    \end{subfigure}
    \begin{subfigure}[b]{0.49\columnwidth}
        \centering
        \includegraphics[width=\linewidth] {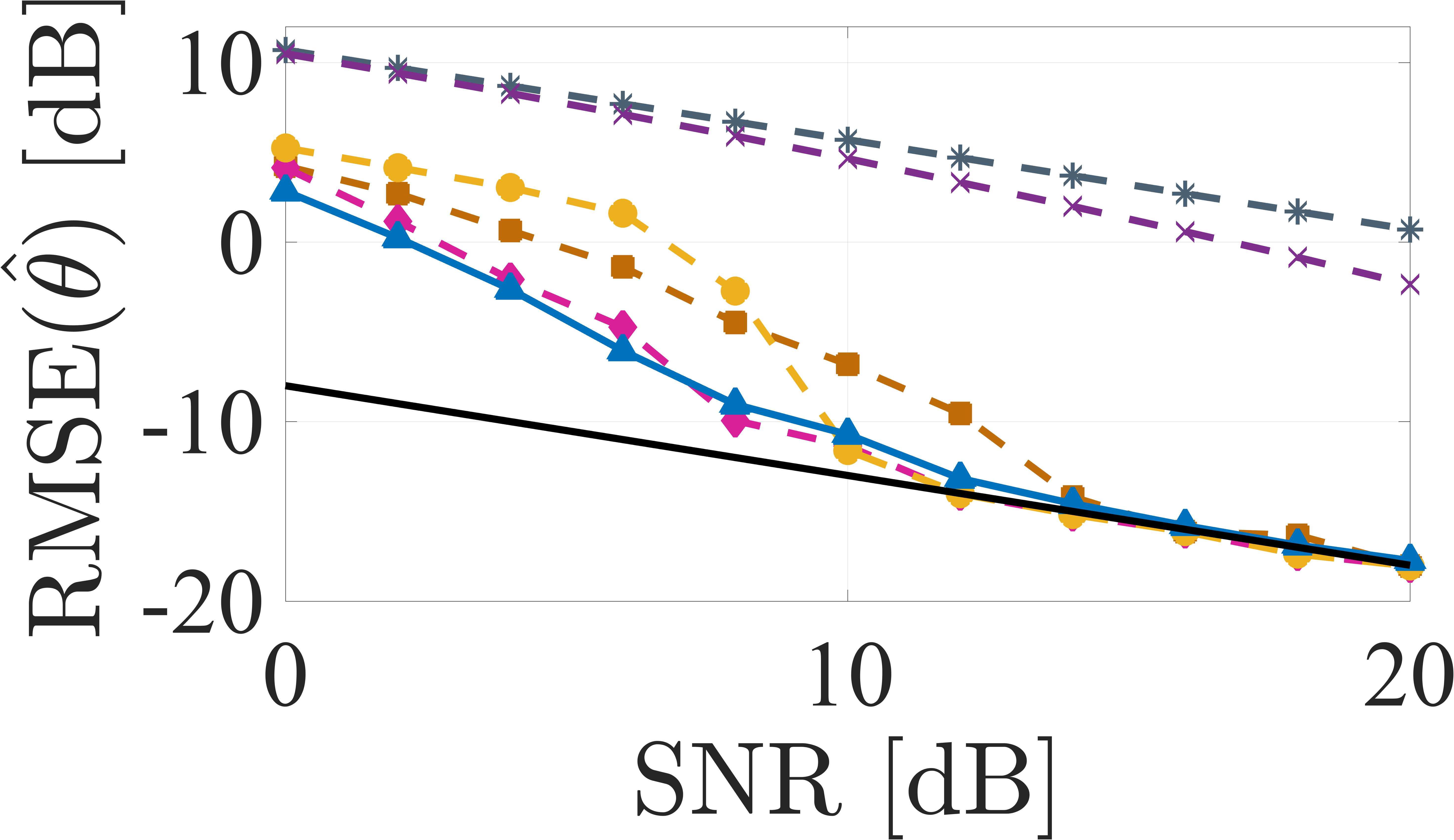}
        \caption{Damped}
        \label{fig:2_freqs_RMSE_damped_SNR}
    \end{subfigure}

    \hfill
\begin{subfigure}[b]{0.49\columnwidth}
        \centering
        \includegraphics[width=\linewidth]{New_Figures/legend_fig.png}
    \end{subfigure}
    \begin{subfigure}[b]{0.49\columnwidth}
        \centering
        \includegraphics[width=\linewidth]{New_Figures/legend_fig.png}
    \end{subfigure}
    
    \begin{subfigure}[b]{0.49\columnwidth}
        \centering
        \includegraphics[width=\linewidth]{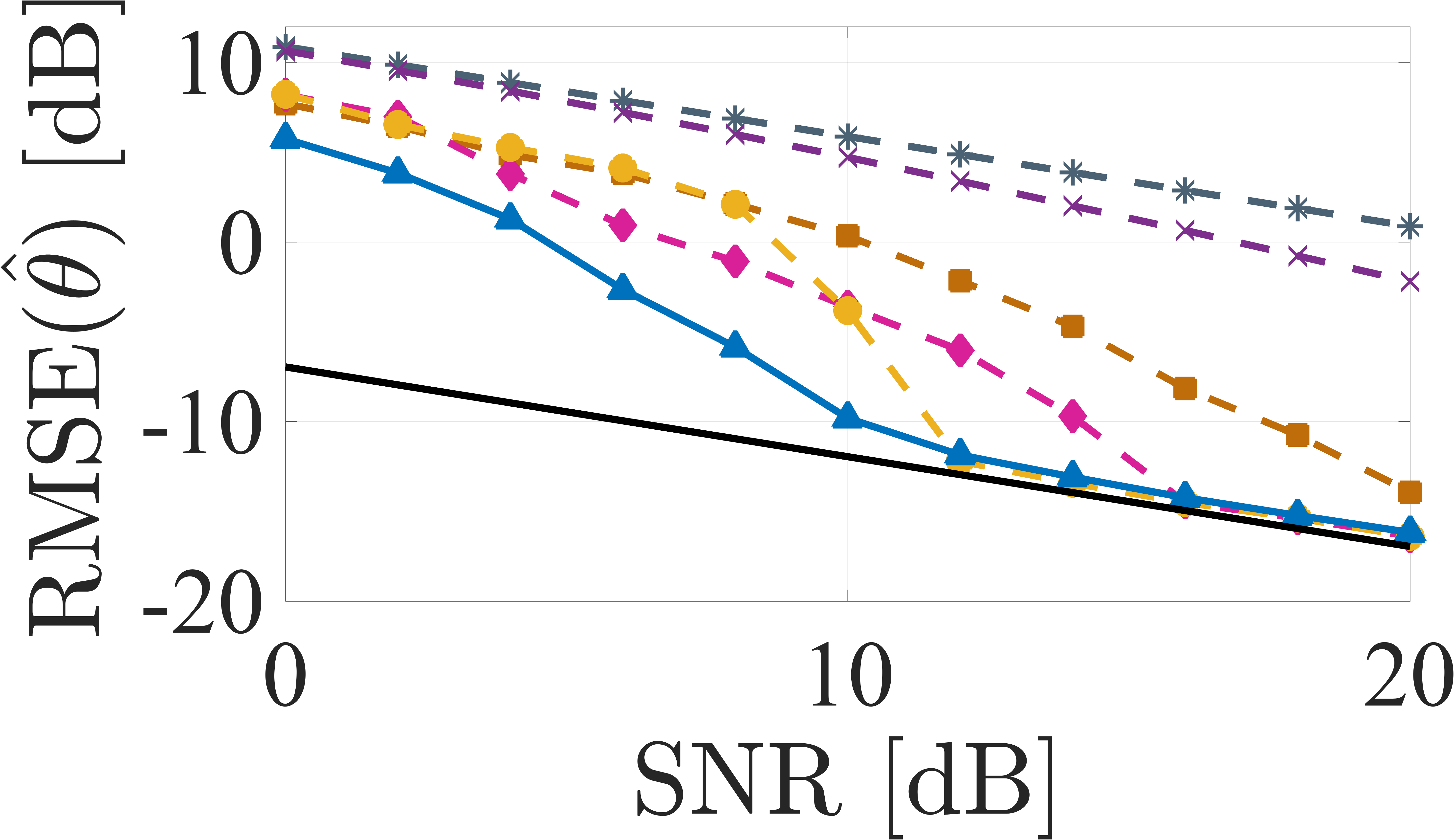}
        \caption{Undamped}
        \label{fig:4_freqs_RMSE_undamped_SNR}
    \end{subfigure}
    \begin{subfigure}[b]{0.49\columnwidth}
        \centering
        \includegraphics[width=\linewidth] {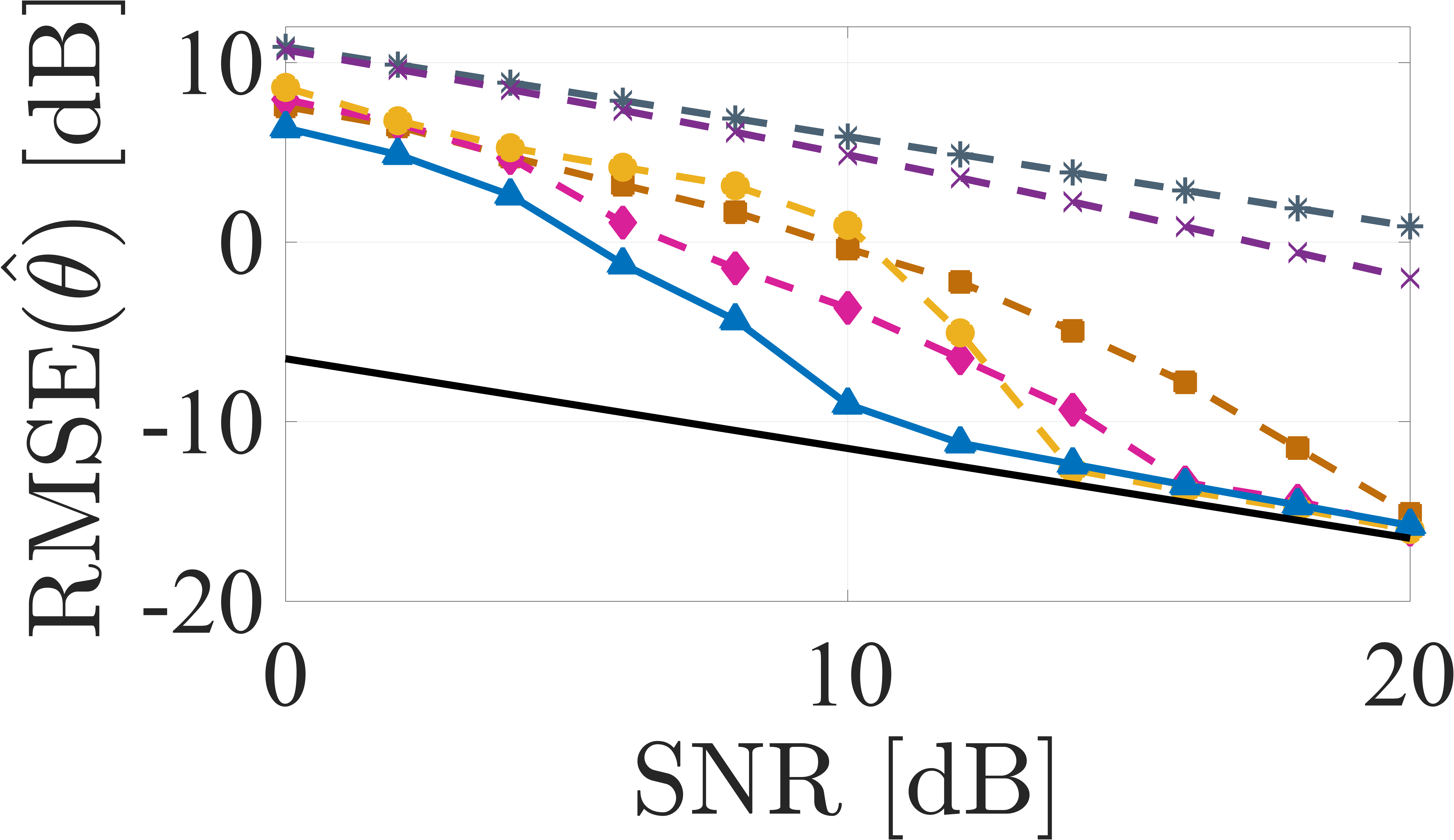}
        \caption{Damped}
        \label{fig:4_freqs_RMSE_damped_SNR}
    \end{subfigure}
    \caption{Average RMSE of $\{ \hat\theta_i \}_{i=1}^M$ versus the $\text{SNR}_{\text{dB}}$, for $M=2$ in (a)-(b), and $M=4$ in (c)-(d).}
    \label{fig: freqs_SNR_RMSE}
\end{figure}

\subsection{Computational Load} 
To compare the computational cost of the proposed SAMP Algorithm with the baseline methods, we present the computation time as a function of the signal length in Fig. \ref{fig:MDL_AIC_MF_come_time_compare}. The results show that the SAMP method achieves lower computational time than the EVT and AIC methods, but higher than the GAP and SDD methods. This behavior is expected: ITE methods require evaluating the likelihood function for all candidate model orders $1 \leq \widehat{M} \leq L$, while spectral thresholding typically involve only $\mathcal{O}(L)$ operations, which is fewer than those required by SAMP, as detailed in Section \ref{sebsec: comp complex}.


\begin{figure}[t]
    \centering
    \begin{subfigure}[b]{0.49\columnwidth}
        \centering
        \includegraphics[width=\linewidth]{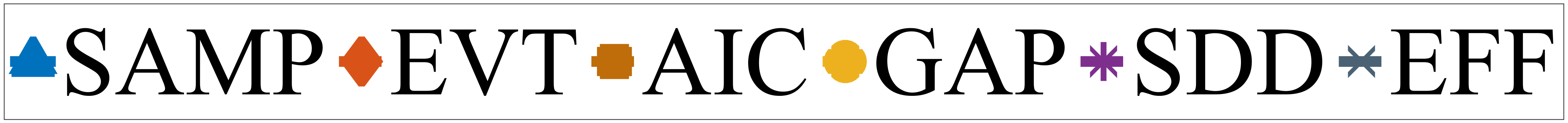}
    \end{subfigure}
    \begin{subfigure}[b]{0.49\columnwidth}
        \centering
        \includegraphics[width=\linewidth]{New_Figures/legend_fig_noCRB.png}
    \end{subfigure}
    
    \begin{subfigure}[b]{0.49\columnwidth}
        \centering
    \includegraphics[width=\linewidth]{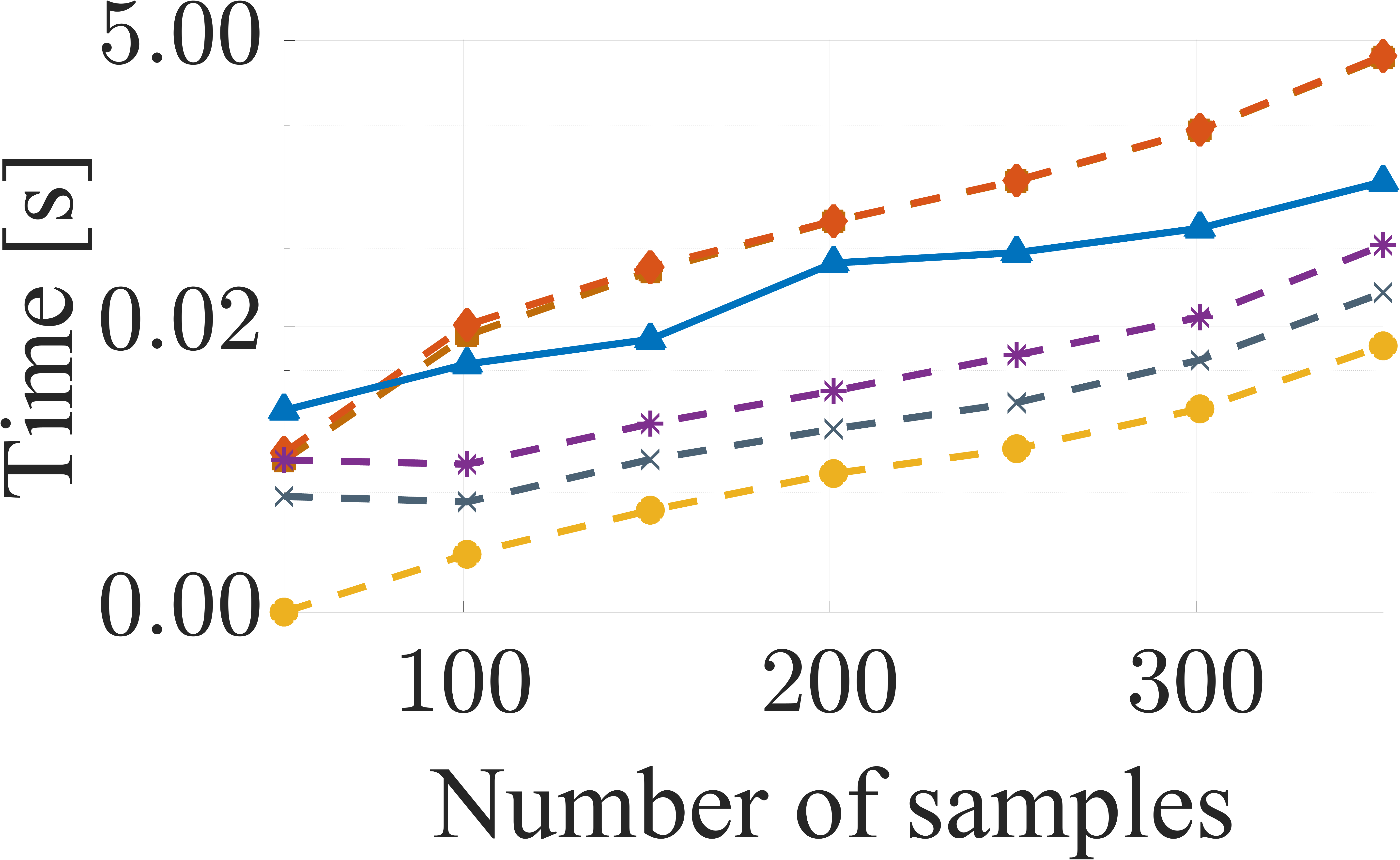}
    \subcaption{}\label{fig:MDL_AIC_MF_come_time_compare_undamped}
    \end{subfigure}
    \begin{subfigure}[b]{0.49\columnwidth}
        \centering
        \includegraphics[width=\linewidth]{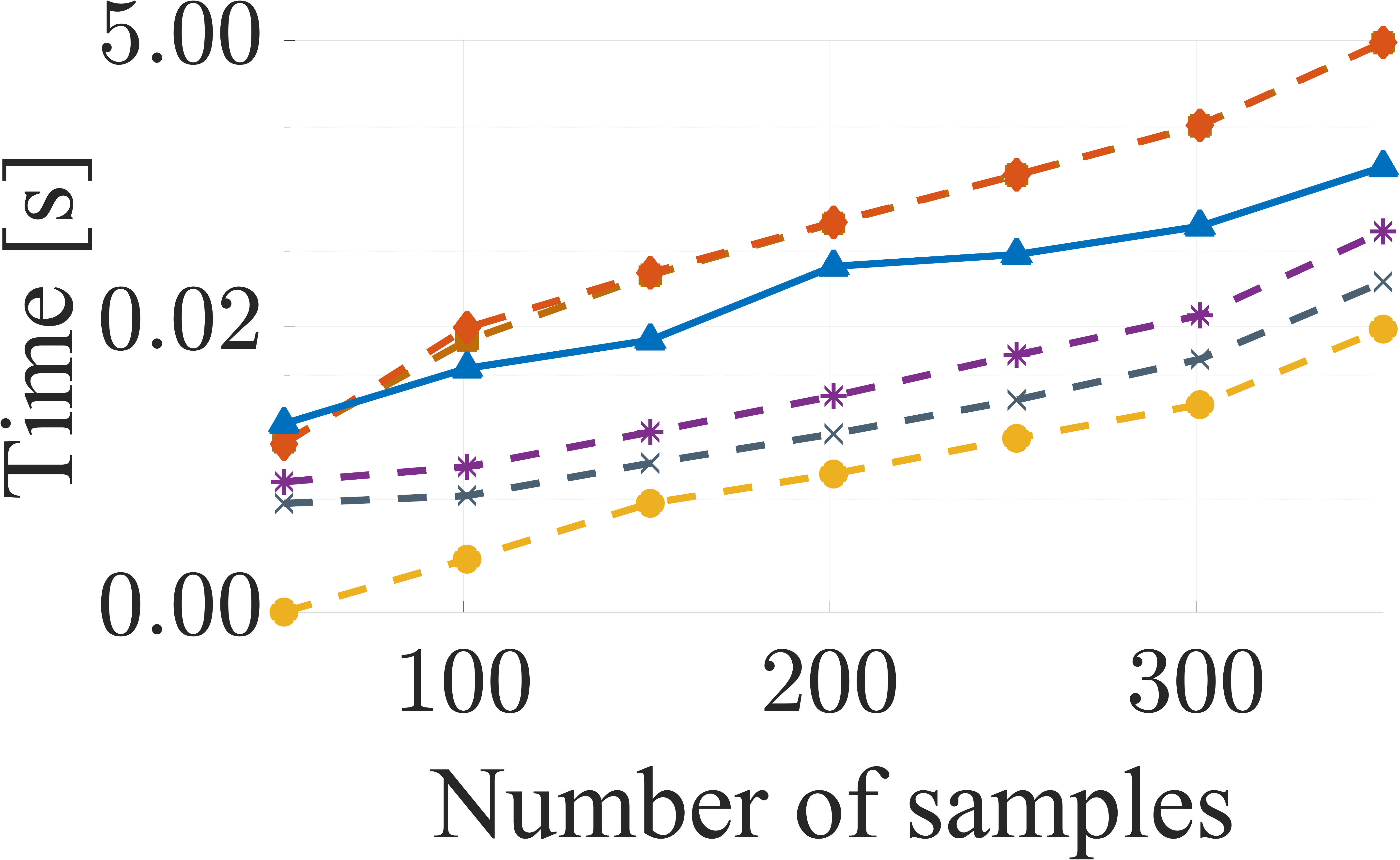}
        \subcaption{}\label{fig:MDL_AIC_MF_come_time_compare_damped}
    \end{subfigure}
    \caption{Computational time (in log scale) versus number of samples for SAMP and baseline methods. $M=2$ in (a), and $M=4$ in (b). The $\text{SNR}_{\text{dB}}$ is $10$ dB.}
      \label{fig:MDL_AIC_MF_come_time_compare}
\end{figure}

\subsection{Amplitudes Estimation}

In \eqref{eq:new_amps} in Section \ref{sec: Proposed Algorithm}, we presented a computationally efficient approach for estimating the noisy amplitudes. We evaluate the performance of our method in comparison to the standard amplitude estimation technique presented in \eqref{eq:MP coeff estimation}. 
We set $r=L$ to concentrate only on the amplitude estimation,  circumventing any possible overestimation or underestimation of the model order by any of the methods.

Fig. \ref{fig:amps compare} displays the amplitude estimation error for each method as a function of the computation time for different signal lengths in the case of $M=4$. The error is defined as the sum of the RMSE between each of the true amplitudes and the closest two estimated amplitudes. We see in the figure that the proposed method obtains the same RMSE level as the standard method (except for the first simulated point)  in considerably lower computation time. The complementary case of $M=2$ shows similar trends and is omitted for brevity.

\begin{figure}[t]
    \centering
    \begin{subfigure}[b]{0.49\columnwidth}
        \centering
        \includegraphics[width=\linewidth]{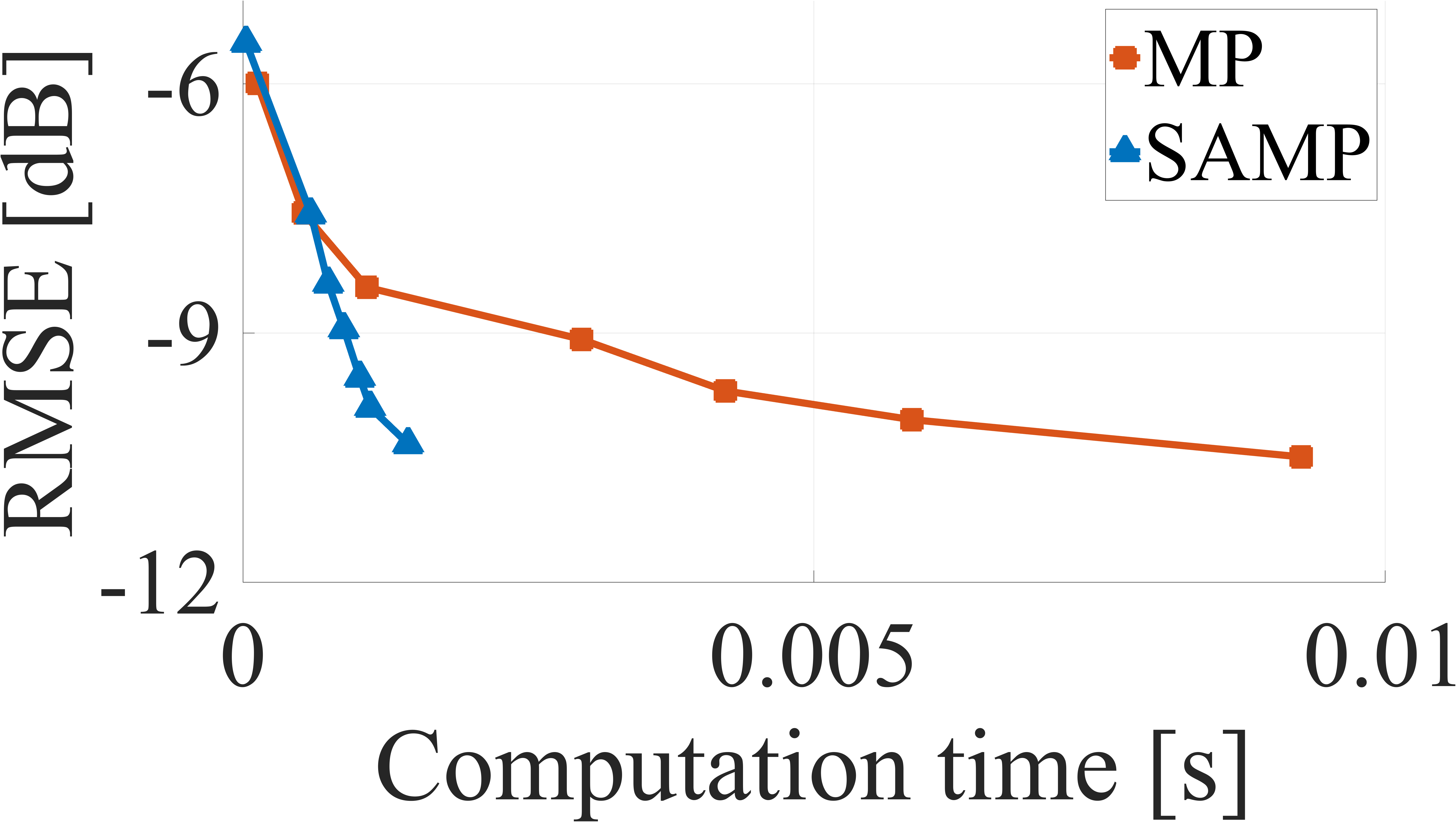}
        \caption{Undamped}\label{fig:amps compare 2 undamped}
    \end{subfigure}
    \begin{subfigure}[b]{0.49\columnwidth}
        \centering
        \includegraphics[width=\linewidth]{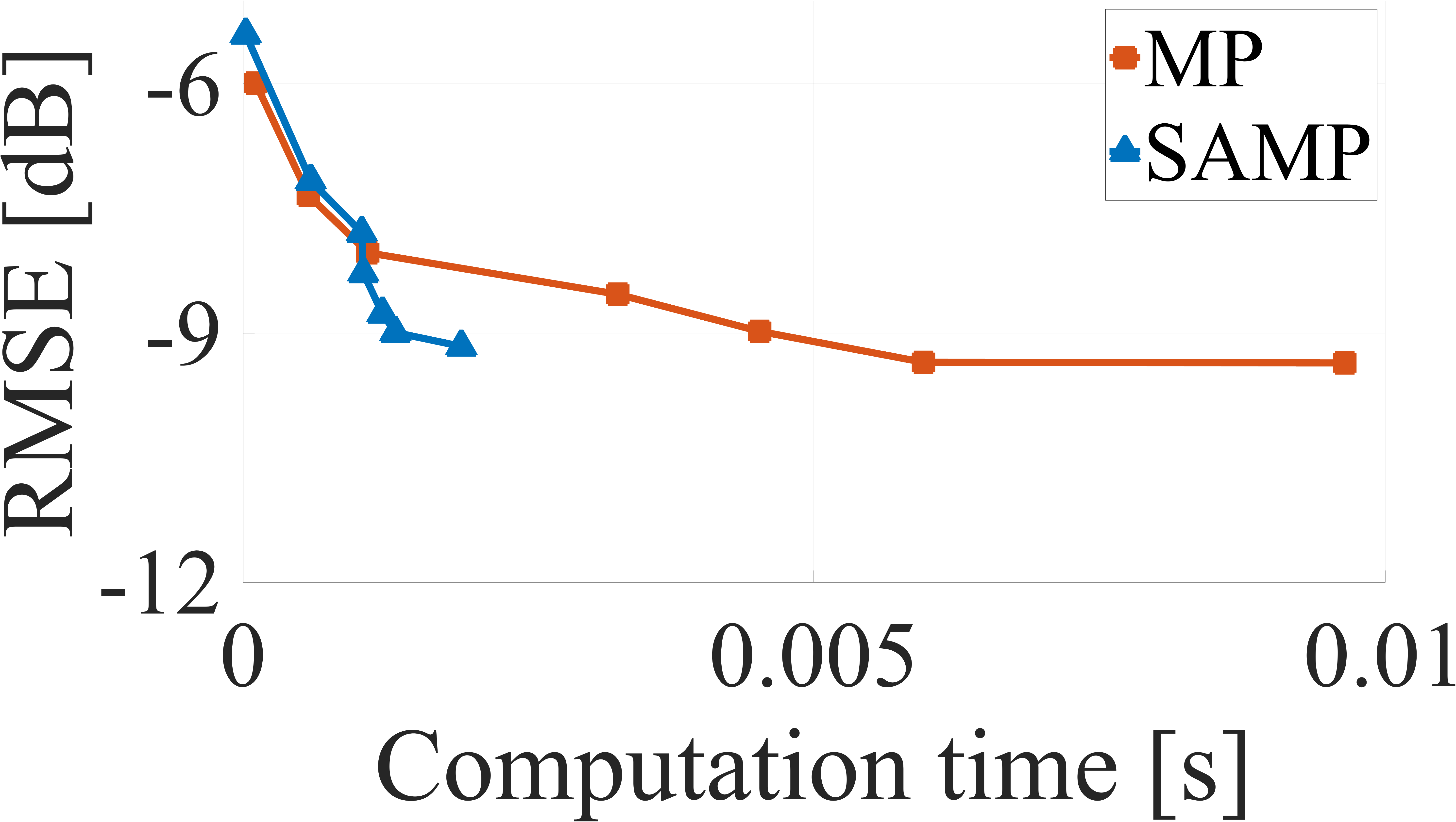}
        \caption{Damped}\label{fig:amps compare 2 damped}
    \end{subfigure}
    
      \caption{Computational time comparison between the proposed amplitude estimation method, and the standard amplitude estimation method. The frequencies spacing is the Rayleigh limit, and the $\text{SNR}_{\text{dB}}$ is $10$ dB.}
      \label{fig:amps compare}
\end{figure}

{Conclusion} \label{sec: Conclusion}
We addressed the problem of detecting the number of complex exponentials and estimating their parameters from noisy measurements using the MP method. We introduced the concept of MP \textit{modes}, through an extension of the MP theorem to noisy conditions, revealing their informative structure. 
Specifically, we showed that the signal-related modes can be expressed as an obscured Vandermonde matrix, $\mathbf{Z}_L + \mathbf{E}_L$, and provided a detailed characterization of the noise term $\mathbf{E}_L$. 

By leveraging the spectral structure of the signal-related modes and utilizing the noise-term characterization, we showed that temporal information can be effectively extracted, significantly improving model order selection, which often solely relies on singular values and thresholding. Utilizing the MP modes, we proposed the SAMP algorithm for detecting and estimating the signal-related components, and a computationally efficient method for estimating the signal's amplitudes.
A key advantage of using the MP modes lies in its ability to capture the temporal information encoded in the relationship between $\mathbf{Y}_0$ and $\mathbf{Y}_1$. In contrast, singular values are computed solely from $\mathbf{Y}_0$ and therefore miss this temporal information.

Our simulations indicate that the proposed method outperforms the classical thresholding-based competing methods by a large margin in detecting the model order, and nearly achieves the CRB for a SNR above a threshold, in estimating the signal's parameters. When compared to \textit{coupled detection-estimation} approaches, such as EVT and AIC, the proposed method achieves comparable or better results with significantly lower computational time. Additionally, unlike EVT and AIC, the proposed method does not require prior knowledge of the noise distribution and was shown to be much less sensitive to noise model mismatch. 

We postulate that our approach is general, and it is not restricted to the MP method only. For example, several related methods generate Hankel matrices similar to the MP method and use a truncated SVD step to reduce noise. These include the SVD-based Prony method \cite{kumaresan1982estimating}, State Space method (SSM) \cite{kung1983state}, Dynamic Mode Decomposition (DMD) \cite{schmid2010dynamic}, and Singular Spectrum Analysis (SSA) \cite{broomhead1986extracting,elsner1996singular}. 

In future research, we will extend these findings to multivariate signals and explore the concept of \textit{modes} in other super-resolution methods, such as ESPRIT \cite{roy1989esprit}.


\printbibliography
\end{refsection}
\newpage

\twocolumn[
\begin{center}

    {\huge Structure-Aware Matrix Pencil Method\\  \vspace{0.2cm}
    \textbf{Supplementary Materials}}
    
    \vspace{0.5cm}
    
    {Yehonatan-Itay Segman, Alon Amar, and Ronen Talmon,~\IEEEmembership{Senior Member,~IEEE}  } 
    \vspace{0.5cm}
    
\end{center}
]

\setcounter{page}{1}
{
\appendices
\begin{refsection}
\section{Simple eigenvector perturbation}\label{app:simple eigenvector pert}
\renewcommand\thefootnote{}
\footnote{
The authors are with the Viterbi Faculty of Electrical and Computer Engineering, Technion -- Israel Institute of Technology, Haifa 32000, Israel. (Corresponding author: Yehonatan-Itay Segman, email: yehonatans@campus.technion.ac.il).}
\renewcommand\thefootnote{\arabic{footnote}}
\addtocounter{footnote}{-1}
In this appendix, we derive a first-order approximation for a simple eigenvector under perturbation. Simple eigenvectors, associated with simple eigenvalues, are known to display stable behavior under perturbations \cite{magnus1985differentiating, greenbaum2020first}. Following \cite{magnus1985differentiating}, let $\mathbf{A} \in \mathbb{C}^{n \times n}$ be a complex square matrix, and let $\mathbf{v}_i$ denote a simple right eigenvector of $\mathbf{A}$ corresponding to a simple eigenvalue $\lambda_i$.
For a perturbed matrix $\mathbf{A} + \delta \mathbf{A}$, the author established the existence of a complex function $\lambda$ and a complex vector function $\mathbf{v}$ defined for all $\delta \mathbf{A}$ in a neighborhood of $\mathbf{A}$, satisfying $\lambda(\mathbf{A}) = \lambda_i$, $\mathbf{v}(\mathbf{A}) = \mathbf{v}_i$, and $(\mathbf{A} + \delta \mathbf{A}) \mathbf{v} = \lambda \mathbf{v}$. These functions are infinitely differentiable within this neighborhood, with their differentials at $\mathbf{A}$ also provided. While the proof is clear and elegant, we seek an alternative proof that will provide additional insights into the neighborhood in which eigenvector perturbation can be characterized.

Our approach follows the technique outlined in \cite[Theorem 4.1]{deif1995rigorous}, which assumes that the matrix $\mathbf{A}$ is non-defective, forming an eigenvector basis for $\mathbb{C}^{n}$. Let $\mathbf{V}$ and $\mathbf{U}$ denote the matrices of its right and left eigenvectors, respectively, where \(\mathbf{u}_k^{\mathrm{H}}\) is the \(k\)-th row of \(\mathbf{U}^{\mathrm{H}}\) and \(\mathbf{v}_i\) is the \(i\)-th column of \(\mathbf{V}\). The following lemma characterizes the approximate behavior of \(\mathbf{v}_i\) under perturbation, provided that a delicate balance between perturbation and eigenvalue separation is maintained.

\begin{lemma}\label{lemma:first order approximation of the vector of coefficients}
Let $\mathbf{v}_i$ be a simple right eigenvector of a non-defective matrix $\mathbf{A} \in \mathbb{C} ^{n \times n}$, and let $\lambda_i$ be its associated to simple eigenvalue. Then there is a unique right eigenvector $\widetilde{\mathbf{v}}_i = \mathbf{v}_i + \delta \mathbf{v}_i$ of the matrix $\mathbf{A} + \delta\mathbf{A}$ and a corresponding eigenvalue $\Tilde{\lambda}_i = \lambda_i + \delta\lambda_i$ such that $\delta \lambda_i$ and $\delta \mathbf{v}_i$ can be approximated to first order by
\begin{align}
\delta \lambda_i &= \frac{\mathbf{u}_i^{\mathrm{H}}\delta \mathbf{A} \mathbf{v}_i}{\mathbf{u}_i^{\mathrm{H}} \mathbf{v}_i}, \\   
\delta \mathbf{v}_i &= \sum\limits_{\substack{k=1\\k \neq i}}^n \frac{\mathbf{u}_k^{\mathrm{H}} \delta \mathbf{A} \mathbf{v}_i}{\mathbf{u}_k^{\mathrm{H}} \mathbf{v}_k(\Tilde{\lambda}_i -\lambda_k)} \mathbf{v}_k.
\end{align} Provided that 
\begin{equation}\label{app_assum:spectral radius}
\rho(\mathbf{D}_i\mathbf{U}^{\mathrm{H}}\delta \mathbf{A}\mathbf{V}) <1,
\end{equation} and
\begin{equation}\label{app_assum:eigenvectors normalization}
\mathbf{u}_i^{\mathrm{H}}\mathbf{v}_i = 1 \; , \; \mathbf{u}_i^{\mathrm{H}}\widetilde{\mathbf{v}}_i = 1,
\end{equation}
where $\rho(\cdot)$ is the spectral radius, and the matrix $\mathbf{D}_i$ is a diagonal matrix defined by
\begin{equation} \label{eq:diagonal matrix Di}
\mathbf{D}_i = \text{diag}\Bigg(\frac{1}{\mathbf{u}_1^{\mathrm{H}} \mathbf{v}_1(\Tilde{\lambda}_i - \lambda_1)}, \cdots, 0_i, \cdots,\frac{1}{\mathbf{u}_n^{\mathrm{H}} \mathbf{v}_n(\Tilde{\lambda}_i - \lambda_n)}\Bigg).
\end{equation}    
\end{lemma}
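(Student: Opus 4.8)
The plan is to exploit the assumption that $\mathbf{A}$ is non-defective, so that its right eigenvectors $\{\mathbf{v}_k\}_{k=1}^n$ form a basis of $\mathbb{C}^n$ and the left and right eigenvectors obey the biorthogonality relation $\mathbf{u}_m^{\mathrm{H}}\mathbf{v}_k = 0$ for $m \neq k$. I would expand the sought perturbed eigenvector in this basis, writing $\widetilde{\mathbf{v}}_i = \mathbf{V}\mathbf{c} = \sum_{k=1}^n c_k \mathbf{v}_k$, and substitute into the perturbed eigenvalue equation $(\mathbf{A}+\delta\mathbf{A})\widetilde{\mathbf{v}}_i = \widetilde{\lambda}_i \widetilde{\mathbf{v}}_i$. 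Using $\mathbf{A}\mathbf{v}_k = \lambda_k \mathbf{v}_k$ and projecting onto each $\mathbf{u}_m^{\mathrm{H}}$ collapses the identity, by biorthogonality, into the scalar relations $c_m(\widetilde{\lambda}_i - \lambda_m)(\mathbf{u}_m^{\mathrm{H}}\mathbf{v}_m) = \mathbf{u}_m^{\mathrm{H}}\delta\mathbf{A}\,\mathbf{V}\mathbf{c}$ for every $m$.

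Next I would separate the $m=i$ equation from the rest. The normalization $\mathbf{u}_i^{\mathrm{H}}\widetilde{\mathbf{v}}_i = 1$ combined with $\mathbf{u}_i^{\mathrm{H}}\mathbf{v}_i=1$ forces $c_i = 1$, and the $m=i$ relation then yields the eigenvalue identity $\delta\lambda_i = \widetilde{\lambda}_i - \lambda_i = \mathbf{u}_i^{\mathrm{H}}\delta\mathbf{A}\,\widetilde{\mathbf{v}}_i$, which to first order (replacing $\widetilde{\mathbf{v}}_i$ by $\mathbf{v}_i$) gives the claimed expression. For the remaining coordinates $m \neq i$, dividing by $(\widetilde{\lambda}_i-\lambda_m)(\mathbf{u}_m^{\mathrm{H}}\mathbf{v}_m)$, which is nonzero by simplicity of $\lambda_i$ and the eigenvalue separation, recasts the system as the fixed-point equation $\mathbf{c} = \mathbf{e}_i + \mathbf{M}\mathbf{c}$, where $\mathbf{M} = \mathbf{D}_i\mathbf{U}^{\mathrm{H}}\delta\mathbf{A}\mathbf{V}$. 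Here the vanishing $i$-th diagonal entry of $\mathbf{D}_i$ is exactly what renders the $i$-th row consistent with $c_i = 1$.

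I would then invoke the hypothesis $\rho(\mathbf{M}) < 1$: it guarantees that $\mathbf{I}-\mathbf{M}$ is invertible and that the Neumann series converges, so the fixed point $\mathbf{c} = (\mathbf{I}-\mathbf{M})^{-1}\mathbf{e}_i = \sum_{j\geq 0}\mathbf{M}^j\mathbf{e}_i$ exists and is unique. Reading off the leading term $\mathbf{M}\mathbf{e}_i$ of this series gives, for $m \neq i$, the approximation $c_m \approx (\mathbf{D}_i\mathbf{U}^{\mathrm{H}}\delta\mathbf{A}\mathbf{V})_{mi} = \mathbf{u}_m^{\mathrm{H}}\delta\mathbf{A}\mathbf{v}_i / [(\widetilde{\lambda}_i-\lambda_m)(\mathbf{u}_m^{\mathrm{H}}\mathbf{v}_m)]$; since $\delta\mathbf{v}_i = \widetilde{\mathbf{v}}_i - \mathbf{v}_i = \sum_{m\neq i}c_m\mathbf{v}_m$, assembling these leading terms against $\mathbf{v}_m$ reproduces the stated first-order formula for $\delta\mathbf{v}_i$.

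The main obstacle I anticipate is the self-referential appearance of $\widetilde{\lambda}_i$ inside $\mathbf{D}_i$: the diagonal entries of $\mathbf{D}_i$, and therefore $\mathbf{M}$, depend on the very eigenvalue that the $m=i$ relation determines from $\mathbf{c}$, so the eigenvalue and eigenvector equations are genuinely coupled and the clean Neumann argument above implicitly treats $\widetilde{\lambda}_i$ as fixed. To make this rigorous I would handle the pair $(\widetilde{\lambda}_i, \mathbf{c})$ jointly, either through a contraction-mapping or implicit-function argument on the combined map, or by showing that for $\delta\mathbf{A}$ sufficiently small the coupled system admits a unique solution in a neighborhood where $\rho(\mathbf{M})<1$ persists, and then verifying that discarding the higher-order Neumann terms leaves precisely the first-order expressions claimed. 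Some care is also needed to confirm that the separation encoded in $\mathbf{D}_i$ keeps $\rho(\mathbf{M})<1$ a non-vacuous hypothesis, which ties back to the \emph{delicate balance between perturbation size and eigenvalue separation} highlighted in the surrounding discussion.
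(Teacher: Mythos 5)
Your proposal is correct and takes essentially the same route as the paper's proof: expansion of the perturbed eigenvector in the eigenbasis of $\mathbf{A}$, biorthogonal projection onto the left eigenvectors with the normalization $\mathbf{u}_i^{\mathrm{H}}\widetilde{\mathbf{v}}_i = \mathbf{u}_i^{\mathrm{H}}\mathbf{v}_i = 1$ pinning the $i$-th coefficient, and inversion of $\left(\mathbf{I} - \mathbf{D}_i\mathbf{U}^{\mathrm{H}}\delta\mathbf{A}\mathbf{V}\right)$ via the Neumann series under the spectral-radius hypothesis, keeping only the leading term. The coupling between $\widetilde{\lambda}_i$ and the coefficient vector that you flag as the main obstacle is resolved in the paper exactly along the lines you anticipate: it first cites the perturbation theorem of Stewart and Sun to secure existence and uniqueness of $\widetilde{\lambda}_i$ (together with its first-order formula), and only then solves the linear system for the coefficients with $\widetilde{\lambda}_i$ treated as a fixed, already-existing quantity inside $\mathbf{D}_i$.
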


\begin{proof}  
The fact that there is a unique perturbed eigenvalue $\Tilde{\lambda}_i$ such that 
\begin{equation} \label{eq:unique pert eigenvalue}
\delta \lambda_i = \frac{\mathbf{u}_i^{\mathrm{H}} \delta\mathbf{A} \mathbf{v}_i}{\mathbf{u}_i^{\mathrm{H}} \mathbf{v}_i} + O(\left\Vert\delta \mathbf{A}\right\Vert^2), 
\end{equation} is provided in \cite[chapter IV, Theorem 2.3]{stewart1990matrix}. Let $\widetilde{\mathbf{v}}_i$ be its corresponding perturbed eigenvector.
The non-defectiveness of the matrix $\mathbf{A}$ implies that we can expand $\delta \mathbf{v}_i = \widetilde{\mathbf{v}}_i - \mathbf{v}_i$ in the eigenvector basis of $\mathbf{A}$ as:
\begin{equation} \label{eq:delta_q expansion}
\delta \mathbf{v}_i = \sum\limits_{k=1}^n a_{ik}\mathbf{v}_k.
\end{equation}
Left multiplying \eqref{eq:delta_q expansion} by $\mathbf{u}_i^{\mathrm{H}}$ results in:
\begin{equation}
    a_{ii} = \frac{\mathbf{u}_i^{\mathrm{H}} \widetilde{\mathbf{v}}_i - \mathbf{u}_i^{\mathrm{H}} \mathbf{v}_i}{ \mathbf{u}_i^{\mathrm{H}} \mathbf{v}_i}.
\end{equation}
It follows, by assumption \eqref{app_assum:eigenvectors normalization}, that $a_{ii} = 0$. Thus, \eqref{eq:delta_q expansion} can be written as:
\begin{equation}\label{eq: delta_v_expansion}
    \delta \mathbf{v}_i = \sum\limits_{\substack{k=1\\k \neq i}}^n a_{ik}\mathbf{v}_k. 
\end{equation}
It is noteworthy that as the matrix $\mathbf{A}$ is non-defective, its right eigenvector matrix $\mathbf{V}$, is invertible. As eigenvectors are unique up to scalar multiplication, it follows that there exists a diagonal matrix $\mathbf{C} = \text{diag}(c_1,\ldots,c_n)$, where $\{c_k\}_{k=1}^n \neq 0$, such that $\mathbf{C} \mathbf{V}^{-1} = \mathbf{U}^{\mathrm{H}}$ or equivalently $\mathbf{C} = \mathbf{U}^{\mathrm{H}} \mathbf{V}$. Consequently, each element $c_k$ can be expressed as $c_k = \mathbf{u}_k^{\mathrm{H}} \mathbf{v}_k$, ensuring that all the terms involving the reciprocal of $\mathbf{u}_i^{\mathrm{H}} \mathbf{v}_i$ are well defined.

The perturbed eigenvalue problem can be stated as
\[
(\mathbf{A} +\delta\mathbf{A})(\mathbf{v}_i + \delta \mathbf{v}_i) = (\lambda_i + \delta \lambda_i)(\mathbf{v}_i + \delta \mathbf{v}_i).
\] 
Left multiplying by $\mathbf{u}_k^{\mathrm{H}}$ for $k \neq i$ and rearranging, we can express the vector of coefficients $\mathbf{a}_i$, from \eqref{eq: delta_v_expansion}, by the following relation:
\begin{equation} \label{eq:vector of coefficients}
    \left(\mathbf{I} - \mathbf{D}_i\mathbf{U}^{\mathrm{H}}\delta \mathbf{A}\mathbf{T}\right)\mathbf{a}_i = \mathbf{D}_i\mathbf{U}^{\mathrm{H}}\delta \mathbf{A}\mathbf{v}_i,
\end{equation} 
where the matrix $\mathbf{D}_i$ is defined in \eqref{eq:diagonal matrix Di}. 

By Assumption \eqref{app_assum:spectral radius}, the spectral radius $\rho(\mathbf{D}_i\mathbf{U}^{\mathrm{H}}\delta \mathbf{A}\mathbf{V}) <1$, and we can explicitly express the coefficients vector $\mathbf{a}_i$ by inverting the matrix $\left(\mathbf{I}-\mathbf{D}_i\mathbf{U}^{\mathrm{H}}\delta \mathbf{A}\mathbf{V}\right)$.
Finally, as $\rho(\mathbf{D}_i\mathbf{U}^{\mathrm{H}}\delta \mathbf{A}\mathbf{V}) <1$, using the Neumann series representation of the matrix $\left(\mathbf{I}-\mathbf{D}_i\mathbf{U}^{\mathrm{H}} \delta \mathbf{A}\mathbf{V}\right)^{-1}$, we get the following first-order approximation for the coefficients vector: 
\begin{equation}
    \mathbf{a}_i = \mathbf{D}_i\mathbf{U}^{\mathrm{H}}\delta \mathbf{A}\mathbf{v}_i.
\end{equation}
This leads us to the following expression for $a_{ik}$:
\begin{equation}
a_{ik}=
\begin{cases}
0 &, \text{if}\ k=i, \\
\frac{\mathbf{u}_k^{\mathrm{H}} \delta \mathbf{A} \mathbf{v}_i}{\mathbf{u}_k^{\mathrm{H}} \mathbf{v}_k(\Tilde{\lambda}_i -\lambda_k}) &, \text{otherwise}
\end{cases}
\end{equation}
which completes the proof. 
\end{proof}

\section{Extending the Matrix Pencil Theorem: Accounting for Noise Effects}
\label{app:Extending the MP Theorem: Accounting for Noise}

In this section, we extend the MP theorem \cite[Theorem 2.1]{hua1990matrix} to the noisy case and analyze the noise-induced effects.
By doing so, we bridge the gap between the existing theory, described in Section \ref{subsec: MP noiseless case}, and practice, described in Section \ref{subsec:MP method - model order detection}. 


\begin{theorem}
\label{theo:MSMP} 
Assume the underlying signal is given by \eqref{eq: noisy discrete signal}. Let $\mathbf{Y}_0$ and $ \mathbf{Y}_1$ be the Hankel matrices defined in \eqref{eq:full noisy Hankel matrix}. Then, $\mathbf{Y}_0$ and $ \mathbf{Y}_1$ admit the following factorization:
    \begin{align}
         \mathbf{Y}_0 &= \widetilde{\mathbf{Z}}_L \widetilde{ \mathbf{B} } \widetilde{\mathbf{Z}}_R,\\ 
         \mathbf{Y}_1 &= \widetilde{\mathbf{Z}}_L\widetilde{ \mathbf{B} } \widetilde{\mathbf{\Lambda}} \widetilde{\mathbf{Z}}_R, 
    \end{align} where
\begin{enumerate}

    \item{The diagonal matrix $\widetilde{\mathbf{\Lambda}}$ consists of the MP eigenvalues, acts as a forward propagation in time, and can be recast as: 
    \begin{equation} \label{eq:Lambda decomp}
    \widetilde{\mathbf{\Lambda}} =  
    \begin{bmatrix} 
       \widetilde{\mathbf{Z}} & 0 \\ 
      0 & \widetilde{\mathbf{0}} 
    \end{bmatrix}_{L\times L},
    \end{equation} where 
    \begin{enumerate}
        \item{The diagonal matrix $ \widetilde{\mathbf{Z}} \in\mathbb{C}^{M \times M}$ comprises the signal-related eigenvalues.}
        \item{The diagonal matrix $\widetilde{\mathbf{0}} \in\mathbb{C}^{\left(L-M\right) \times \left(L-M\right)}$ consists of the noise-related eigenvalues.}
    \end{enumerate} }
    
    \item{The diagonal matrix $\widetilde{ \mathbf{B} }$ can be cast as 
    \begin{equation}
        \widetilde{ \mathbf{B} } =  
    \begin{bmatrix} 
      \mathbf{B} & 0 \\ 
      0 & \mathbf{\mathbf{I}}_{L-M} 
    \end{bmatrix}_{L\times L},
    \end{equation} where the diagonal matrix $\mathbf{B} \in\mathbb{C}^{M \times M}$ consists of the signal amplitudes as defined in \ref{eq:clean Vandermonde factorization}. }

    \item{$\widetilde{\mathbf{Z}}_R$, which is defined in \eqref{def:right mp mode} as the pseudo-inverse of the right eigenvector matrix of the noisy MP, can be recast as:
     \begin{equation} 
    \label{eq:right MP mode decomposition}  
    \widetilde{\mathbf{Z}}_R = 
    \begin{bmatrix}
    \mathbf{Z}_R +  \mathbf{E}_R \\
    \mathbf{C}_R
    \end{bmatrix}_{L \times L},
    \end{equation} 
    where the top $M$ rows of $\widetilde{\mathbf{Z}}_R$ are the sum of the Vandermonde matrix $\mathbf{Z}_R$ and a noise-related component, denoted by $\mathbf{E}_R$. The remaining $L-M$ rows are noise-related spurious rows, denoted by $\mathbf{C}_R.$ }
    
    \item {$\widetilde{\mathbf{Z}}_L$, which is defined in \eqref{def:left mp mode} as the pseudo-inverse of the left eigenvector matrix of the noisy MP, can be recast as 
    \begin{equation} 
    \widetilde{\mathbf{Z}}_L = 
    \begin{bmatrix} 
      \mathbf{Z}_L + \mathbf{E}_L, \; \mathbf{C}_L \\ 
    \end{bmatrix}_{(N-L) \times L},
    \end{equation}
    where the leftmost $M$ columns of $\widetilde{\mathbf{Z}}_L$ are the sum of the Vandermonde matrix $\mathbf{Z}_L$ and a noise-related component, denoted by $\mathbf{E}_L$. The remaining $L-M$ columns are noise-related spurious columns, denoted by $\mathbf{C}_L.$}

\end{enumerate}
\end{theorem}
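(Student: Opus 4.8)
The plan is to build the two noisy factorizations directly from the SVD of $\mathbf{Y}_0$ and the EVD of $\mathbf{A}$, and then to characterize each block by transporting the noiseless MP theorem (Section \ref{subsec: MP noiseless case}) through the eigenvector perturbation Lemma \ref{lemma:first order approximation of the vector of coefficients}. First I would establish the \emph{un-normalized} factorizations. Inserting $\mathbf{Q}\mathbf{Q}^{-1}=\mathbf{I}$ into the SVD gives $\mathbf{Y}_0=\mathbf{U}\mathbf{\Sigma}\mathbf{V}^{\mathrm{H}}=(\mathbf{U}\mathbf{\Sigma}\mathbf{Q})(\mathbf{Q}^{-1}\mathbf{V}^{\mathrm{H}})=\widetilde{\mathbf{Z}}_{L,un}\widetilde{\mathbf{Z}}_{R,un}$, exactly as in \eqref{def:right mp mode}–\eqref{def:left mp mode}. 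For $\mathbf{Y}_1$, substituting the definition \eqref{eq:rank reduced representation} of $\mathbf{A}$ and its EVD \eqref{eq:A decompos} yields $\widetilde{\mathbf{Z}}_{L,un}\widetilde{\mathbf{\Lambda}}\widetilde{\mathbf{Z}}_{R,un}=\mathbf{U}\mathbf{\Sigma}\mathbf{A}\mathbf{V}^{\mathrm{H}}=\mathbf{U}\mathbf{U}^{\mathrm{H}}\mathbf{Y}_1$, which I would close by noting that $\mathbf{U}\mathbf{U}^{\mathrm{H}}$ is the orthogonal projector onto $\mathrm{Range}(\mathbf{Y}_0)$, or equivalently by passing to the companion matrix $\mathbf{Y}_0^{\dagger}\mathbf{Y}_1=\mathbf{V}\mathbf{A}\mathbf{V}^{\mathrm{H}}$, whose right and left eigenvectors are precisely the columns and rows of the pseudo-inverses \eqref{eq:right noisy eigenvectors}–\eqref{eq:left noisy eigenvectors}. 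The normalized factorization then follows by extracting a diagonal amplitude matrix from the modes, setting $\widetilde{\mathbf{Z}}_{L,un}=\widetilde{\mathbf{Z}}_L\sqrt{\widetilde{\mathbf{B}}}$ and $\widetilde{\mathbf{Z}}_{R,un}=\sqrt{\widetilde{\mathbf{B}}}\,\widetilde{\mathbf{Z}}_R$; since $\widetilde{\mathbf{B}}$ and $\widetilde{\mathbf{\Lambda}}$ are diagonal and hence commute, this delivers the claimed forms $\mathbf{Y}_0=\widetilde{\mathbf{Z}}_L\widetilde{\mathbf{B}}\widetilde{\mathbf{Z}}_R$ and $\mathbf{Y}_1=\widetilde{\mathbf{Z}}_L\widetilde{\mathbf{B}}\widetilde{\mathbf{\Lambda}}\widetilde{\mathbf{Z}}_R$.

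Next I would fix the normalization convention so that it reproduces the noiseless Vandermonde normalization of \eqref{eq:Z_L}: choose $\sqrt{\widetilde{\mathbf{B}}}$ so that the leading row of the signal block of $\widetilde{\mathbf{Z}}_L$ and the leading column of the signal block of $\widetilde{\mathbf{Z}}_R$ become unit-normalized (all-ones in the noiseless limit). This simultaneously pins down items (1) and (2). Because $\mathbf{Y}_0^{\dagger}\mathbf{Y}_1=\mathbf{X}_0^{\dagger}\mathbf{X}_1+\delta(\mathbf{X}_0^{\dagger}\mathbf{X}_1)$ is a perturbation of a matrix with exactly $M$ nonzero eigenvalues equal to the signal poles and $L-M$ zero eigenvalues, continuity of the spectrum lets me label $\widetilde{\mathbf{\Lambda}}=\mathrm{diag}(\widetilde{\mathbf{Z}},\widetilde{\mathbf{0}})$, the signal block $\widetilde{\mathbf{Z}}\in\mathbb{C}^{M\times M}$ tracking the $\{z_i\}$ and the noise block tracking the perturbed zeros; the amplitude block then splits as $\widetilde{\mathbf{B}}=\mathrm{diag}(\mathbf{B},\mathbf{I}_{L-M})$, the identity in the noise block being exactly the normalization choice for the spurious modes.

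The core of the argument is items (3) and (4): showing that the first $M$ rows of $\widetilde{\mathbf{Z}}_R$ (resp.\ columns of $\widetilde{\mathbf{Z}}_L$) equal the noiseless Vandermonde $\mathbf{Z}_R$ (resp.\ $\mathbf{Z}_L$) plus a noise term $\mathbf{E}_R$ (resp.\ $\mathbf{E}_L$), the remaining $L-M$ being spurious. Here I would apply Lemma \ref{lemma:first order approximation of the vector of coefficients} with $\mathbf{A}\mapsto\mathbf{X}_0^{\dagger}\mathbf{X}_1$ and $\delta\mathbf{A}\mapsto\delta(\mathbf{X}_0^{\dagger}\mathbf{X}_1)$. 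The MP theorem identifies the signal right eigenvectors of $\mathbf{X}_0^{\dagger}\mathbf{X}_1$ as the columns $\mathbf{q}_i$ of $\mathbf{Z}_R^{\dagger}$ and the signal left eigenvectors as the rows $\mathbf{p}_i^{\mathrm{H}}$ of $\mathbf{Z}_L^{\dagger}$; since the modes are by construction the pseudo-inverses of the perturbed eigenvector matrices, the perturbed signal eigenvectors $\widetilde{\mathbf{q}}_i=\mathbf{q}_i+\delta\mathbf{q}_i$ invert back onto the Vandermonde rows/columns plus a first-order term, which is precisely the computation of $\mathbf{E}_L$ carried out in Proposition \ref{prop:FOA of E_L}. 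The $L-M$ perturbed noise eigenvectors, lying in $\mathrm{null}(\mathbf{Z}_R)$ in the noiseless limit, yield the spurious blocks $\mathbf{C}_R$ and $\mathbf{C}_L$.

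I expect the main obstacle to be twofold. The minor point is the exactness of the $\mathbf{Y}_1$ factorization, which requires $\mathrm{Range}(\mathbf{Y}_1)\subseteq\mathrm{Range}(\mathbf{Y}_0)$ (the shift-invariance that the MP framework relies on, and which the weak-truncation step of Section \ref{subsec: practice} enforces). The substantive difficulty is the bookkeeping that turns a perturbation of the \emph{eigenvectors} of $\mathbf{Y}_0^{\dagger}\mathbf{Y}_1$ into a perturbation of the \emph{pseudo-inverse} of the eigenvector matrix: because $\mathbf{Q}\mapsto\mathbf{Q}^{-1}$ is nonlinear, isolating the clean \emph{additive} splitting $\mathbf{Z}_R+\mathbf{E}_R$ (rather than a mixed perturbation) is where the normalization of Assumption \ref{assum:eigenvectors normalization} and the spectral-radius condition \eqref{eq:spectral radius} do the real work, ensuring the Neumann-series expansion in Lemma \ref{lemma:first order approximation of the vector of coefficients} converges and that the signal and noise subspaces stay separated under perturbation.
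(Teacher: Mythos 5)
Your skeleton matches the paper's proof in its first and last steps: the un-normalized factorization $\mathbf{Y}_0=\widetilde{\mathbf{Z}}_{L,un}\widetilde{\mathbf{Z}}_{R,un}$, $\mathbf{Y}_1=\widetilde{\mathbf{Z}}_{L,un}\widetilde{\mathbf{\Lambda}}\widetilde{\mathbf{Z}}_{R,un}$ is obtained by exactly the same algebra (and you are in fact more careful than the paper in flagging that the $\mathbf{Y}_1$ identity needs $\mathbf{U}\mathbf{U}^{\mathrm{H}}\mathbf{Y}_1\mathbf{V}\mathbf{V}^{\mathrm{H}}=\mathbf{Y}_1$), and the signal/noise labelling of $\widetilde{\mathbf{\Lambda}}$ via unique matching of simple eigenvalues under perturbation (Stewart) is also how the paper fixes the ordering. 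The gap is in the core of items (3)--(4), and it sits precisely at the point you defer to ``where Assumption \ref{assum:eigenvectors normalization} and the spectral-radius condition \eqref{eq:spectral radius} do the real work.'' Theorem \ref{theo:MSMP} is an \emph{exact, unconditional} factorization: $\mathbf{E}_L,\mathbf{E}_R,\mathbf{C}_L,\mathbf{C}_R$ are exact objects, and the statement invokes neither Assumption \ref{assum:eigenvectors normalization} nor \eqref{eq:spectral radius}. Your plan --- recover the modes by inverting the perturbed eigenvector matrix and control the inversion with the Neumann series of Lemma \ref{lemma:first order approximation of the vector of coefficients} --- can only produce a first-order splitting valid under those extra hypotheses; that is a route to Proposition \ref{prop:FOA of E_L}, not to the theorem. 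You have effectively conflated the two results: the lemma and the spectral-radius condition enter the paper only later, when $\mathbf{E}_L$ is \emph{approximated}, never when the factorization itself is \emph{established}.

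The missing idea is that no perturbed matrix ever needs to be inverted, because the modes are reachable from the \emph{opposite-side} eigenvectors through the data: from $\mathbf{Y}_0=\widetilde{\mathbf{Z}}_{L,un}\widetilde{\mathbf{Z}}_{R,un}$ one has $\widetilde{\mathbf{Z}}_{L,un}=\mathbf{Y}_0\widetilde{\mathbf{Z}}_{R,un}^{\dagger}$ (indeed $\mathbf{U}\mathbf{\Sigma}\mathbf{V}^{\mathrm{H}}\mathbf{V}\mathbf{Q}=\mathbf{U}\mathbf{\Sigma}\mathbf{Q}$), so the \emph{left} modes are the data matrix applied to the \emph{right} eigenvector matrix, which is the object Stewart's theorem lets you decompose columnwise as $\widetilde{\mathbf{Z}}_{R,un}^{\dagger}=[\mathbf{Z}_R^{\dagger}\mathbf{B}^{-1/2}+\delta\mathbf{Z}_R^{\dagger},\ \mathbf{C}_1]$ with $\delta\mathbf{Z}_R^{\dagger}$ \emph{defined} as the exact discrepancy. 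Writing $\mathbf{Y}_0=\mathbf{X}_0+\mathbf{W}_0$ with $\mathbf{X}_0=\mathbf{Z}_L\mathbf{B}\mathbf{Z}_R$ and using $\mathbf{Z}_R\mathbf{Z}_R^{\dagger}=\mathbf{I}_M$, the signal block collapses exactly to $\mathbf{Z}_L\mathbf{B}^{1/2}$ plus the noise terms collected in \eqref{eq:E_L def}, and the noise block is exactly $\mathbf{C}_L=\mathbf{Y}_0\mathbf{C}_1$; the right modes follow symmetrically from the left eigenvectors via $\widetilde{\mathbf{Z}}_{R,un}=\widetilde{\mathbf{Z}}_{L,un}^{\dagger}\mathbf{Y}_0$. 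All of this is linear in the perturbation bookkeeping and requires no convergence condition. A secondary inconsistency in your plan: choosing $\sqrt{\widetilde{\mathbf{B}}}$ so that the leading row/column of the signal blocks become exactly unit contradicts item (2), which requires $\widetilde{\mathbf{B}}=\mathrm{diag}(\mathbf{B},\mathbf{I}_{L-M})$ with the \emph{true} amplitudes of \eqref{eq:clean Vandermonde factorization}; in the noisy case the leading entries of $\mathbf{Z}_L+\mathbf{E}_L$ equal $1+\mathbf{E}_L(1,i)\neq 1$, so you cannot have both. The paper normalizes by the true $\mathbf{B}^{\frac{1}{2}}$ and lets $\mathbf{E}_L$ absorb the discrepancy, which is also what the amplitude estimator \eqref{eq:new_amps} exploits.
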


\begin{proof}
Using the definition of $\widetilde{\mathbf{Z}}_R^{\dagger}$ and $\widetilde{\mathbf{Z}}_L^{\dagger}$ from \eqref{eq:right noisy eigenvectors} and \eqref{eq:left noisy eigenvectors}, and the SVD of $\mathbf{Y}_0 = \mathbf{U\Sigma V}^{\mathrm{H}}$ from \eqref{eq:SVD of Y_0}, it is easy to verify that $\widetilde{\mathbf{Z}}_L^{\dagger} \mathbf{Y}_0 \widetilde{\mathbf{Z}}_R^{\dagger} = \mathbf{I}_L$, and $\widetilde{\mathbf{Z}}_L^{\dagger} \mathbf{Y}_1 \widetilde{\mathbf{Z}}_R^{\dagger} = \mathbf{\Lambda}$, where $\mathbf{I}_L$ is the $L\times L$ identity matrix. From these equalities we find a new representation for the noisy Hankel matrices:
\begin{align} \label{eq:Y0 factorization}
\mathbf{Y}_0 &= \widetilde{\mathbf{Z}}_L \widetilde{\mathbf{Z}}_R,\\
\mathbf{Y}_1 &= \widetilde{\mathbf{Z}}_L \mathbf{\Lambda} \widetilde{\mathbf{Z}}_R. \label{eq:Y1 factorization}
\end{align}
By employing the Vandermonde factorization of $\mathbf{X}_0$, we can express the Hankel matrix $\mathbf{Y}_0$ as
\begin{equation}\label{eq:noise_factor_right}
\mathbf{Y}_0 = \mathbf{X}_0 + \mathbf{W}_0 = \mathbf{Z}_L\mathbf{B}\mathbf{Z}_R + \mathbf{W}_0.    
\end{equation} Let us re-frame equation \eqref{eq:noise_factor_right} as follows: 
\begin{equation}\label{eq:noise_factor_right_symm}
\mathbf{Y}_0 = (\mathbf{Z}_L \mathbf{B}^{\frac{1}{2}})(\mathbf{B}^{\frac{1}{2}} \mathbf{Z}_R) + \mathbf{W}_0,
\end{equation} as this formulation will prove to be useful in subsequent discussions.
Combining \eqref{eq:Y0 factorization} and \eqref{eq:noise_factor_right_symm}, we get:
\begin{equation} \label{eq:noisy left mode}
\widetilde{\mathbf{Z}}_L  = (\mathbf{Z}_L\mathbf{B}^{\frac{1}{2}})(\mathbf{B}^{\frac{1}{2}}\mathbf{Z}_R)\widetilde{\mathbf{Z}}_R^{\dagger} + \mathbf{W}_0 \widetilde{\mathbf{Z}}_R^{\dagger}.
\end{equation}
Now, let us analyze the term $\mathbf{B}^{\frac{1}{2}} \mathbf{Z}_R\widetilde{\mathbf{Z}}_R^{\dagger}$. As established in \cite[Theorem 2.1]{hua1990matrix}, the columns of $\mathbf{Z}_R^{\dagger}$ (and hence the columns of $\mathbf{Z}_R^{\dagger} \mathbf{B}^{-\frac{1}{2}}$) are right eigenvectors of the noiseless MP $(\mathbf{X}_0, \mathbf{X}_1)$, or equivalently, right simple eigenvectors of $\mathbf{X}_0^{\dagger}\mathbf{X}_1$ corresponding to its non-zero and simple eigenvalues. Furthermore, by definition, the columns of $\widetilde{\mathbf{Z}}_R^{\dagger}$ are right eigenvectors of the noisy MP $\mathbf{Y}_1 - \lambda\mathbf{Y}_0$, or equivalently, right eigenvectors of $\mathbf{Y}_0^{\dagger}\mathbf{Y}_1$.

As each $\{z_i\}_{i=1}^M$ is a simple eigenvalue of $\mathbf{X}_0^{\dagger}\mathbf{X}_1$, and $\mathbf{Y}_0^{\dagger}\mathbf{Y}_1$ is a perturbation of $\mathbf{X}_0^{\dagger}\mathbf{X}_1$, according to Theorem 2.3, chapter IV in \cite{stewart1990matrix}, we can uniquely match each simple eigenvector of $\mathbf{X}_0^{\dagger}\mathbf{X}_1$ with a simple eigenvector of $\mathbf{Y}_0^{\dagger}\mathbf{Y}_1$ \footnote{Theorem 2.3, chapter IV in \cite{stewart1990matrix} guarantees that there is a unique perturbed eigenvalue $\Tilde{z}_i$, such that 
\[
\Tilde{z}_i = z_i + \frac{\mathbf{p}_i^{'*}\delta(\mathbf{X}_0^{\dagger}\mathbf{X}_1)\mathbf{q}'_i}{\mathbf{p}_i^{'*}\mathbf{q}'_i} + O(\left\Vert\delta(\mathbf{X}_0^{\dagger}\mathbf{X}_1)\right\Vert^2)
\] 
where $\mathbf{q}'_i$ and $\mathbf{p}_i^{'*}$ are the corresponding right and left eigenvectors of the matrix $\mathbf{X}_0^{\dagger}\mathbf{X}_1$ and $\mathbf{Y}_0^{\dagger}\mathbf{Y}_1 = \mathbf{X}_0^{\dagger}\mathbf{X}_1 + \delta(\mathbf{X}_0^{\dagger}\mathbf{X}_1)$. }.
Without loss of generality, let the leftmost $M$ columns of $\widetilde{\mathbf{Z}}_R^{\dagger}$ consist of these perturbed eigenvectors, which are the $M$ columns of ${\mathbf{Z}}_R^{\dagger}$. Consequently, we can express this relationship as follows:

\begin{equation} \label{eq:noise right eigenvectors decomp} 
\widetilde{\mathbf{Z}}_R^{\dagger}=
\begin{bmatrix}
  \mathbf{Z}_R^{\dagger} \mathbf{B}^{-\frac{1}{2}} + \delta {\mathbf{Z}}_R^{\dagger}, \;  \mathbf{C}_1 
\end{bmatrix}_{L \times L}
\end{equation}
where $\mathbf{C}_1 \in \mathbb{C}^{L \times (L-M)}$ contains the remaining $L-M$ eigenvectors of $\mathbf{Y}_0^{\dagger}\mathbf{Y}_1$ and $\delta {\mathbf{Z}}_R^{\dagger} \in \mathbb{C}^{L \times M}$ represents the additive discrepancy in the scaled, right eigenvectors matrix $\mathbf{Z}_R^{\dagger} \mathbf{B}^{-\frac{1}{2}}$ due to the perturbation of the noiseless matrix $\mathbf{X}_0^{\dagger}\mathbf{X}_1$. Substituting \eqref{eq:noise right eigenvectors decomp} into \eqref{eq:noisy left mode}, we get the following structure of the left mode:

\begin{equation}
\begin{aligned}
\widetilde{\mathbf{Z}}_L  = (\mathbf{Z}_L\mathbf{B}^{\frac{1}{2}})(\mathbf{B}^{\frac{1}{2}}\mathbf{Z}_R)
&\begin{bmatrix}
  \mathbf{Z}_R^{\dagger} \mathbf{B}^{-\frac{1}{2}} + \delta {\mathbf{Z}}_R^{\dagger}, \;  \mathbf{C}_1 
\end{bmatrix} + \\ 
\mathbf{W}_0 
&\begin{bmatrix}
  \mathbf{Z}_R^{\dagger} \mathbf{B}^{-\frac{1}{2}} + \delta {\mathbf{Z}}_R^{\dagger}, \;  \mathbf{C}_1 
\end{bmatrix}.
\end{aligned}
\end{equation}
Which can be re-arranged to the following form: 
\begin{equation}
\label{eq:left mode unnormalized decomposition}
    \widetilde{\mathbf{Z}}_L = 
    \begin{bmatrix}
    \mathbf{Z}_L \mathbf{B}^{\frac{1}{2}} + \mathbf{X}_0\delta {\mathbf{Z}}_R^{\dagger} +  \mathbf{W}_0 \mathbf{Z}_R^{\dagger} \mathbf{B}^{-\frac{1}{2}} + \mathbf{W}_0\delta {\mathbf{Z}}_R^{\dagger} , \; \mathbf{Y}_0 \mathbf{C}_1
\end{bmatrix}.
\end{equation}
To obtain the desired normalized form, define the diagonal matrix
\[
\widetilde{ \mathbf{B} } = 
\begin{bmatrix} 
  \mathbf{B} & 0 \\ 
  0 & \mathbf{\mathbf{I}}_{L-M}
\end{bmatrix}_{L \times L}, 
\]
to get  
\begin{equation} \label{eq:normalized left mode} 
\widetilde{\mathbf{Z}}_L\widetilde{ \mathbf{B} }^{-\frac{1}{2}}=
\begin{bmatrix}
  \mathbf{Z}_L + \mathbf{E}_L, \; \mathbf{C}_L 
\end{bmatrix},
\end{equation}
where the noise-related component $\mathbf{E}_L$ and the spurious matrix $\mathbf{C}_L$ are defined by 
\begin{align}
\mathbf{E}_L &= \mathbf{X}_0\delta {\mathbf{Z}}_R^{\dagger}\mathbf{B}^{-\frac{1}{2}} + \mathbf{W}_0 \mathbf{Z}_R^{\dagger}\mathbf{B}^{-1}+ \mathbf{W}_0\delta {\mathbf{Z}}_R^{\dagger}\mathbf{B}^{-\frac{1}{2}}, \label{eq:E_L def}\\
\mathbf{C}_L &= \mathbf{Y}_0\mathbf{C}_1.
\end{align} For simplicity, we maintain the notation $\widetilde{\mathbf{Z}}_L$ for the normalized form, i.e., $\widetilde{\mathbf{Z}}_L = \widetilde{\mathbf{Z}}_L\widetilde{ \mathbf{B} }^{-\frac{1}{2}}$.

Following the same technique as in \eqref{eq:noise right eigenvectors decomp}, the left eigenvectors matrix can be recast as:
\begin{equation} \label{eq:noise left eigenvectors decomp}
    \widetilde{\mathbf{Z}}_L^{\dagger} =
    \begin{bmatrix}
    \mathbf{B}^{-\frac{1}{2}}\mathbf{Z}_L^{\dagger} + \delta {\mathbf{Z}}_L^{\dagger} \\ 
    \mathbf{C}_2  
    \end{bmatrix}_{L \times (N-L)},
\end{equation} 
where $\mathbf{C}_2 \in \mathbb{C}^{(L-M) \times (N-L)}$ and $\delta {\mathbf{Z}}_L^{\dagger} \in \mathbb{C}^{M \times (N-L)}$, to obtain the desired structure of:
\begin{equation}\label{eq:normalized right mode}
    \widetilde{ \mathbf{B} }^{-\frac{1}{2}}\widetilde{\mathbf{Z}}_R = 
    \begin{bmatrix}
    \mathbf{Z}_R + \mathbf{E}_R \\ 
    \mathbf{C}_R  
    \end{bmatrix},
\end{equation} where,
\begin{align}
\mathbf{E}_R &=  \mathbf{B}^{-\frac{1}{2}} \delta {\mathbf{Z}}_L^{\dagger}\mathbf{X}_0 +   \mathbf{B}^{-1}\mathbf{Z}_L^{\dagger}\mathbf{W}_0 + \mathbf{B}^{-\frac{1}{2}}\delta {\mathbf{Z}}_L^{\dagger}\mathbf{W}_0,\\
\mathbf{C}_R &= \mathbf{C}_2\mathbf{Y}_0.
\end{align} For simplicity, we maintain the notation $\widetilde{\mathbf{Z}}_R$ for the normalized form, i.e., $\widetilde{\mathbf{Z}}_R = \widetilde{ \mathbf{B} }^{-\frac{1}{2}}\widetilde{\mathbf{Z}}_R$.
Combining the normalized forms of $\widetilde{\mathbf{Z}}_L$ and $\widetilde{\mathbf{Z}}_R$ with the decomposition of the Hankel matrices $\mathbf{Y}_0$ and $\mathbf{Y}_1$ from \eqref{eq:Y0 factorization} and \eqref{eq:Y1 factorization} leads to the desired normalized factorization of $\mathbf{Y}_0= \widetilde{\mathbf{Z}}_L \widetilde{ \mathbf{B} } \widetilde{\mathbf{Z}}_R$ and $\mathbf{Y}_1=  \widetilde{\mathbf{Z}}_L\widetilde{ \mathbf{B} } \widetilde{\mathbf{\Lambda}} \widetilde{\mathbf{Z}}_R$.

The partitioning of $\widetilde{\mathbf{\Lambda}}$ to signal-related and noise-related components is derived from the assumption on the order of vectors in $\widetilde{\mathbf{Z}}_R^{\dagger}$ and $\widetilde{\mathbf{Z}}_L^{\dagger}$. We assumed, without loss of generality, that the leftmost $M$ columns of $\widetilde{\mathbf{Z}}_R^{\dagger}$ are the perturbed versions of the $M$ simple right eigenvectors of $\mathbf{X}_0^{\dagger}\mathbf{X}_1$ corresponding to the signal poles. A parallel assumption is made on the top $M$ rows of $\widetilde{\mathbf{Z}}_L^{\dagger}$. Such ordering, of the right and left eigenvectors of the matrix $\mathbf{X}_0^{\dagger}\mathbf{X}_1$, induces the same ordering on the corresponding eigenvalues matrix $\widetilde{\mathbf{\Lambda}}$. Hence the decomposition of $\widetilde{\mathbf{\Lambda}}$ in \eqref{eq:Lambda decomp}.
This completes the proof.
\end{proof}

\section{Proof of Proposition \ref{prop:FOA of E_L}} \label{appen:first order approx of E_L}
\begin{proof} 
The left additive error term was shown to be:
\begin{equation} \label{appen:left additive error}
    \mathbf{E}_L = \mathbf{X}_0\delta {\mathbf{Z}}_R^{\dagger}\mathbf{B}^{-\frac{1}{2}} + \mathbf{W}_0 \mathbf{Z}_R^{\dagger}\mathbf{B}^{-1}+ \mathbf{W}_0\delta {\mathbf{Z}}_R^{\dagger}\mathbf{B}^{-\frac{1}{2}}.
\end{equation} 
We begin by examining the term $\delta {\mathbf{Z}}_R^{\dagger}\in \mathbb{C}^{L \times M}$, which represents the additive discrepancy in the scaled right eigenvectors matrix of the noiseless MP pair $(\mathbf{X}_0, \mathbf{X}_1)$. 
Recalling that the matrix $\mathbf{X}_0^{\dagger}\mathbf{X}_1$ is non-defective, with its complete set of $L$ independent right and left eigenvectors denoted as $\{\mathbf{q}'_k\}_{k=1}^L$ and $\{\mathbf{p}'_k\}_{k=1}^L$, respectively. Let $\{\delta {\mathbf{q}}_i\}_{i=1}^M$ represent the $i$-th column of $\delta {\mathbf{Z}}_R^{\dagger}$. We can express $\delta {\mathbf{q}}_i$ as:
\begin{equation}
\delta {\mathbf{q}}_i =\sum\limits_{k=1}^L a_{ik}\mathbf{q}'_k = \sum\limits_{k=1}^M a_{ik}\mathbf{q}'_k +\sum\limits_{k=M+1}^L a_{ik}\mathbf{q}'_k.
\end{equation} 
Through the Vandermonde factorization of $\mathbf{X}_0$ and $\mathbf{X}_1$, we get $\mathbf{X}_0^{\dagger}\mathbf{X}_1 = \mathbf{Z}_R^{\dagger}\mathbf{Z}\mathbf{Z}_R$. 
Consequently,  for $1 \leq k \leq M$ $\mathbf{q}'_k$ is the $k$-th column of $\mathbf{Z}_R^{\dagger}$, $\mathbf{p}'^{\mathrm{H}}_k$ is the $k$-th row of $\mathbf{Z}_R$ and $\mathbf{p}'^{\mathrm{H}}_k \mathbf{q}'_k = 1$. It follows from Theorem \cite[Theorem 2.1]{hua1990matrix} that $\mathbf{q}'_k = \mathbf{q}_k$ for $1 \leq k \leq M$.
However, for $M < k \leq L$, $\mathbf{q}'_k$ lies in the null space of $\mathbf{X}_0^{\dagger}\mathbf{X}_1$, implying $\mathbf{X}_0\mathbf{q}'_k=0$. Substituting this decomposition back into \eqref{appen:left additive error}, we obtain:

\begin{equation} 
    \mathbf{E}_L^i = \frac{\mathbf{X}_0}{\sqrt{b_i}} \sum\limits_{k=1}^M a_{ik}\mathbf{q}_k
    +  \frac{\mathbf{W}_0\mathbf{q}_i}{b_i}
     +\frac{\mathbf{W}_0}{\sqrt{b_i}} \sum\limits_{k=1}^L a_{ik}\mathbf{q}'_k.
\end{equation} Using again the Vandermonde factorization of $\mathbf{X}_0 = \mathbf{Z}_L \mathbf{B}\mathbf{Z}_R$ and the fact that $\mathbf{q}_k$ is the $k$-th column of $\mathbf{Z}_R^{\dagger}$, we get:
\begin{equation} \label{eq:E_L decomp}
    \mathbf{E}_L^i =\sum\limits_{k=1}^M \frac{a_{ik}b_k}{\sqrt{b_i}} \begin{bmatrix}
           1 \\
           z_k \\
           \vdots \\
           z_k^{N-L-1}
         \end{bmatrix}
    + \frac{\mathbf{W}_0\mathbf{q}_i}{b_i}
    + \frac{\mathbf{W}_0}{\sqrt{b_i}} \sum\limits_{k=1}^L a_{ik}\mathbf{q}'_k.
\end{equation}

We now turn to get a first-order approximation of the coefficients $a_{ik}$ using Lemma \ref{lemma:first order approximation of the vector of coefficients} applied to the matrix $\mathbf{X}_0^{\dagger}\mathbf{X}_1$, its perturbation $\mathbf{Y}_0^{\dagger}\mathbf{Y}_1$ = $\mathbf{X}_0^{\dagger}\mathbf{X}_1$ + $\delta(\mathbf{X}_0^{\dagger}\mathbf{X}_1)$, and to the simple, scaled, right eigenvectors $\mathbf{Z}_R^{\dagger} \mathbf{B}^{-\frac{1}{2}}$ of $\mathbf{X}_0^{\dagger}\mathbf{X}_1$.

Lemma \ref{lemma:first order approximation of the vector of coefficients} assumes a specific normalization for the right, simple eigenvectors, and their perturbation. As noted above, for each simple eigenvector $\{\mathbf{q}'_k\}_{k=1}^M$ it holds that $\mathbf{p}'^{\mathrm{H}}_k \mathbf{q}'_k = 1$. Additionally, by our Assumption \ref{assum:eigenvectors normalization}, we have $\mathbf{p}'^{\mathrm{H}}_k \widetilde{\mathbf{q}}'_k = 1$.
Finally, by Assumption \eqref{eq:spectral radius}, $\rho(\mathbf{D}_i\mathbf{T}_L \delta (\mathbf{X}_0^{\dagger}\mathbf{X}_1)\mathbf{T}_R)<1$, and hence we can apply Lemma \ref{lemma:first order approximation of the vector of coefficients} to get the first order approximation of the coefficients $a_{ik}$:
\[
a_{ik}=
\begin{cases}
0  &,k=i\\
\frac{\mathbf{p}'^{\mathrm{H}}_k \delta (\mathbf{X}_0^{\dagger}\mathbf{X}_1) \mathbf{q}'_i}{\sqrt{b_i}(\Tilde{\lambda}_i -\lambda_k)} &, \text{otherwise.}
\end{cases}
\] 
Note that $\lambda_k$ are the eigenvalues of $\mathbf{X}_0^{\dagger}\mathbf{X}_1$, which are assumed to be indexed such that: 
\begin{equation*}
{\lambda_k} =
\begin{cases}
z_k,& \text{if}\  1\leq k \leq M\\
0,&{\text{otherwise,}}
\end{cases}
\end{equation*}
and $\Tilde{\lambda}_i$ is the unique perturbation of $\lambda_i$ as described in Lemma \ref{lemma:first order approximation of the vector of coefficients}. As noted above, $\mathbf{q}'_i = \mathbf{q}_i$ for $1 \leq i \leq M$ and hence:
\[
a_{ik}=
\begin{cases}
0  &,k=i\\
\frac{\mathbf{p}'^{\mathrm{H}}_k \delta (\mathbf{X}_0^{\dagger}\mathbf{X}_1) \mathbf{q}_i}{\sqrt{b_i}(\Tilde{\lambda}_i -\lambda_k)} &, \text{otherwise.}
\end{cases}
\] 

Following the approach presented in \cite[section IV]{hua1990matrix}, the term $\mathbf{p}'^{\mathrm{H}}_k \delta (\mathbf{X}_0^{\dagger}\mathbf{X}_1) \mathbf{q}_i$, can be approximated to first order by
\begin{equation} \label{app:coef approx}
\mathbf{p}'^{\mathrm{H}}_k \delta (\mathbf{X}_0^{\dagger}\mathbf{X}_1) \mathbf{q}_i = \mathbf{p}'^{\mathrm{H}}_k (\delta (\mathbf{Y}_0^{\dagger})\mathbf{X}_1) \mathbf{q}_i + \mathbf{p}'^{\mathrm{H}}_k (\mathbf{X}_0^{\dagger}\mathbf{W}_1) \mathbf{q}_i.
\end{equation} 
For $1 \leq k \leq M$ it is also shown in  \cite{hua1990matrix} that a first order approximation of \eqref{app:coef approx} is given by:
\begin{equation}\label{eq:coeff explcit pprox}
\mathbf{p}'^{\mathrm{H}}_k \delta (\mathbf{X}_0^{\dagger}\mathbf{X}_1) \mathbf{q}_i = \frac{\mathbf{p}_k^{\mathrm{H}} (\mathbf{W}_1-z_i\mathbf{W}_0) \mathbf{q}_i}{b_k},
\end{equation} 
where $\mathbf{p}_k^{\mathrm{H}}$ is the $k$-th  row of $\mathbf{Z}_L^{\dagger}$, as described in \cite[Theorem 2.1]{hua1990matrix}.
On the other hand, for $M < k \leq L$, the rightmost term of (\ref{app:coef approx}) annihilates and we are left with $\mathbf{p}'^{\mathrm{H}}_k (\delta (\mathbf{Y}_0^{\dagger})\mathbf{X}_1) \mathbf{q}_i$.
Using the decomposition theorem for pseudo-inverses, given in \cite{wedin1973perturbation}, we can write:
\begin{equation}\label{eq:pseudo-inve decomp}
\begin{aligned}
\delta \mathbf{Y}_0^{\dagger} &= -\mathbf{Y}_0^{\dagger}\mathbf{W}_0\mathbf{X}_0^{\dagger} + (\mathbf{Y}_0^{\mathrm{H}}\mathbf{Y}_0)^{\dagger}\mathbf{W}_0^{\mathrm{H}}P_{N(\mathbf{X}_0^{\mathrm{H}})}\\
&+ P_{N(\mathbf{Y}_0)}\mathbf{W}_0^{\mathrm{H}}(\mathbf{X}_0\mathbf{X}_0^{\mathrm{H}})^{\dagger},
\end{aligned} 
\end{equation} where $P_X$ is the orthogonal projection onto a subspace $X$. 

Combining \eqref{eq:coeff explcit pprox}  and \eqref{eq:pseudo-inve decomp} back into \eqref{eq:E_L decomp}, a first order approximation of $\mathbf{E}_L^i$ is given by:
\begin{equation}
\mathbf{E}_L^i  = \sum\limits_{\substack{k=1\\k \neq i}}^M \frac{\mathbf{p}_k^{\mathrm{H}} (\mathbf{W}_1-z_i\mathbf{W}_0) \mathbf{q}_i}{b_i(\Tilde{\lambda}_i -z_k)} \begin{bmatrix}
           1 \\
           z_k \\
           \vdots \\
           z_k^{N-L-1}
         \end{bmatrix} + \frac{\mathbf{W}_0\mathbf{q}_i}{b_i}.
\end{equation}
\end{proof}

\begin{remark}\label{remark: sep_pert balance}
    Assumption \eqref{eq:spectral radius} intricately implies a delicate balance between the $i$-th pole separation and perturbation. Two observations support this. (i) Recognizing that matrix $\mathbf{D}_i$ contains information about the separation and perturbation of the $i$-th pole as:
    \[
    (\Tilde{\lambda}_i -\lambda_k) = (\Tilde{\lambda}_i - \lambda_i) - (\lambda_k - \lambda_i).
    \] 
    (ii) Referring to \cite[Theorem 3.3, Chapter IV]{stewart1990matrix}, the following inequality holds:
    \begin{equation} 
        \max_i \min_k |\Tilde{\lambda}_i - \lambda_k| \leq \left\Vert\mathbf{T}_R^{-1}\delta (\mathbf{X}_0^{\dagger}\mathbf{X}_1)\mathbf{T}_R\right\Vert,
    \end{equation} 
    and hence the inequality:
    \begin{equation} \label{eq:spectral variation inequality}
        1 \leq \frac{\left\Vert\mathbf{T}_R^{-1}\delta (\mathbf{X}_0^{\dagger}\mathbf{X}_1)\mathbf{T}_R \right\Vert}{\min\limits_k |\Tilde{\lambda}_i - \lambda_k|}
    \end{equation} holds for all indices $1 \leq i \leq M$. 

    Equations \eqref{eq:spectral radius} and \eqref{eq:spectral variation inequality} are related through the connection between the spectral radius and spectral norm, but differ in two aspects: (i) the $i$-th coordinate of the matrix $\mathbf{D}_i$ is set to zero \footnote{This motivates the eigenvector normalization in Assumption \ref{assum:eigenvectors normalization}.}, and (ii) the matrix norm is used, providing a natural upper bound on the spectral radius.
\end{remark}

\section{Proof of Proposition \ref{prop:bounds of noise-related terms}} \label{app:bounds of noise-related terms}

\begin{proof}
Recall that: 
\begin{equation}\label{eq:noise-terms defs in proof}
\gamma_{i,m} = \frac{\mathbf{u}_m^{\mathrm{T}} \mathbf{Q}_i \mathbf{w}}{(\widetilde{\lambda}_i-z_m)b_i},\;\; \boldsymbol\xi_i = \frac{\mathbf{I}_0\mathbf{Q}_i \mathbf{w}}{b_i},
\end{equation} 
and for simplicity, denote: 
\begin{equation}\label{eq:gamma_bar}
\bar{\gamma}_{i,m} = \frac{\mathbf{u}_m^{\mathrm{T}} \mathbf{Q}_i \mathbf{w}}{b_i}.
\end{equation}
Equation \eqref{eq:noise-terms defs in proof} shows that $\bar{\gamma}_{i,m}$ and the entries of $\boldsymbol\xi_i$ are a linear combination of the complex, i.i.d Gaussian random variables $\{\mathbf{w}(n)\}_{n=1}^n$ with zero mean and variance $\sigma_w^2$. Hence, $\bar{\gamma}_{i,m}$ and the entries of $\boldsymbol\xi_i$ are also complex, Gaussian random variables with zero mean. 
Inspecting the circular structure of the matrix $\mathbf{Q}_i$, we can recast the term $\mathbf{u}_m \mathbf{Q}_i$ as the convolution of the vectors $\mathbf{q}_i$ and $\mathbf{u}_m$. Using \eqref{eq:gamma_bar}, we derive the following statistics of $\bar{\gamma}_{i,m}$: 

\begin{align}
    \mathbb{E}[\bar{\gamma}_{i,m}] &= 0,\\
    \text{Var}(\bar{\gamma}_{i,m}) &=  \frac{\left\Vert \mathbf{u}_m * \mathbf{q}_i \right\Vert_2^2}{\text{SNR}_i}. \label{eq:gamma variance}
\end{align} 

It is worth recalling that the vector $\mathbf{u}_m$ was defined in \eqref{eq:u_m def} by
\[
    \mathbf{u}_m^{\mathrm{T}} = \begin{bmatrix} 0,\; \mathbf{p}_m^{\mathrm{H}} \end{bmatrix} -z_i \begin{bmatrix} \mathbf{p}_m^{\mathrm{H}}, \;0 \end{bmatrix},
\] where $\mathbf{p}_m^{\mathrm{H}}$ and $\mathbf{q}_i$ depend solely on the signal poles and the parameters $N$ and $L$, as they are also given by the $m$-th row of $\mathbf{Z}_L^{\dagger}$ and $i$-th column of $\mathbf{Z}_R^{\dagger}$, respectively. 

Since $\bar{\gamma}_{i,m}$ is a complex Gaussian random variable, $|\bar{\gamma}_{i,m}|$ follows a Rayleigh distribution with scale parameter $\sigma_{\bar{\gamma}_{i,m}}=\text{var}(\bar{\gamma}_{i,m})$. By setting $t = \sigma_{\bar{\gamma}_{i,m}} \sqrt{2\text{log}(\frac{1}{\varepsilon})}$, we ensure that  $|\bar{\gamma}_{i,m}| \leq t$ with a probability of at least $1-\varepsilon$.
Utilizing \eqref{eq:gamma variance}, it follows that: 
\[
|\bar{\gamma}_{i,m}| \leq  \sqrt{\frac{2\text{log}(\frac{1}{\varepsilon})}{\text{SNR}_i}}\left\Vert \mathbf{u}_m *\mathbf{q}_i\right\Vert_2, 
\] with a probability of at least $1- \varepsilon$.

To derive the bound on $\gamma_{i,m}$, we use our assumption that $\left|z_i-z_m\right| \geq 2 | \delta z_i |$ for $m \neq i$, to get:
\[
|\widetilde{\lambda}_i-z_m| = |z_i + \delta z_i - z_m| \geq |z_i - z_m| -  | \delta z_i| \geq \frac{|z_i-z_m|}{2}.
\]
Finally, we derive that:
\[
|\gamma_{i,m}| \leq  2\sqrt{\frac{2\text{log}(\frac{1}{\varepsilon})}{\text{SNR}_i}} \frac{\left\Vert \mathbf{u}_m *\mathbf{q}_i\right\Vert_2}{|z_i-z_m|}, 
\] with a probability of at least $1- \varepsilon$.

The entries of $\boldsymbol\xi_i$ can be bounded similarly. Writing the $k$-th entry of $\boldsymbol\xi_i$ by $\boldsymbol\xi_i(k) = \mathbf{e}_k^{\mathrm{T}}\boldsymbol\xi_i$, where $\mathbf{e}_k$ is the $k$-th standard unit vector we get: 
\[
\boldsymbol\xi_i(k) = \frac{\mathbf{e}_k^{\mathrm{T}}\mathbf{I}_0\mathbf{Q}_i \mathbf{w}}{b_i},
\] and we derive the following statistics of $\boldsymbol\xi_i(k)$:

\begin{align}
    \mathbb{E}[\boldsymbol\xi_i(k)] &= 0,\\
    \text{Var}(\boldsymbol\xi_i(k)) &=  \frac{\left\Vert \mathbf{e}_k^{\mathrm{T}} \mathbf{I}_0 \mathbf{Q}_i \right\Vert_2^2}{\text{SNR}_i}. \label{eq:xi variance}
\end{align} 

Re-casting $\mathbf{e}_k^{\mathrm{T}} \mathbf{I}_0 \mathbf{Q}_i$ as a convolution of two vectors and using Young's convolution inequality, we achieve the following bound:
\begin{equation}
    \text{Var}(\boldsymbol\xi_i(k)) \leq \frac{1}{\text{SNR}_i}.
\end{equation}
It follows that: 
\[
\left| \boldsymbol\xi_i(k) \right| \leq  \sqrt{\frac{2\text{log}(\frac{1}{\varepsilon})}{\text{SNR}_i}}, 
\] with a probability of at least $1- \varepsilon$.

\end{proof}

\section{Local perspectives for signal modes detection}\label{app: Local perspectives for signal modes detection}
To exploit the obscure multiplicative Vandermonde structure in \eqref{eq:signal_mode}, we could examine the following ratio between consecutive entries of $\widetilde{\mathbf{Z}}_L^i$:
\begin{equation}\label{eq:ratios of cosecutive rows in signal mode}
    \hat{\mathbf{z}}_i(k) := \frac{ \widetilde{\mathbf{Z}}_L^i(k+1)}{ \widetilde{\mathbf{Z}}_L^i(k)}= \frac{z_i^{k} + \mathbf{E}_L^i(k+1)}{z_i^{k-1} + \mathbf{E}_L^i(k)},
\end{equation} for $1 \leq k \leq N-L-1$.

Informally, when $|\mathbf{E}_L^i(k)|$ and $|\mathbf{E}_L^i(k+1)|$ are sufficiently small, the ratio $\hat{\mathbf{z}}_i(k)$ in \eqref{eq:ratios of cosecutive rows in signal mode} can be approximated by:
\begin{equation}
\begin{aligned}\label{eq:approx ratios of consecutive rows in signal mode}
    \hat{\mathbf{z}}_i(k) &= \frac{z_i^{k} + \mathbf{E}_L^i(k+1)}{z_i^{k-1} + \mathbf{E}_L^i(k)} 
    = z_i \left(\frac{ 1 + \frac{ \mathbf{E}_L^i(k+1)}{z_i^{k}} } { 1 + \frac{\mathbf{E}_L^i(k)}{z_i^{k-1}} }\right)\\
    &\approx z_i \left(1 + \frac{\mathbf{E}_L^i(k+1)}{z_i^{k}} -\frac{\mathbf{E}_L^i(k)}{z_i^{k-1}}\right)
\end{aligned} 
\end{equation} using standard linear approximation techniques.

Equation \eqref{eq:approx ratios of consecutive rows in signal mode} shows that the ratios between consecutive entries of the $i$-th signal mode are centered around a \emph{constant} value, which is the $i$-th signal pole $z_i$. 
In the context of dynamic mode decomposition (DMD) for data-driven dynamical systems analysis, Bronstein et al. \cite{bronstein2022spatiotemporal} proposed to use the features: $\epsilon_i = |\widetilde{\lambda}_i-\mu_{\hat{\mathbf{z}}_i}|$, which can now be supported by the first-order approximation of $\widetilde{\lambda}_i$, given by \cite{hua1990matrix}:
\[
\widetilde{\lambda}_i \cong  z_i +\frac{\mathbf{u}_i \mathbf{Q}_i \mathbf{w} }{b_i}.
\]

\begin{figure}[t]
    \centering
    \begin{subfigure}[b]{0.49\columnwidth}
        \centering
        \includegraphics[width=\linewidth]{New_Figures/legend_fig_noCRB.png}
    \end{subfigure}
    \begin{subfigure}[b]{0.49\columnwidth}
        \centering
        \includegraphics[width=\linewidth]{New_Figures/legend_fig_noCRB.png}
    \end{subfigure}
    
    \begin{subfigure}[b]{0.49\columnwidth}
        \centering
        \includegraphics[width=\linewidth]{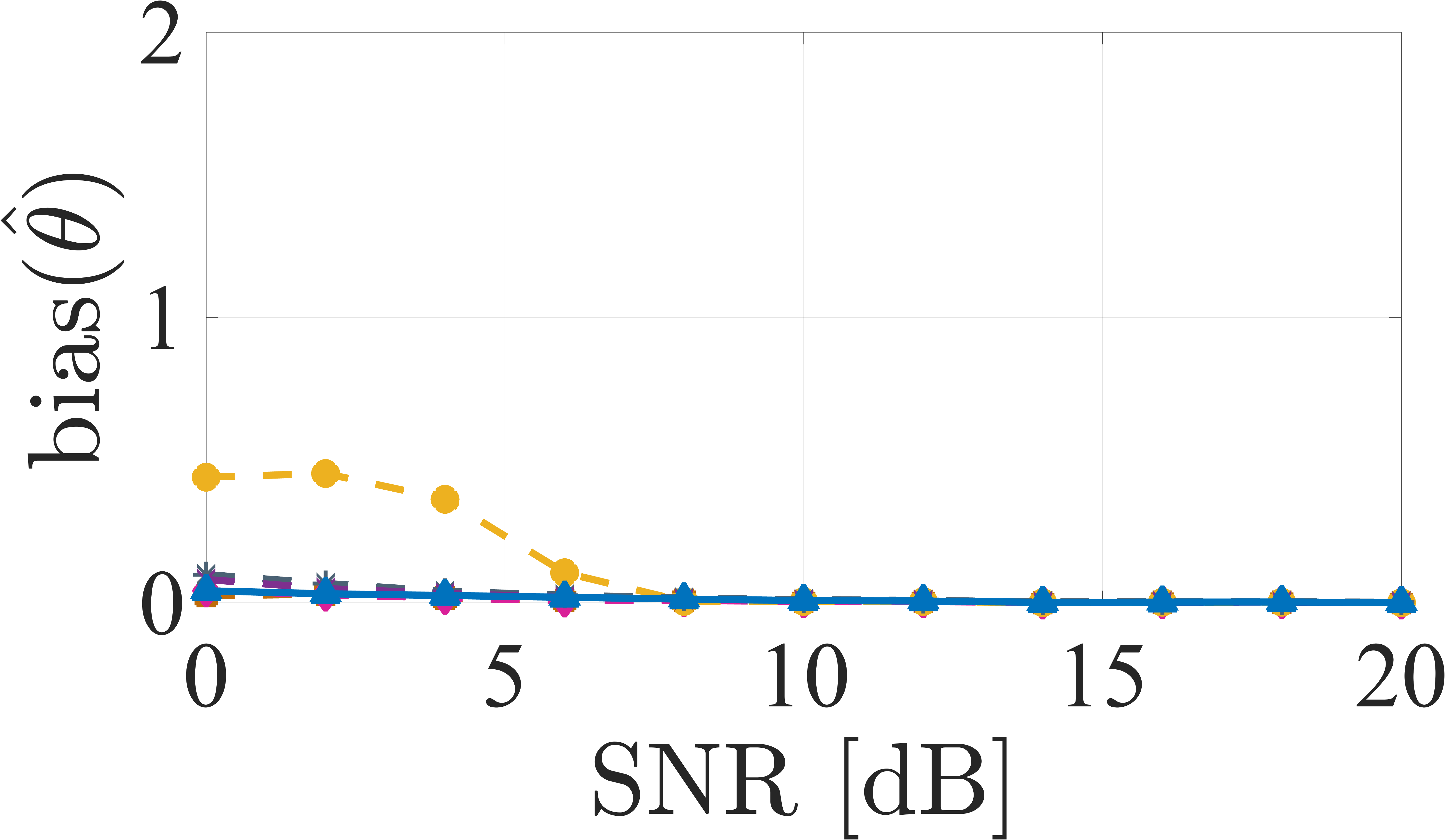}
        \caption{Undamped}
        \label{fig:2_freqs_BIAS_undamped_SNR}
    \end{subfigure}
    \begin{subfigure}[b]{0.49\columnwidth}
        \centering
        \includegraphics[width=\linewidth] {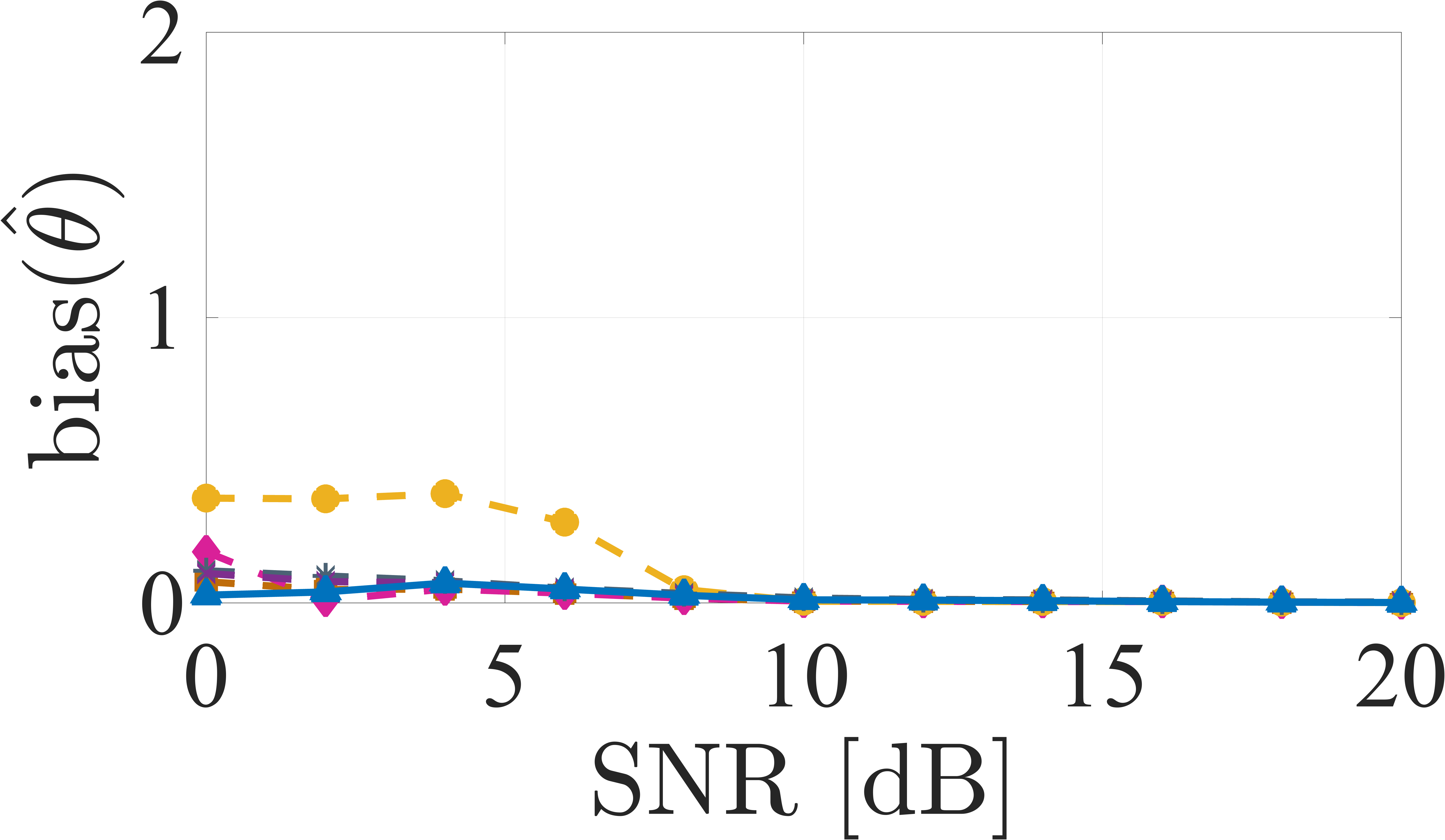}
        \caption{Damped}
        \label{fig:2_freqs_BIAS_damped_SNR}
    \end{subfigure}

    \hfill
\begin{subfigure}[b]{0.49\columnwidth}
        \centering
        \includegraphics[width=\linewidth]{New_Figures/legend_fig_noCRB.png}
    \end{subfigure}
    \begin{subfigure}[b]{0.49\columnwidth}
        \centering
        \includegraphics[width=\linewidth]{New_Figures/legend_fig_noCRB.png}
    \end{subfigure}
    
    \begin{subfigure}[b]{0.49\columnwidth}
        \centering
        \includegraphics[width=\linewidth]{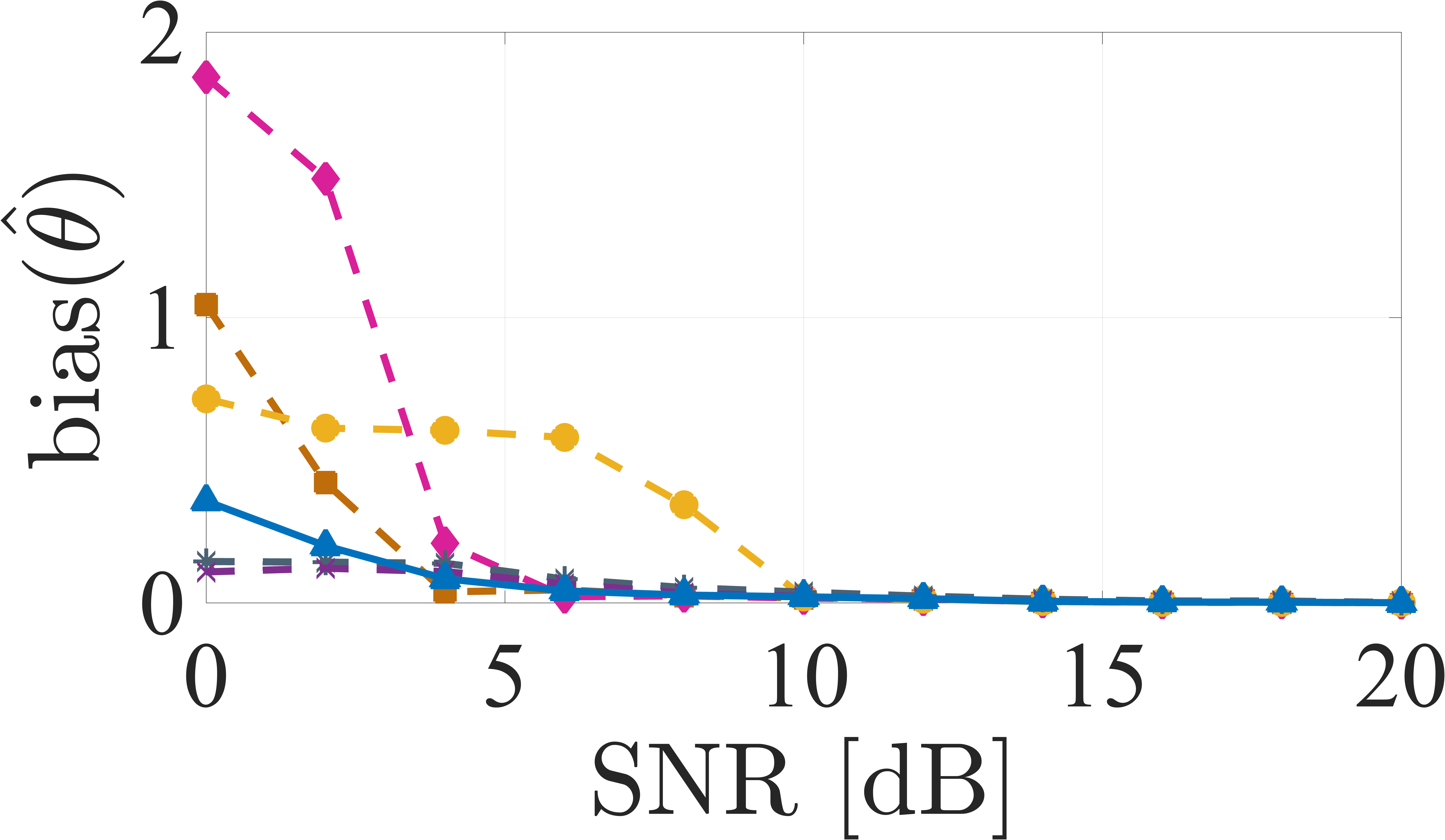}
        \caption{Undamped}
        \label{fig:4_freqs_BIAS_undamped_SNR}
    \end{subfigure}
    \begin{subfigure}[b]{0.49\columnwidth}
        \centering
        \includegraphics[width=\linewidth] {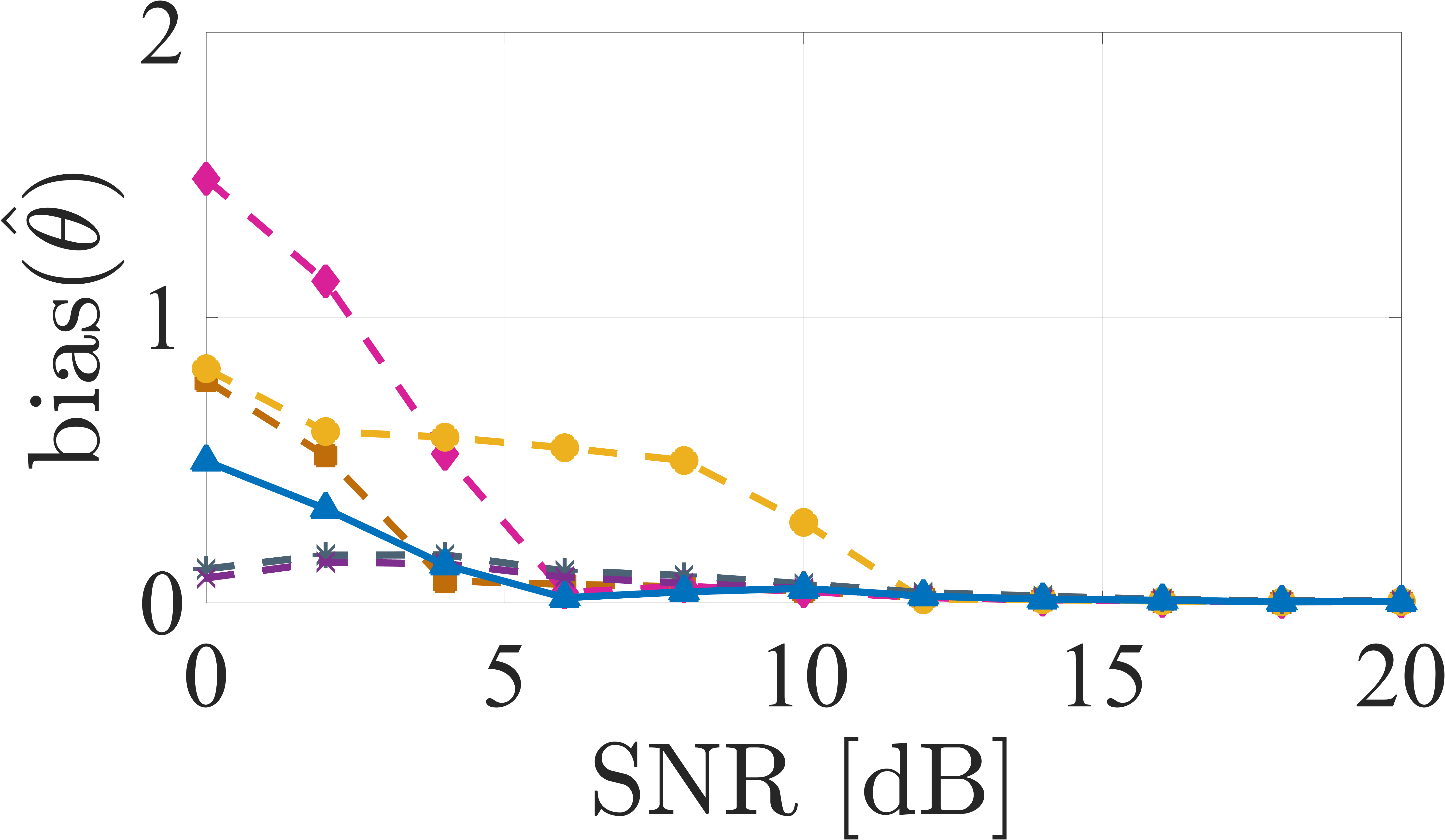}
        \caption{Damped}
        \label{fig:4_freqs_BIAS_damped_SNR}
    \end{subfigure}
    \hfill
    \caption{Average bias in absolute value of $\{ \hat \theta_i \}_{i=1}^M$ versus the $\text{SNR}_{\text{dB}}$, for $M=2$ in (a)-(b), and $M=4$ in (c)-(d).}
    \label{fig: freqs_SNR_BIAS}
\end{figure}

\begin{figure}[t]
    \centering
    \begin{subfigure}[b]{0.49\columnwidth}
        \centering
        \includegraphics[width=\linewidth]{New_Figures/legend_fig.png}
    \end{subfigure}
    \begin{subfigure}[b]{0.49\columnwidth}
        \centering
        \includegraphics[width=\linewidth]{New_Figures/legend_fig.png}
    \end{subfigure}
    
    \begin{subfigure}[b]{0.49\columnwidth}
        \centering
        \includegraphics[width=\linewidth]{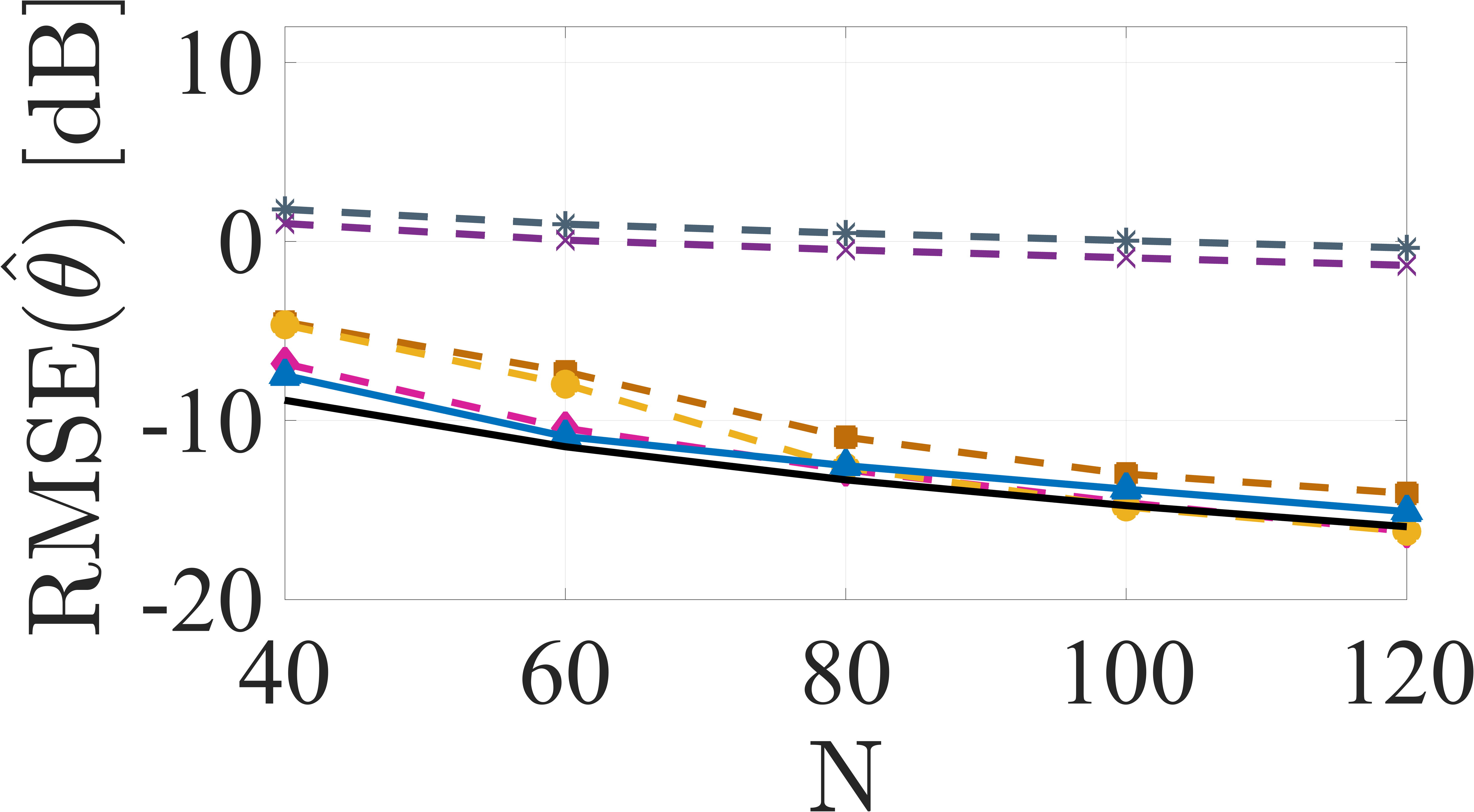}
        \caption{Undamped}
        \label{fig:2_freqs_RMSE_undamped_SAMPLES}
    \end{subfigure}
    \begin{subfigure}[b]{0.49\columnwidth}
        \centering
        \includegraphics[width=\linewidth] {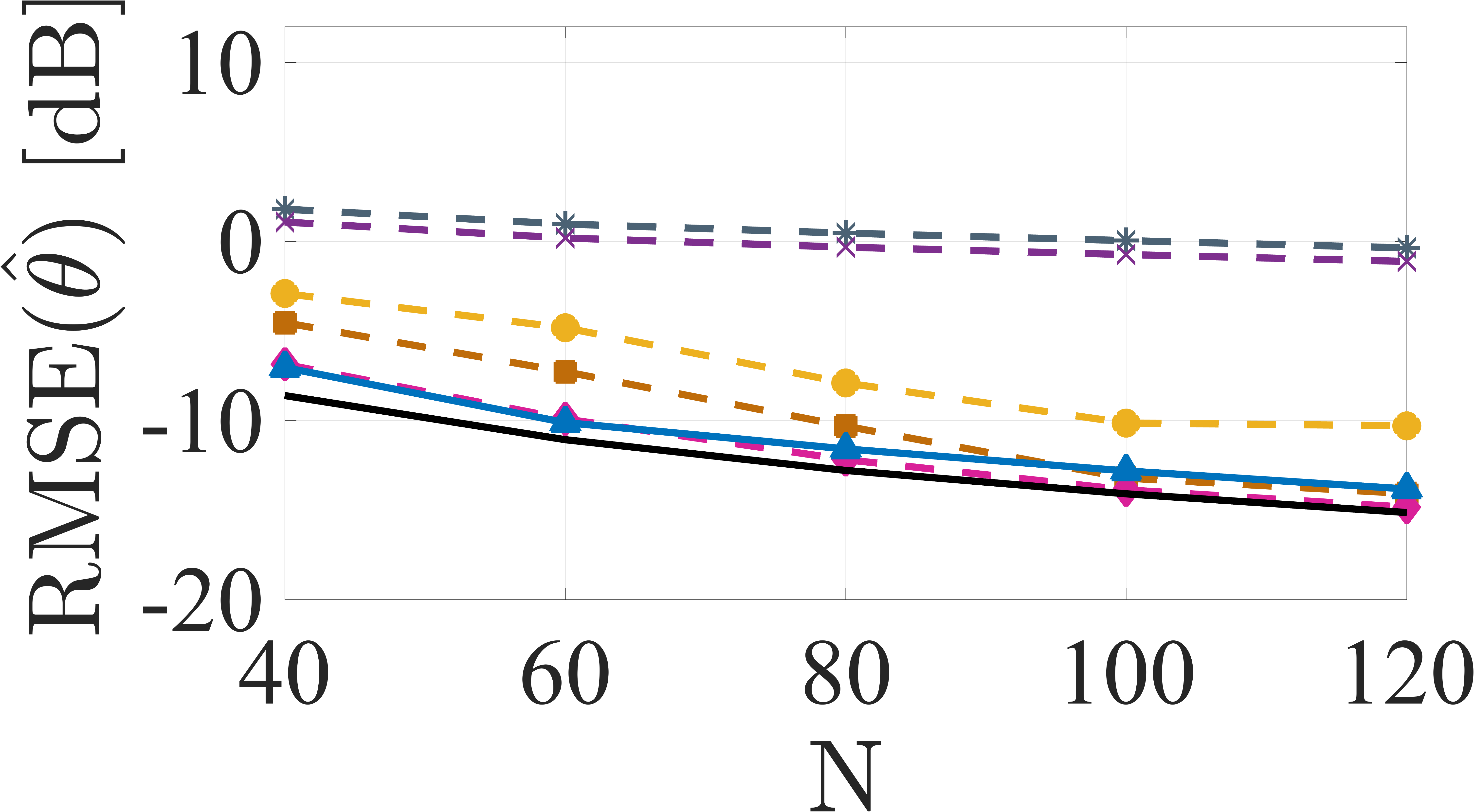}
        \caption{Damped}
        \label{fig:2_freqs_RMSE_damped_SAMPLES}
    \end{subfigure}

    \hfill
\begin{subfigure}[b]{0.49\columnwidth}
        \centering
        \includegraphics[width=\linewidth]{New_Figures/legend_fig_noCRB.png}
    \end{subfigure}
    \begin{subfigure}[b]{0.49\columnwidth}
        \centering
        \includegraphics[width=\linewidth]{New_Figures/legend_fig_noCRB.png}
    \end{subfigure}
    
    \begin{subfigure}[b]{0.49\columnwidth}
        \centering
        \includegraphics[width=\linewidth]{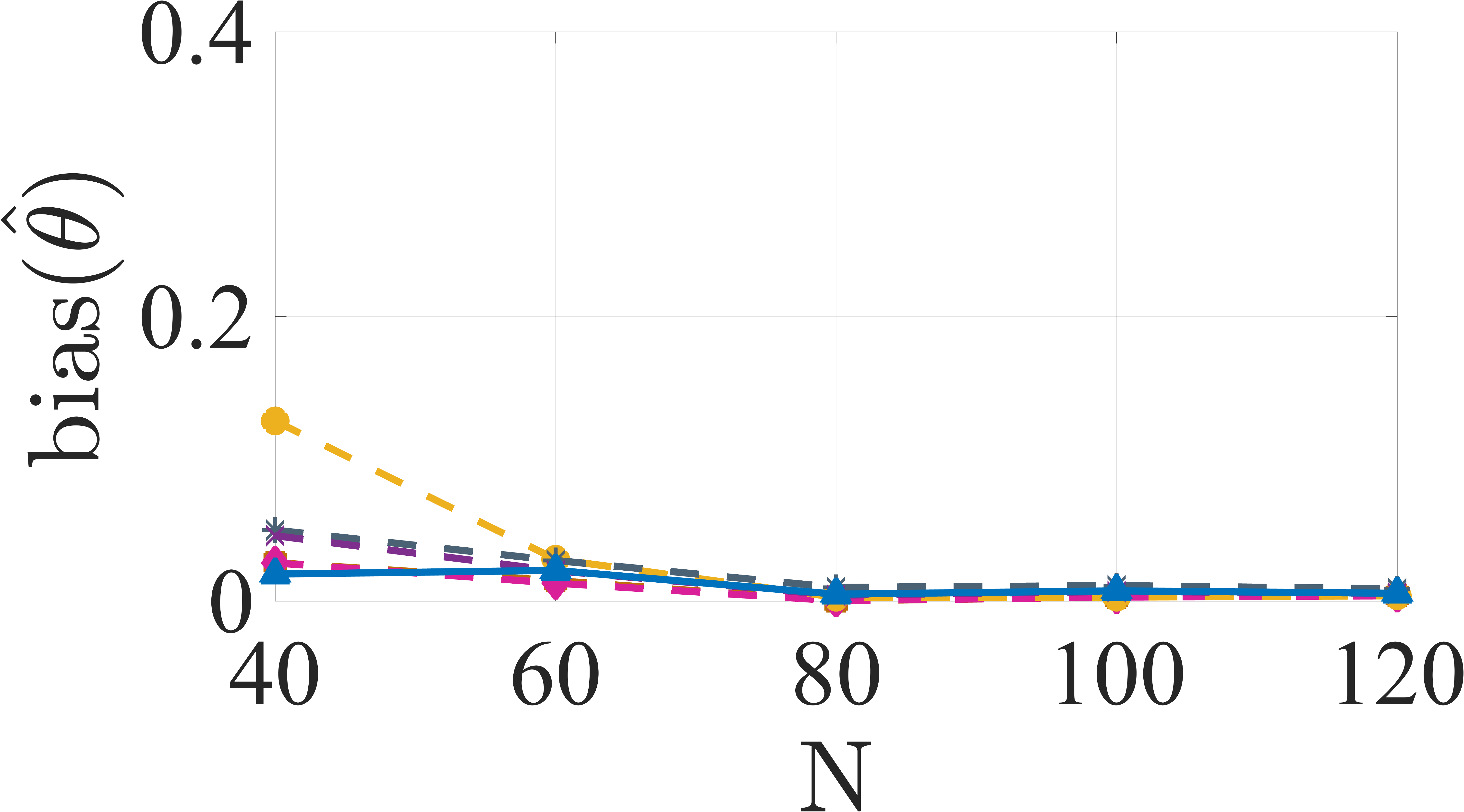}
        \caption{Undamped}
        \label{fig:4_freqs_RMSE_undamped_SAMPLES}
    \end{subfigure}
    \begin{subfigure}[b]{0.49\columnwidth}
        \centering
        \includegraphics[width=\linewidth] {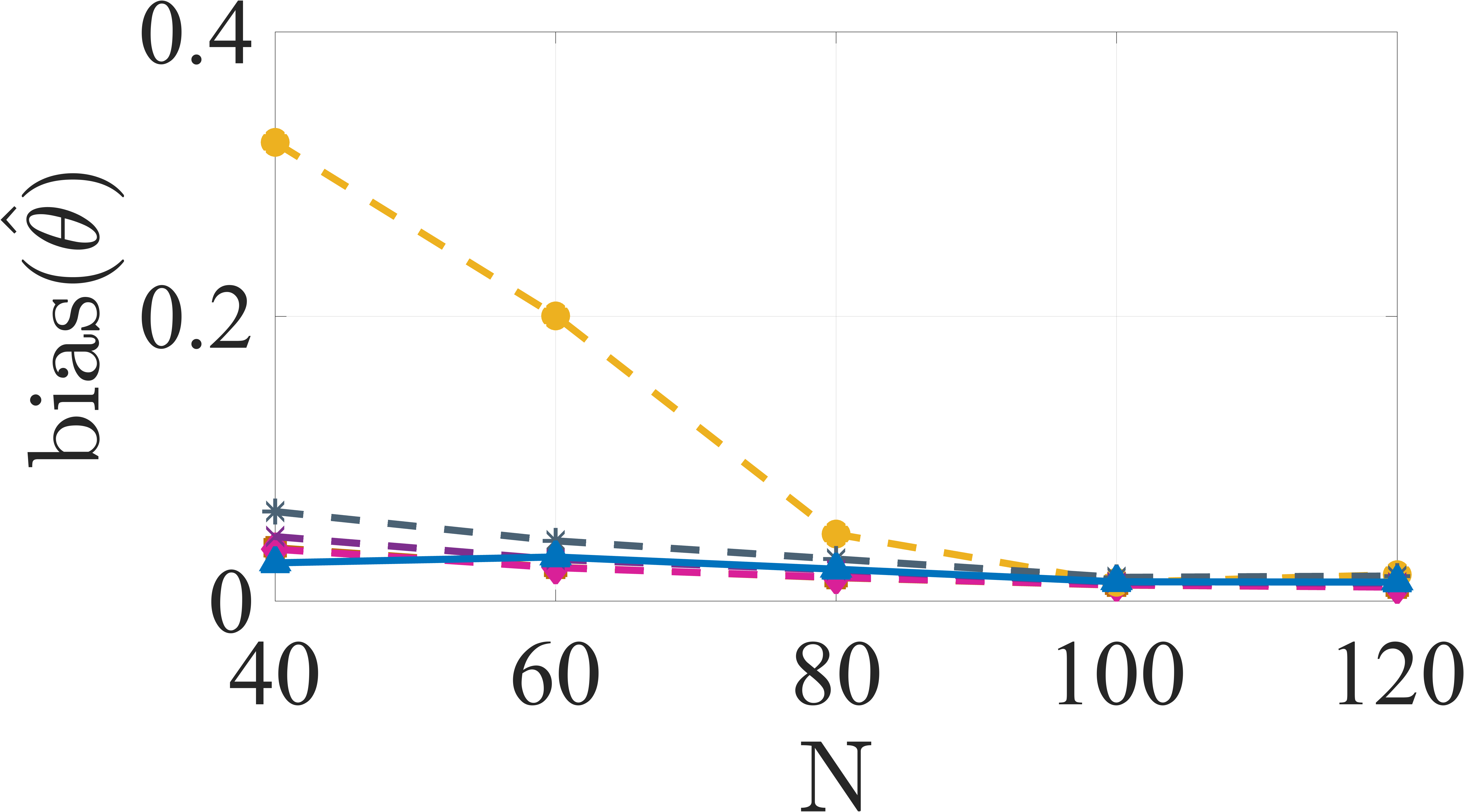}
        \caption{Damped}
        \label{fig:4_freqs_RMSE_damped_SAMPLES}
    \end{subfigure}
    \hfill
    \caption{Average RMSE and bias in absolute value of $\{\hat \theta_i \}_{i=1}^M$ versus the number of samples.}
    \label{fig: freqs_SAMPLES_RMSE_BIAS}
\end{figure}

\begin{figure}[t]
    \centering
    \begin{subfigure}[b]{0.49\columnwidth}
        \centering
        \includegraphics[width=\linewidth]{New_Figures/legend_fig.png}
    \end{subfigure}
    \begin{subfigure}[b]{0.49\columnwidth}
        \centering
        \includegraphics[width=\linewidth]{New_Figures/legend_fig.png}
    \end{subfigure}
    
    \begin{subfigure}[b]{0.49\columnwidth}
        \centering
        \includegraphics[width=\linewidth]{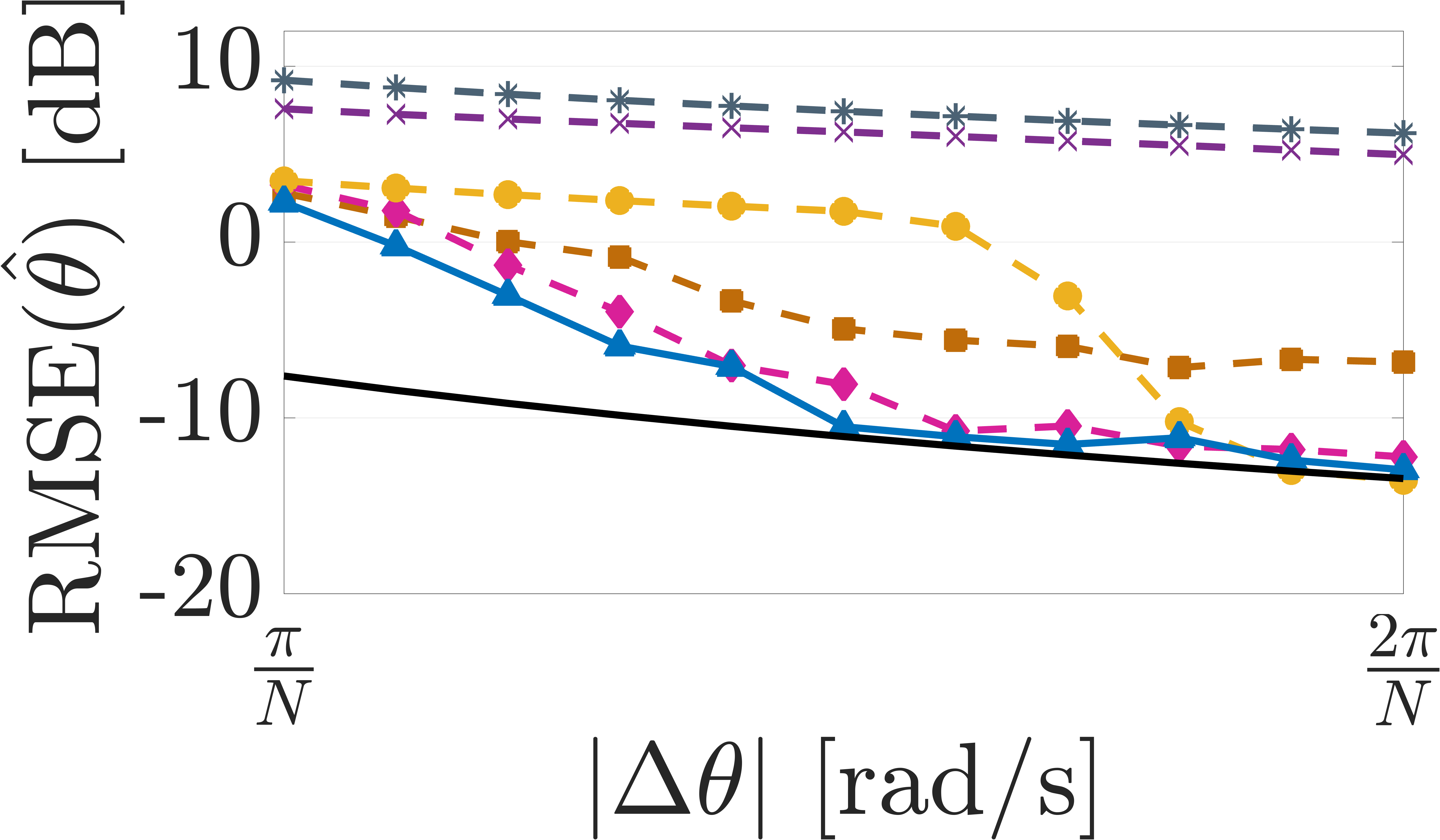}
        \caption{Undamped}
        \label{fig:2_freqs_RMSE_undamped_DIFF}
    \end{subfigure}
    \begin{subfigure}[b]{0.49\columnwidth}
        \centering
        \includegraphics[width=\linewidth] {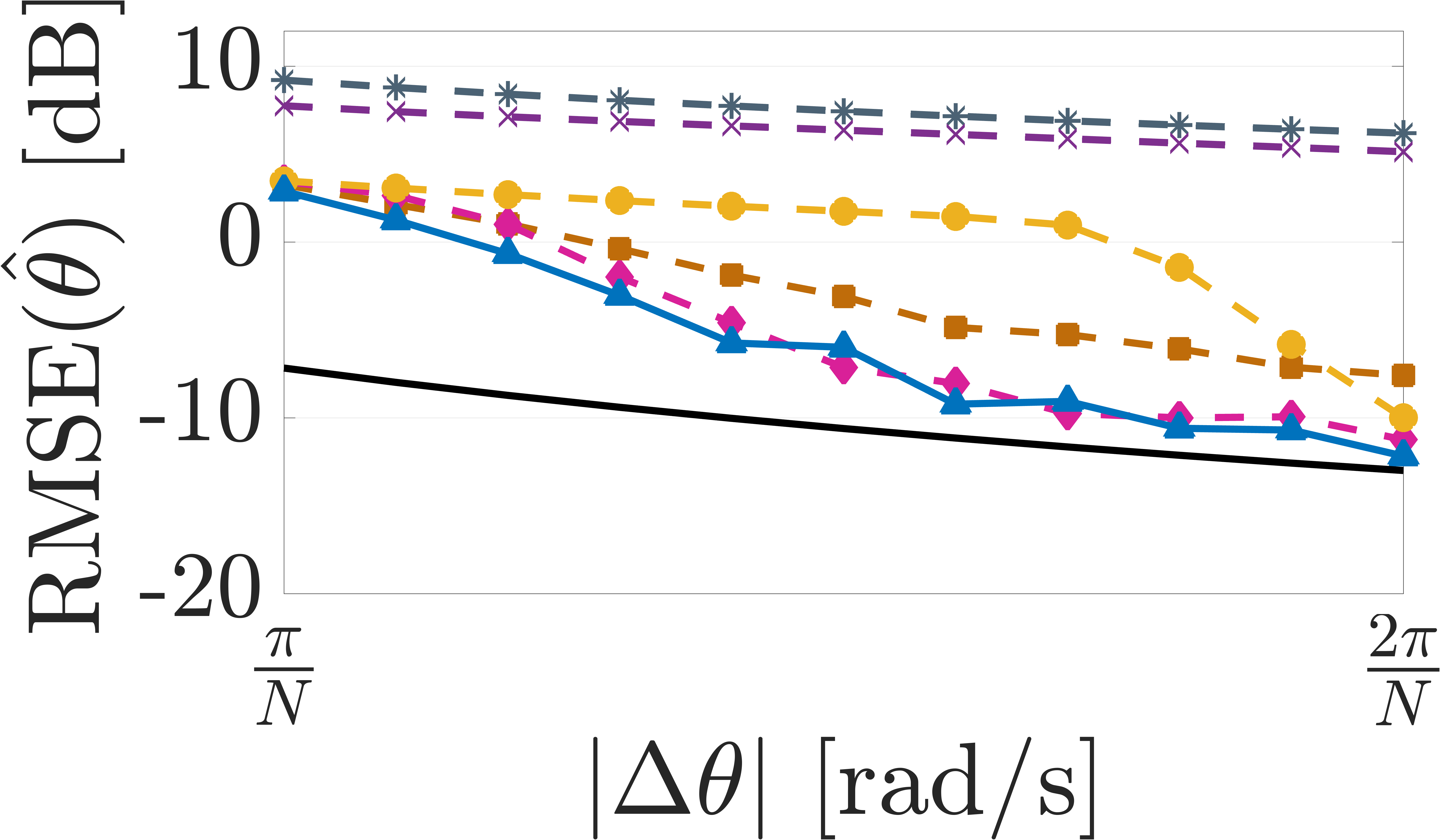}
        \caption{Damped}
        \label{fig:2_freqs_RMSE_damped_DIFF}
    \end{subfigure}

    \hfill
\begin{subfigure}[b]{0.49\columnwidth}
        \centering
        \includegraphics[width=\linewidth]{New_Figures/legend_fig_noCRB.png}
    \end{subfigure}
    \begin{subfigure}[b]{0.49\columnwidth}
        \centering
        \includegraphics[width=\linewidth]{New_Figures/legend_fig_noCRB.png}
    \end{subfigure}
    
    \begin{subfigure}[b]{0.49\columnwidth}
        \centering
        \includegraphics[width=\linewidth]{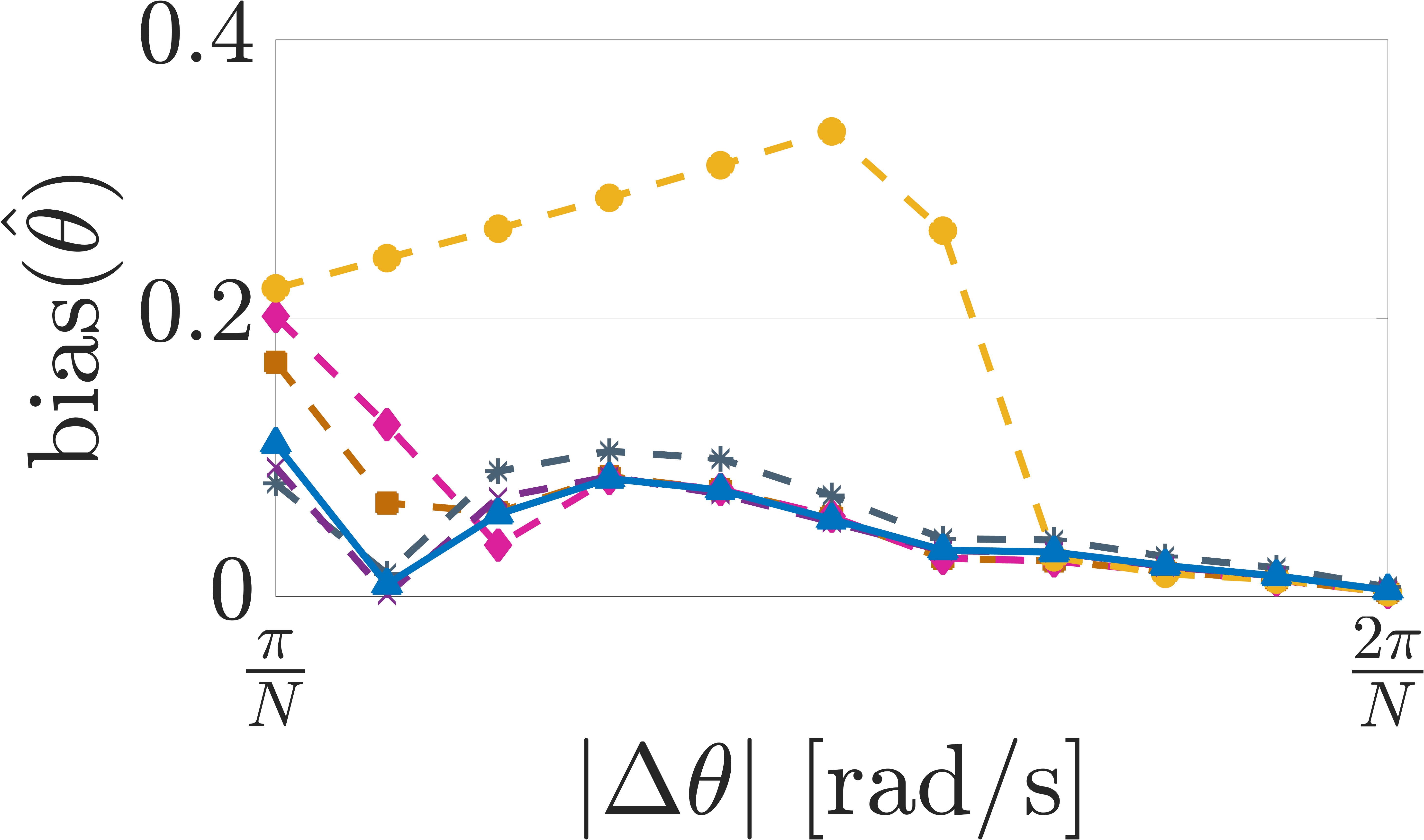}
        \caption{Undamped}
        \label{fig:4_freqs_RMSE_undamped_DIFF}
    \end{subfigure}
    \begin{subfigure}[b]{0.49\columnwidth}
        \centering
        \includegraphics[width=\linewidth] {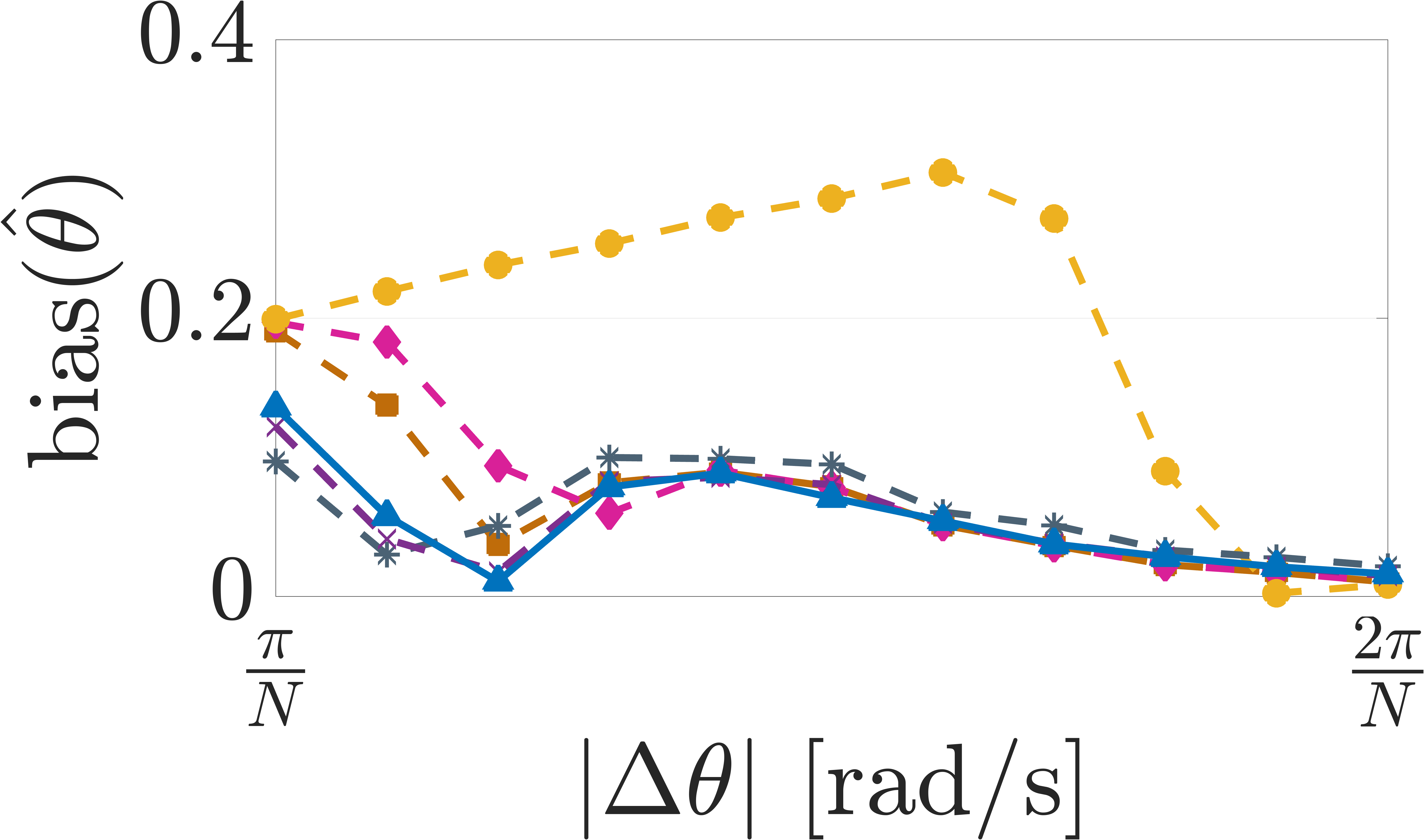}
        \caption{Damped}
        \label{fig:4_freqs_RMSE_damped_DIFF}
    \end{subfigure}
    \hfill
    \caption{Average RMSE and bias in absolute value of $\{ \hat\theta_i \}_{i=1}^M$ versus the frequency separation.}
    \label{fig: freqs_DIFF_RMSE_BIAS}
\end{figure}

\section{Additional Numerical Results}
\label{app: Additional Numerical Results}
{In this appendix, we provide the complementary results for Section \ref{sec:Results}. 
%
%
Fig. \ref{fig: freqs_SNR_BIAS} is complementary for Fig. \ref{fig: freqs_SNR_RMSE}, and displays the average $\text{RMSE}$ of the estimates $\{\hat{\theta}_i\}_{i=1}^M$ in dB versus the SNR.
Fig. \ref{fig: freqs_SAMPLES_RMSE_BIAS} is the same as Fig. \ref{fig: freqs_SNR_RMSE}, but displays the average $\text{RMSE}$ of the estimates $\{\hat{\theta}_i\}_{i=1}^M$ and corresponding empirical bias in absolute value versus the number of samples $N$, for $M=2$, $|\Delta \theta|=2\pi/N$, and $\text{SNR}_{\text{dB}} = 8$. The complementary results for $M=4$ shows high bias for the entire simulated range and are omitted.
Fig. \ref{fig: freqs_DIFF_RMSE_BIAS} is the same as Fig. \ref{fig: freqs_SNR_RMSE}, but displays the average $\text{RMSE}$ of the estimates $\{\hat{\theta}_i\}_{i=1}^M$ and corresponding empirical bias in absolute value versus the frequencies separation, for $M=2$, $N=71$, and $\text{SNR}_{\text{dB}} = 10$. The complementary results for $M=4$ shows high bias for the entire simulated range and are omitted.}

\printbibliography
\end{refsection}
}

\end{document}